\xpatchcmd{\proof}{\itshape}{\bfseries}{}{}
\newenvironment{myproof}{
  \par\noindent\textbf{Proof.}\hspace{0.5em}
}{
  \hfill$\square$\par
}
\newcommand{\Leb}[1]{\mathit{L}_{#1}}
\newcommand{\Rel}{\mathbb{R}}
\newcommand{\N}{\mathbb{N}}
\newcommand{\Exp}{\mathbb{E}}
\newcommand{\xl}{\mathcal{X}_l}
\newcommand{\dl}{\mathcal{D}_l}
\newcommand{\BB}{B}
\newcommand{\Z}{\mathbf{Z}}
\newcommand{\Zh}{\mathbf{\hat{Z}}}
\newcommand\norm[1]{\left\lVert#1\right\rVert} %norm
\def\argmin{\mathop{\rm arg\,min}}
\newcommand\numberthis{\addtocounter{equation}{1}\tag{\theequation}}
\newcommand{\order}[1]{$\mathcal{O}\left(#1\right)$}
\DeclareMathOperator{\srt}{SR}
\newcommand{\sr}[3]{\srt_{#1,#2}(#3)}
\newcommand{\ubsr}{\srt_{l,\lambda}(X)}
\newcommand{\srth}[1]{\srt(\theta_{#1})}
\newcommand{\srm}[1]{\srt^{m}(#1)}
\newcommand{\srmk}[2]{\srt^{m_{#1}}(#2)}
\newcommand{\J}{J^{m_1,m_2}}
\newcommand{\jt}{\J(\mathbf{\hat{z}}_\theta,\mathbf{z}_\theta,\mathbf{v}_\theta)}
\newcommand{\Jt}{\J(\mathbf{\hat{Z}}_\theta,\mathbf{Z}_\theta,\mathbf{V}_\theta)}
\newcommand{\A}{\Exp \left[ l'(-F(\theta,\xi)-\srth{}) \nabla F(\theta,\xi) \right]}
\newcommand{\B}{\Exp \left[ l'(-F(\theta,\xi)-\srth{}) \right]}
\newcommand{\Ap}{-\frac{1}{m_2}\sum_{j=1}^{m_2} \Big[ l'(-\mathrm{Z}^j_\theta-\srth{}) \mathbf{V}^j_\theta\Big]}
\newcommand{\Bp}{\frac{1}{m_2}\sum_{j=1}^{m_2} \big[ l'(-\mathrm{Z}^j_\theta-\srth{}) \big]}
\newcommand{\Am}{-\frac{1}{m_2}\sum_{j=1}^{m_2} \Big[ l'(-\mathrm{Z}^j_\theta-\srmk{1}{\mathbf{\hat{Z}_\theta}}) \mathbf{V}^j_\theta\Big]}
\newcommand{\Bm}{\frac{1}{m_2}\sum_{j=1}^{m_2} \big[ l'(-\mathrm{Z}^j_\theta-\srmk{1}{\mathbf{\hat{Z}_\theta}}) \big]}
\DeclareMathSymbol{\shortminus}{\mathbin}{AMSa}{"39}
\newtheorem{theorem}{Theorem}
\newtheorem{assumption}{Assumption}
\newtheorem{lemmma}[theorem]{Lemma}
\newtheorem{definition}[theorem]{Definition}
\newtheorem{proposition}[theorem]{Proposition}
\newtheorem{remark}[theorem]{Remark}
\crefname{enumi}{assumption}{assumptions}
\crefname{subsection}{section}{sections}
\crefname{lemma}{lemma}{lemmas}
\crefname{figure}{figure}{figures}
\crefname{assumption}{assumption}{assumptions}
\crefname{lemmma}{lemma}{lemmas}
\crefname{corollary}{corollary}{corollaries}
\begin{document}

%%=============================================================%%
%% Title & Running Header                                      %%
%%=============================================================%%
\title{Gradient-based Stochastic Optimization of Utility-based Shortfall Risk}

%%=============================================================%%
%% Authors and Affiliations (Modern Springer Format)          %%
%%=============================================================%%

\author[1]{Sumedh Gupte\thanks{\texttt{sumedh.gupte@tcs.com}}}
\author[2]{Prashanth L. A.\thanks{\texttt{prashla@cse.iitm.ac.in}}}
\author[1]{Sanjay P. Bhat\thanks{\texttt{sanjay.bhat@tcs.com}}}

\affil[1]{TCS Research, IIIT-H Research Park, Hyderabad, 500032, India}
\affil[2]{Department of Computer Science and Engineering, Indian Institute of Technology Madras, Chennai, 600036, India}
\date{\today} % Leaves the date blank, or use \date{\today}
\maketitle
%%=============================================================%%
%% Abstract & Keywords (Moved inside preamble layout)          %%
%%=============================================================%%
\abstract{
We consider the problems of estimation and optimization of utility-based shortfall risk (UBSR). We extend UBSR to cover possibly unbounded random variables. We cover prominent risk measures such as entropic risk, expectile risk, Value-at-Risk, and quadratic risk as special cases of the UBSR. In the context of estimation, we derive non-asymptotic bounds on the mean absolute error (MAE) and the mean-squared error (MSE) of the classical sample-average approximation (SAA) estimator for the UBSR. In the context of optimization, we derive an expression for the gradient of UBSR under a smooth parameterization. We propose a gradient estimator for the UBSR and derive non-asymptotic bounds on MAE and MSE for this estimator. We incorporate the aforementioned gradient estimator into a stochastic gradient (SG) optimization algorithm and derive non-asymptotic bounds on the convergence rate of our SG algorithm for optimizing UBSR under three objectives, namely, strongly convex, convex and non-convex. Finally, we conduct experiments on financial applications to demonstrate the performance of our proposed UBSR estimation and optimization algorithms.
}

%%=============================================================%%
%% Main Text                                                   %%
%%=============================================================%%

\section{Introduction}
\label{sec:intro}
Optimizing risk is important in several application domains, e.g., finance, transportation, healthcare, robotics to name a few. Financial applications rely heavily on efficient risk assessment techniques and employ a multitude of risk measures for risk estimation. Risk optimization involves risk estimation as a sub-procedure for solving decision-making problems in finance. Value-at-Risk (VaR)~\cite{jorion1997value,basak-shapiro-var}, and Conditional Value-at-Risk (CVaR)~\cite{Uryasev2001,rockafellar2000optimization} are two popular risk measures. The risk measure VaR, which is a quantile of the underlying distribution, is not the preferred choice because it is not sub-additive \cite{follmer_axiomatic_2015}. In a financial context, the sub-additivity property implies that diversification does not increase risk. CVaR as a risk measure satisfies the sub-additivity property and falls in the category of coherent risk measures~\cite{ACERBI20021487}. However, CVaR is not desirable as a risk measure because it is not invariant under randomization, and it is not sensitive to heavy-tailed losses \cite{giesecke-risk-large-losses}. Furthermore, previous works (cf. \cite{follmer_axiomatic_2015}) in the literature have questioned the relevance of the positive homogeneity property of coherent risk measures, from a financial application viewpoint. More precisely, in finance parlance, an acceptable financial position may not necessarily be acceptable after scaling by any arbitrary factor. 

A class of risk measures that subsumes coherency, and does not enforce positive homogeneity, is convex risk measures~\cite{FollmerSchied2004}. A prominent family of convex risk measures is the utility-based shortfall risk (UBSR). 
UBSR, introduced by \cite{FollmerSchied2004}, is a law-invariant \cite{Kusuoka2001}, convex risk measure that has a gained prominence lately \cite{weber-2006-distribution-invariant,dunkel_efficient_2007,dunkel2010stochastic,follmer2016stochastic,zhaolin2016ubsrest,Hegde2024,guo_distributionally_2019,gupte2023optimization}. More precisely, UBSR emerges as one among many of the families of convex risk measures that are induced by the robust Savage representation \cite[Theorem 2.78]{follmer2016stochastic}. UBSR is the only law-invariant, convex risk measure that is elicitable \cite{Bellini_2015}. It has a few advantages over the popular CVaR risk measure, namely (i) UBSR is invariant under randomization, while CVaR is not, see \cite{dunkel2010stochastic}; (ii) unlike CVaR, which only considers the values that the underlying random variable takes beyond VaR, the loss function in UBSR can be chosen to encode the risk preference for each value that the underlying random variable takes. Thus, in both risk estimation and optimization, UBSR is a more desirable alternative to the industry-standard risk measures, namely VaR and CVaR. 
\paragraph{Our contributions.}
In this paper, we consider the problems of estimation and optimization of the UBSR risk measure. 
First, we extend the UBSR measure to cover unbounded random variables that satisfy certain integrability requirements, and establish conditions under which the UBSR is a convex risk measure. Our results are stated under notably weaker assumptions on the risk parameters, namely, the loss function $l$ and the risk threshold $\lambda$. %This weakening of assumptions allows our estimation procedures to include VaR and CVaR as special cases of the UBSR and OCE respectively.
 Second, for a sample average approximation (SAA) of UBSR, which is proposed earlier in the literature, we present a novel proof under a variance assumption to obtain a mean absolute error (MAE) bound and a mean-squared error (MSE) bound of the order $\mathcal{O}(1/\sqrt{m})$ and $\mathcal{O}(1/m)$ respectively. Here, $m$ denotes the number of independent and identically distributed (i.i.d.) samples of the underlying distribution used to form the estimates.
    %\item For a SAA estimator of OCE, we obtain MAE and MSE bounds of the order \order{1/\sqrt{m}} each, for a choice of Lipschitz utility function. For the non-Lipschitz case, we obtain MAE and MSE bounds of the order $\mathcal{O}(1/m^{1/4})$ and $\mathcal{O}(1/\sqrt{m})$ respectively. These bounds are obtained under a fairly general setting, without assuming that the utility function is strongly convex or smooth. Using a new proof technique and under some mild assumptions, we obtain an MAE bound of order $\mathcal{O}(1/\sqrt{m})$. 
  Third, for the UBSR optimization problem with a vector parameter, we derive an expression for the UBSR gradient. Using these expressions, we propose an $m$-sample gradient estimator for UBSR and establish MAE and MSE bounds of \order{1/\sqrt{m}} and $\mathcal{O}(1/m)$, respectively, for the estimation error.
   Finally, we design a stochastic gradient (SG) algorithm for the aforementioned gradient estimator. For the proposed SG algorithm, we derive a non-asymptotic bound of $\mathcal{O}(1/n)$ under a strong convexity assumption on the risk objective, where $n$ denotes the number of iterations of the SG algorithm. For the convex and non-convex cases, we derive non-asymptotic bounds of $\mathcal{O}(1/\sqrt{n})$.

\paragraph{Related work. }Convex risk measures have been extensively analyzed, under the assumption that the random variables (r.v.) are bounded \cite{artzner-coherent-risk-measures,FollmerSchied2004}. The authors in \cite{kaina_convex_2009} were the first to investigate convex risk measures for the case of unbounded random variables, with a focus on continuity and representational properties. The analysis by \cite{follmer_axiomatic_2015}, on the other hand, covered other aspects of convex risk measures, such as elicitability and robustness, and extended it to unbounded random variables. The aforementioned works cover unbounded random variables and study theoretical properties of convex risk measures.
In contrast to the aforementioned works, we focus on estimating and optimizing the UBSR within a stochastic optimization framework. In other words, our emphasis is on the computational aspects of UBSR. Specifically, we propose algorithms for estimating and optimizing UBSR and present associated non-asymptotic analyses, i.e., error bounds on UBSR estimation and convergence rates for UBSR optimization.
Previous works on UBSR estimation and optimization consider the case of bounded r.v. and presume that the extension to the unbounded r.v. case is undemanding \cite{dunkel2010stochastic,zhaolin2016ubsrest}. Therefore, a rigorous extension to the unbounded r.v. is missing in the literature, and our work addresses this gap. Our results show that this extension is non-trivial.

The authors in \cite{FollmerSchied2004} introduced the UBSR risk measure for bounded random variables. Existing works in the literature have analyzed UBSR and its properties, and proposed two schemes for UBSR estimation, namely stochastic approximation (SA) and sample average approximation (SAA) (see \cite{dunkel_efficient_2007,dunkel2010stochastic,giesecke-risk-large-losses,weber-2006-distribution-invariant,zhaolin2016ubsrest} for the details). %In real-world financial markets, the financial positions continuously evolve over time, and so must their risk estimates. In~\cite{weber-2006-distribution-invariant}, the authors showed that UBSR can be used for dynamic evaluation of such financial positions. 
Following the SA approach, \cite{dunkel2010stochastic} proposed estimators based on a stochastic root finding procedure and provided asymptotic convergence guarantees. Reference \cite{zhaolin2016ubsrest} used an SAA procedure for UBSR estimation and established asymptotic convergence guarantees on the estimator. They proposed an estimator for the UBSR derivative that can be used for risk optimization under a scalar decision parameter. They show that this estimator of the UBSR derivative is asymptotically unbiased. Reference \cite{Hegde2024} gave a non-asymptotic analysis for the scalar UBSR optimization while employing a stochastic root-finding technique for UBSR estimation. In comparison to these works, we would like to note the following aspects.
\begin{enumerate}
    \item Unlike \cite{dunkel2010stochastic,zhaolin2016ubsrest}, we provide non-asymptotic bounds on the mean absolute error and the mean squared error of the UBSR estimate from a procedure that is computationally efficient.
    \item We consider UBSR optimization for a vector parameter, while earlier works (cf. \cite{zhaolin2016ubsrest,Hegde2024}) consider the scalar case.
    \item We analyze a SG-based algorithm in the non-asymptotic regime for UBSR optimization, while \cite{zhaolin2016ubsrest} provide an asymptotic guarantee for the UBSR derivative estimate.
    \item In \cite{Hegde2024}, UBSR optimization using a gradient-based algorithm has been proposed for the case of scalar parameterization. Unlike \cite{Hegde2024}, we derive a general (multivariate) expression for the UBSR gradient, leading to an estimator that is subsequently employed in a stochastic gradient algorithm mentioned above. A vector parameter makes the bias/variance analysis of UBSR gradient estimate challenging as compared to the scalar counterpart that is analyzed by \cite{Hegde2024}.
\end{enumerate}

Some prior works \cite{guo_distributionally_2019,delage_shortfall_2022,zhang_preference_2022} have also analyzed the UBSR measure within a robust optimization framework.
In \cite{guo_distributionally_2019}, the authors consider a variant of UBSR that incorporates ambiguity over the underlying distribution. They employ the SAA approach for optimizing this robust variant of UBSR, while working with a fixed dataset. They provide sample complexity bounds assuming the underlying distributions are discrete and have bounded support. 
Next, in \cite{delage_shortfall_2022}, the authors formulate a variant of UBSR that optimizes worst-case risk over an ambiguity set of loss functions. They provide a tractable formulation of the UBSR minimization problem under the assumption that the objective is convex and linear, and the underlying random variable is discrete. In \cite{zhang_preference_2022}, the authors extend the aforementioned formulation to incorporate multivariate loss functions. Both \cite{delage_shortfall_2022,zhang_preference_2022} work with an offline dataset and employ the SAA approach for finding a robust UBSR solution.
Unlike \cite{guo_distributionally_2019,delage_shortfall_2022,zhang_preference_2022}, we note the following aspects: (i) we do not require the underlying distribution to be discrete nor require bounded support; (ii) they impose convexity on the UBSR objective, while we provide results in the non-convex regime as well; (iii) we consider a more general online setting where samples from the underlying random variable are available, while they consider a setting where the exact form of the random variable, as a function of the underlying parameter, is available. In particular, UBSR gradient estimation is more challenging in our online framework than in the settings of the aforementioned references. Finally, we note that the guarantees for the SAA approach, as described in \cite[Chapter 5]{shapiro_book}, are asymptotic in nature, while we establish non-asymptotic bounds for our gradient-based UBSR optimization algorithms.  

There are several works in the literature that consider optimization of a smooth function using a stochastic gradient algorithm that is given inputs from an inexact gradient oracle, cf. \cite{bhavsar2021nonasymptotic,karimi2019non,stoc-bias-reduced-gradient-neurips,chen_closing_2021,devolder_stochastic_2011,duchi_finite_2012,hu_biased_2020,hu_bias-variance-cost_2021,pasupathy_sampling_2018}. However, the results from the aforementioned references are not directly applicable for UBSR optimization and the reader is referred to Section 5.2.3 of \cite{Hegde2024} for a detailed discussion. In contrast, the result we present for a stochastic gradient algorithm with biased gradient information is sufficiently general to apply to UBSR optimization.

A preliminary version of this manuscript was published in the IEEE Conference on Decision and Control (CDC) 2024, see \cite{gupte2023optimization}.
Compared to the conference version, this manuscript includes the following.
(i) new theoretical results and improved finite-sample bounds for  estimation of UBSR as well as its gradient;
(ii) UBSR optimization algorithms and finite-sample bounds for these algorithms in three different regimes, corresponding to the strongly convex, the convex, and the non-convex UBSR objective functions;
(iii) detailed simulation experiments; and (iv) a significantly revised presentation along with all the proofs.

\paragraph{Organization of the paper.}
 In \Cref{sec:ubsr-unbounded}, we characterize UBSR for a class of possibly unbounded random variables, derive certain useful properties, and provide popular examples for the UBSR loss functions that reduce UBSR to some well-known risk measures. In \Cref{sec:estimation}, we present SAA-based UBSR estimation techniques and derive the corresponding estimation error bounds. In \Cref{sec:ubsr-opt}, we derive the gradient expressions for UBSR, propose sample-based gradient estimators, and derive non-asymptotic bounds on their estimation errors. We then employ these estimators in a stochastic gradient (SG) scheme for risk optimization formulation and derive non-asymptotic convergence rates on the last iterate of the SG scheme. In \Cref{sec:experiments}, we present simulation experiments for estimation and optimization of UBSR risk measures. In \Cref{section:conclusions}, we provide the concluding remarks. 
 %The supplementary material \cite{prashla_web_resource} presents auxiliary theoretical results and additional simulation experiments.

\paragraph{Notation.}
We use boldface ($\mathbf{v}$), uppercase ($X$), and a combination of boldface and uppercase ($\mathbf{Z}$) to denote vectors, random variables, and random vectors, respectively. We use 'Var' as an abbreviation for variance, not to be confused with Value-at-Risk, which is abbreviated as 'VaR'. The terms $x^+$ and $x^-$ indicate $\max\left\{x,0\right\}$ and $\max\left\{-x,0\right\}$, respectively. We use $\log_b$ to denote the logarithm to the base $b$, and $\log$ to denote the natural logarithm. 

We use $\langle \cdot, \cdot \rangle$ to denote the dot product between two vectors, that is, for vectors $\mathbf{u}$ and $\mathbf{v}$ of the same dimension, $\langle \mathbf{u}, \mathbf{v} \rangle = \mathbf{u}^\mathrm{T}\mathbf{v}$. For $p\in[1,\infty)$, the $p$-norm of a vector $\mathbf{v} \in \Rel^d$ is given by $\norm{\mathbf{v}}_p \triangleq \left( \sum_{i=1}^d |\mathbf{v}_i|^p \right)^\frac{1}{p}$,
while $\norm{\mathbf{v}}_\infty$ denotes the supremum norm. Matrix norms \cite[Section 5.6]{horn_matrix_2012} induced by the vector $p$-norm are also denoted by $\norm{\cdot}_p$, where the special cases of $p=1,p=2$ and $p=\infty$ equal the maximum absolute column sum, spectral norm, and maximum absolute row sum, respectively.

Let $(\Omega, \mathcal{F}, \mathit{P})$ be a probability space. Let $\Leb{0}$ denote the space of $\mathcal{F}$-measurable, real random variables and let $\Exp(\cdot)$ denote the expectation under $\mathit{P}$. For $p \in [1,\infty)$, $\big(\Leb{p}, \norm{\cdot}_{\Leb{p}}\big)$ denotes the normed vector space of random variables $X: \Omega \to \Rel$ in $\Leb{0}$ for which 
$\norm{X}_{\Leb{p}} \triangleq \left( \Exp \big[ |X|^p \big] \right)^\frac{1}{p}$ is finite. Further, $\big(\Leb{\infty}, \norm{\cdot}_{\Leb{\infty}}\big)$ denotes the normed vector space of random variables $X: \Omega \to \Rel$ in $\Leb{0}$, for which, $\norm{X}_{\Leb{\infty}} \triangleq \inf \{k \in \Rel : |X| \leq k \text{ a.s.} \}$ is finite. Let $p\in[1,\infty)$ and let $\mathbf{Z}$ be a random vector such that each $Z_i$ is $\mathcal{F}$-measurable and has finite $p^{th}$ moment. Then the $\Leb{p}$-norm of $\Z$ is defined by $\norm{\mathbf{Z}}_{\Leb{p}} \triangleq \left( \Exp \Big[ \norm{\mathbf{Z}}_p^p\Big] \right)^\frac{1}{p}$. 

% Let $\mu_X$ and $\mu_Y$ denote the marginal distributions of random variables $X$ and $Y$ respectively. Let $\mathcal{H}(\mu_{{X}},\mu_{{Y}})$ denote the set of all joint distributions having $\mu_{{X}}$ and $\mu_{{Y}}$ as the marginals. Then, for every $p\geq 1,\mathcal{T}_p(\mu_{{X}},\mu_{{Y}}) \triangleq \inf \left\{ \int \norm{x-y}^p \eta(dx,dy) : \eta \in \mathcal{H}(\mu_{{X}},\mu_{{Y}}) \right\}$ denotes the optimal transpost cost associated with $X$ and $Y$, and $\mathcal{W}_p(\mu_{{X}},\mu_{{Y}}) = \left(\mathcal{T}_p(\mu_{{X}},\mu_{{Y}})\right)^{1/p}$ denotes the $p^\textrm{th}$ Wasserstein distance \cite{panaretos_invitation_2020}.

Given a real-valued function $l:\Rel \to \Rel$, $\mathcal{X}_l \subseteq \Leb{0}$ denotes the space of random variables $X$, for which $l(-X-t)$ is integrable for every $t \in \Rel$. The risk measures that we consider in this paper are well-defined when $X \in \mathcal{X}_l$, i.e., when $\Exp\left[l(-X-t)\right]$ is finite for all $t \in \Rel$. When the random variable $X$ is unbounded, the finiteness of the above expectation depends not only on $X$, but also on $l$. This dependency motivates the use of the $\mathcal{X}_l$ notation above. 
% The set $\mathcal{X}_f$ satisfies the following property: 
% \begin{equation}\label{eq:xl-self-membership}
%     X+c \in \mathcal{X}_f \textrm{ for every } X \in \mathcal{X}_f,\textrm{ and every } c \in \Rel.
% \end{equation}
To see how the choice of $l$ affects the class of random variables $\mathcal{X}_l$, consider $l(x) = [x^+]^2$ and suppose that $X$ has a Student's t-distribution with degrees of freedom $\nu=2$. Then, $\Exp\left[l(-X-t)\right]$ diverges to $-\infty$ for every $t \in \Rel$, and therefore $X \notin \mathcal{X}_l$.  

Throughout this paper, "increasing" means "non-decreasing". To imply strictly increasing, we explicitly use the qualifier 'strict'.

\section{UBSR for Unbounded Random Variables}
In this section, we extend UBSR to a class of possibly unbounded random variables and derive several properties of UBSR that are known to hold in the bounded case, for instance, monotonicity, cash-invariance, and convexity of UBSR, finiteness of UBSR, and existence of UBSR as a root of a decreasing function. The existing results in the literature do not readily extend to unbounded r.v. in this case, necessitating a rigorous re-derivation.    
Subsequently, we provide examples of loss functions for which the UBSR specializes to some well-known risk measures. The benefits of employing UBSR vis-à-vis CVaR/VaR are well known, and we avoid a detailed discussion. The reader is referred to \cite{follmer2016stochastic,giesecke-risk-large-losses,weber-2006-distribution-invariant,ben-tal_old-new_2007} for further reading. Our goal is to estimate and optimize UBSR for a class of unbounded random variables, while aspects of UBSR related to risk attitude are already well understood. The results in this section are presented in the context of this goal.

\textbf{Convex risk measures.} We now briefly discuss the properties that capture the features that decision makers prefer in a risk measure. Let $\mathcal{X}$ be an arbitrary set of random variables. We define the notions of monetary and convex risk measures below~\cite{FollmerSchied2004,artzner-coherent-risk-measures}.
\begin{definition}
    A mapping $\rho: \mathcal{X} \to \Rel$ is called a monetary measure of risk if it satisfies the following two conditions.
\begin{enumerate}
    \item Monotonicity: $\forall X_1, X_2 \in \mathcal{X}$ with $X_1 \leq X_2$ a.s., we have $\rho(X_1) \geq \rho(X_2)$.
    \item Cash invariance: For all $X \in \mathcal{X}$ and $m \in \Rel$, we have $\rho(X + m) = \rho(X) - m$.
\end{enumerate}
\end{definition}
\begin{definition}
A monetary risk measure $\rho$ is convex, if $\mathcal{X}$ is convex and $\forall X_1,X_2 \in \mathcal{X}$ and $\alpha \in [0,1]$, we have $ \rho(\alpha X_1 + (1-\alpha)X_2) \leq \alpha \rho(X_1) + (1-\alpha)  \rho(X_2)$.
% \begin{equation}\label{eq:rho_convex}
% \rho(\alpha X_1 + (1-\alpha)X_2) \leq \alpha \rho(X_1) + (1-\alpha)  \rho(X_2).    
% \end{equation}
\end{definition}
\label{sec:ubsr-unbounded}
Throughout this paper, $l:\Rel \to \Rel$ denotes a loss function. The loss function $l$, and a threshold $\lambda$ are chosen by the decision maker who is interested in quantifying the risk of a random variable $X \in \xl$. Here, $\lambda$ lies in the interior of the range of $l$. Throughout this paper, $X$ is assumed to model gains; therefore, a higher value for $X$ is considered to be preferable and less risky. We now formalize the notion of UBSR \cite{FollmerSchied2004,follmer_convex_2002} below.
\begin{definition}\label{eq:definition-UBSR}
The risk measure UBSR for the loss function $l$ and risk threshold $\lambda$, is given by the function $SR_{l,\lambda}:\xl \to \Rel$, defined as
\begin{equation*}
     \ubsr \triangleq \inf \{\; t \in \Rel \;|\; \Exp[l(-X-t)] \leq \lambda \} .
\end{equation*}    
\end{definition}
As an example, with $l(x)=\exp(\beta x)$ and $\lambda = 1$,  $\ubsr$ is identical to the entropic risk measure \cite{follmer2016stochastic}, which is a popular risk measure that enjoys several advantages over the standard risk measures VaR and CVaR.

Following \cite{artzner-coherent-risk-measures}, we define the \textit{acceptance set} associated with the UBSR risk measure as $\mathcal{A}_{l,\lambda} = \{ X \in \xl : \ubsr \leq 0\}$.
% \begin{equation*}\label{eq:def-acceptance_set}
%     .
% \end{equation*}
Note that the set $\mathcal{A}_{l,\lambda}$ contains all random variables $X$ whose expected loss $\Exp[l(-X)]$ does not exceed $\lambda$.

 \subsection{Characterization of UBSR}
 In this section, we discuss the problem of computing $\ubsr$ of a random variable $X \in \xl$, for a given loss function $l$ and risk threshold $\lambda$. Consider the real-valued function $g_X:~\Rel~\to~\Rel$ associated with the random variable $X$, defined by
\begin{equation}\label{eq:def-g}
    g_X(t) \triangleq \Exp\left[ l(-X-t)\right] - \lambda.
\end{equation}
The following proposition establishes that $\ubsr$ is a root of the function $g_X$ defined in \cref{eq:def-g}. 
\begin{proposition}\label{proposition:ubsr-as-a-root}
Suppose $X \in \xl$, the loss function $l$ is non-constant, increasing, and one of the following holds:\\
(A1) $l$ is continuous; or (A2) $l$ is continuous a.e., and the CDF of $X$ is continuous.\\
% \begin{enumerate}
% \renewcommand{\labelenumi}{(A\arabic{enumi})}
%     \item \label{l-continuous-everywhere} $l$ is continuous.
%     \item \label{l-continuous-ae} $l$ is continuous a.e., and the CDF of $X$ is continuous.
% \end{enumerate}
Then, $g_X$ is continuous and decreasing. In addition, if there exists $t^{\mathrm{u}}_{X}, t^{\mathrm{l}}_{X} \in \Rel$ such that $g_X(t^{\mathrm{u}}_X)\leq 0 < g_X(t^{\mathrm{l}}_X)$, then $\ubsr$ is finite and a root of $g_X(\cdot)$.
\end{proposition}
\begin{myproof}The proof is split into three parts, where we show a) $g_X$ is decreasing, b) $g_X$ is continuous, and c) $\ubsr$ is finite and a root of $g_X$. 

We begin with the monotonicity of $g_X$. Take any $t_1,t_2 \in \Rel$ such that $t_1 < t_2$. As $l$ is increasing, we have 
\begin{equation}\label{eq:monotonicity-of-l}
 l(-X-t_2) \leq l(-X-t_1), \;\; \text{w.p.} \;1   
\end{equation}
This implies that $\Exp\left[l(-X-t_2)\right] \leq \Exp\left[l(-X-t_1)\right]$, and therefore, $g_X(t_2) \leq g_X(t_1)$ holds. This concludes the proof that $g_X$ is decreasing.

Next, we show that the continuity of $g_X$ holds under either of the two assumptions A1 and A2 of the proposition.

\textit{Assumption A1: The loss function $l$ is continuous.}
Let $t_0 \in \Rel$ and define $Y \triangleq l(-X-t_0)$. Let $t_n \uparrow t_0$ be any real and increasing sequence and define $Y_n \triangleq l(-X-t_n), \forall n$. Note that the following holds: (a) $Y_n$ is integrable for all $n$, and $Y$ is integrable; and (b) $Y_n \downarrow Y$ almost surely. Verify that (a) follows because $X \in \xl$, whereas (b) follows because $l$ is continuous. Next, we define $Z_n \triangleq Y_1 - Y_n$ and claim that $Z_n \geq 0$ and $Z_n \uparrow (Y_1 - Y)$. The first part of the claim follows trivially, while the second part follows from (b). The claim implies that the conditions of the Monotone Convergence Theorem (MCT) \cite[Theorem 1.6.6]{durrett_2019} hold, and therefore, $\Exp\left[Z_n\right] \uparrow \Exp\left[Y_1 - Y\right]$, that is, $\Exp\left[Y_1 - Y_n\right] \uparrow \Exp\left[Y_1 - Y\right]$.
From the linearity of expectation and (a), it follows that $\Exp\left[Y_n\right] \downarrow \Exp\left[Y\right]$. This implies that $\lim_{t_n \uparrow t_0} g_X(t_n) = g_X(t_0)$. Using parallel arguments for a decreasing sequence $t_n \downarrow t_0$ and $Z_n \triangleq Y_n - Y_1$, we have $\lim_{t_n \downarrow t_0} g_X(t_n) = g_X(t_0)$. This concludes the proof that $g_X$ is continuous.

\textit{Assumption A2: The loss function $l$ is continuous a.e., and the CDF of $X$ is continuous.}
Let $t_0 \in \Rel$ and $\delta > 0$. Choose a real sequence $\{t_n\}_{n\geq 1}$ such that $t_n \in \left(t_0-\delta,t_0+\delta\right)$ and $t_n \to t_0$.  Define $\hat{Y} \triangleq \left|l(-X-t_0-\delta)\right| + \left|l(-X-t_0+\delta)\right|$ and $Y_n \triangleq l(-X-t_n),\forall n$. Then, $l$ being increasing implies that a) $\left|Y_n\right|\leq \hat{Y}, \forall n$ and $\Exp\left[\hat{Y}\right]< \infty$. Next, we define $Y \triangleq l(-X-t_0)$. Let $\dl$ denote the set of points at which $l$ is not continuous. Since the CDF of $X$ is continuous, $\mathit{P}\left(\{(-X-t_0) \in \dl\}\right) = 0$ and $\mathit{P}\left(\{(-X-t_n) \in \dl\}\right) = 0, \forall n$. Therefore, w.p. $1$, we have  $\lim_{n \to \infty} Y_n = \lim_{n \to \infty} l(-X-t_n) = l(-X-t_0) = Y$. In other words, we have b) $Y_n \to Y$ a.s. Then (a) and (b) satisfy the assumptions of the Dominated Convergence Theorem \cite[Theorem 1.6.7]{durrett_2019} and we have $\Exp\left[Y_n\right] \to \Exp\left[Y\right]$. This implies that $g_X(t_n) \to g_X(t_0)$, and therefore $g_X$ is continuous.

Finally, we show that $\ubsr$ is finite and is a root of $g_X$. Let $A \triangleq \left\{ t \in \Rel \left| g_X(t) \leq 0 \right. \right\}$. Then $A$ is non-empty as $t^u_X \in A$. By the definition of $\ubsr$, we have $\ubsr = \inf{A}$, and since $t> t^l_X,\forall t \in A$ we conclude that $A$ is bounded below and therefore $\ubsr$ is finite. There exists a decreasing sequence $t_n$ in $A$ such that $g_X(t_n) \leq 0$ and $t_n \to \ubsr$. Then, by continuity of $g_X, g_X(\ubsr) \leq 0$. Similarly, we define an increasing sequence $t_n \triangleq \ubsr - \frac{1}{n}$. Then by definition of $\ubsr$ as the infimum of $A$, we conclude that $t_n \notin A$, and therefore $g_X(t_n) > 0$. However, $t_n \to \ubsr$, and by continuity of $g_X$, $g_X(t_n) \to g_X(\ubsr)$, and therefore $g_X(\ubsr) \geq 0$. By combining the inequalities $g_X(\ubsr)\leq 0$ and $g_X(\ubsr)\geq 0$ we conclude that $\ubsr$ is a root of $g_X$.
\end{myproof}
% \begin{myproof}
%     See Appendix \ref{proof:proposition-ubsr-as-a-root} for the proof.
% \end{myproof}
Having established that $\ubsr$ is a root of $g_X$, we now present a proposition which, under slightly stronger assumptions, guarantees that $\ubsr$ is the unique root of $g_X$. 
\begin{proposition}\label{proposition:ubsr-unique-root}
Suppose $X \in \xl$. Suppose either of the following holds:\\
(A1') $l$ is continuous and strictly increasing; or (A2') $l$ is continuous a.e., non-constant, and increasing, and the CDF of $X$ is continuous and strictly-increasing.\\
% \begin{enumerate}
% \renewcommand{\labelenumi}{(A\arabic{enumi}')}
%     \item \label{l-strictly-increasing} $l$ is continuous and strictly increasing.
%     \item \label{F-strictly-increasing} $l$ is continuous a.e., non-constant, and increasing, and the CDF of $X$ is continuous and strictly-increasing.
% \end{enumerate}
Then, $g_X$ is continuous and strictly decreasing. In addition, if there exists $t^{\mathrm{u}}_{X}, t^{\mathrm{l}}_{X} \in \Rel$ such that $g_X(t^{\mathrm{u}}_X)\leq 0 < g_X(t^{\mathrm{l}}_X)$, then $\ubsr$ is finite and coincides with the unique root of $g_X(\cdot)$.
\end{proposition}
\begin{myproof}
We conclude from \Cref{proposition:ubsr-as-a-root} that $g_X$ is continuous under either of the two assumptions $A1'$ and $A2'$ of \Cref{proposition:ubsr-unique-root}. It is easy to see that if $l$ is strictly increasing then (\ref{eq:monotonicity-of-l}) holds with strict inequality and therefore $g_X$ is strictly decreasing under $A1'$. We now show that $g_X$ is strictly decreasing under $A2'$.

Let $t_1 < t_2$. Since $l$ is non-constant and increasing, there exists $x_1 < x_2$ such that $l(x_1) < l(x_2)$ and $x_2 \in \left(x_1,x_1 + (t_2-t_1)\right)$. Let $c \triangleq l(x_2)-l(x_1)$. Then, invoking Theorem 2.2.13 of \cite{durrett_2019} with $Y = l(-X-t_1) - l(-X-t_2)$ and $p=1$, we have
\begin{align*}
    &\Exp\left[Y\right] = \int_0^\infty \mathit{P}\left(\{Y > y\}\right) dy 
    \geq \int_0^{c} \mathit{P}\left(\{Y > y\}\right) dy 
    \geq \int\limits_{0}^{c} \mathit{P}\left(\{Y \geq c\}\right) dy  \\
    &= c \mathit{P}\left(\{Y \geq c\}\right) \geq c \mathit{P}\left(\{\left(l(-X-t_1) \geq l(x_2)\right) \cap  \left(l(-X-t_2) \leq l(x_1)\right)\}\right) \\
    &\geq c \mathit{P}\left(\{\left(-X-t_1 \geq x_2\right) \cap  \left(-X-t_2 \leq x_1\right)\}\right)  \\
    &= c \mathit{P}\left(\{-x_1-t_2 \leq X \leq -x_2-t_1\}\right) = c\left(F_X(-x_2-t_1)-F_X(-x_1-t_2)\right) > 0.
\end{align*}

The first two inequalities follow trivially. The third inequality follows because $l$ is increasing. The final inequality follows because $l(x_2)>l(x_1)$ and the CDF $F_X(\cdot)$ is strictly increasing. Indeed, $x_2 \in \left(x_1,x_1 + (t_2-t_1)\right)$, which implies that $-x_1-t_2 < -x_2-t_1$. Since $t_1<t_2$ were chosen arbitrarily, the result : $\Exp\left[Y\right]=\Exp\left[l(-X-t_1) - l(-X-t_2)\right]>0$ implies that $g_X(t_1)>g_X(t_2)$ holds for every $t_1<t_2$ and we conclude that $g_X$ is strictly decreasing.

From \Cref{proposition:ubsr-as-a-root} we know that $\ubsr$ is a root of $g_X$. Since we have established that $g_X$ is continuous and strictly decreasing, it must have exactly one root, and therefore $\ubsr$ is the unique root of $g_X$. This concludes the proof.
\end{myproof}
% \begin{myproof}
%     See \Cref{proof:proposition-ubsr-unique-root} for the proof.
% \end{myproof}
\begin{remark}
Compared to \Cref{proposition:ubsr-unique-root}, \Cref{proposition:ubsr-as-a-root} contains weaker assumptions on the loss function $l$ and the random variable $X$, and these assumptions suffice for showing that UBSR is a convex risk measure.  \Cref{proposition:ubsr-unique-root} on the other hand, is useful in the estimation of UBSR measure. %In particular, we present theoretical results for the case when the loss function satisfies \Cref{l-strictly-increasing}, and our experiments on risk estimation cover loss functions which satisfy \Cref{F-strictly-increasing}.
%\Cref{proposition:ubsr-as-a-root} on the other hand gives a weaker assertion, but under weaker assumptions, and therefore applies to a much broader class of loss functions. 
To the best of our knowledge, assertions similar to those made in \Cref{proposition:ubsr-as-a-root} are unavailable in the existing literature.
\end{remark}
% \begin{remark}\label{remark:ubsr-unique-root-conditions}
%     In the \Cref{proposition:ubsr-unique-root}, the r.v. $X \in \xl$ is chosen arbitrarily, and hence, the proposition applies to every r.v. in $\xl$, if \Cref{l-continuous-everywhere} holds. Instead, if \Cref{l-continuous-ae} holds, then the proposition applies to every r.v. in $\hat{\xl} \subset \xl,$ where $\hat{\xl} \triangleq \{X \in \xl \left| \textrm{The CDF of X is continuous and strictly increasing}\right.\}.$ For intelligibility reasons, we avoid stating two versions for each result (one for each condition). Instead of explicitly stating whether '\Cref{l-continuous-everywhere} and $X \in \xl$' holds or '\Cref{l-continuous-ae} and $X \in \hat{\xl}$' holds, we frame the results under the assertion that the assumptions of \Cref{proposition:ubsr-unique-root} hold. For notational convenience, we use the set $\xl$ throughout the paper. The set $\xl$ is to be construed as $\hat{\xl}$ whenever \Cref{l-continuous-ae} holds and \Cref{l-continuous-everywhere} fails.
% \end{remark}
\begin{remark}
Proposition 4.104 of \cite{FollmerSchied2004} shows that UBSR is the unique root for the case when the underlying random variables are bounded, and the loss function $l$ is convex and strictly increasing. In \Cref{proposition:ubsr-unique-root}, we generalize this result to unbounded random variables and, unlike \cite{FollmerSchied2004}, our proof does not require a convexity assumption on $l$. 
\end{remark}

% \begin{assumption}\label{as:l-convex}
%     The loss function $l$ is convex.
% \end{assumption}
% \begin{assumption}\label{as:loss-fn-polynomial-bound}
%      There exists $a_1\geq 0, q_1 \geq 1/2$ and $M_1 \geq 0$ such that the sub-derivative $ \partial l(x)$ satisfies: $ \partial l(x) \leq a_1 |x|^{q_1} + M_1, \forall x \in \Rel$.
% \end{assumption}

%The UBSR is a convex risk measure under \Cref{as:l-convex}. 
\subsection{Convexity of UBSR}
In \cite{FollmerSchied2004}, the authors have shown that UBSR is convex for the case of bounded random variables ($\mathcal{X} \subset \Leb{\infty}$). Using a novel proof technique in the following proposition, we extend the convexity of UBSR to $\xl$, a class of possibly unbounded random variables.
 
\begin{proposition}\label{proposition:sr-convex_risk_measure}
If $l$ is increasing, then $\xl$ is a convex set. Suppose the assumptions of \Cref{proposition:ubsr-as-a-root} hold. Then $SR_{l,\lambda}:\xl \to \Rel$ is a monetary risk measure. In addition, if $l$ is convex, then $\mathcal{A}_{l,\lambda}$ is a convex set and $\sr{l}{\lambda}{\cdot}$ is a convex risk measure.
\end{proposition}  
\begin{myproof}
We first establish that $\xl$ is a convex set. Recall that $\xl$ denotes the space of random variables $ X \in \Leb{0}$ for which the random variable $l(-X-t)$ is integrable for each $t \in \Rel$. Fix $\alpha \in \left[0,1\right]$. Since $l$ is increasing, for every $x,y \in \Rel$, we have $\min\left\{l(x), l(y) \right\} \leq l\left(\alpha x + (1-\alpha) y\right) \leq \max\left\{l(x), l(y) \right\}$ which implies that $\left|l\left(\alpha x + (1-\alpha) y\right)\right| \leq \left|l(x)\right| + \left|l(y)\right|$.
Let $X_1,X_2 \in \xl$. Fix $t \in \Rel$ and $\omega \in \Omega$. Substituting $x = -X_1(\omega)-t$ and $y = -X_2(\omega)-t$ in the last inequality above, we have   
\begin{align*}
    &\left|l\left(\alpha (-X_1(\omega)-t) + (1-\alpha) (-X_2(\omega)-t)\right)\right| \leq \left|l(-X_1(\omega)-t)\right| + \left|l(-X_2(\omega)-t)\right|.
\end{align*}
Since $X_1,X_2 \in \xl$, the r.h.s. in the above equation is integrable, and therefore the l.h.s is also integrable. Precisely, $l\left(-(\alpha X_1 + (1-\alpha) X_2)-t\right)$ is integrable for every $t \in \Rel$, which implies that $\alpha X_1 + (1-\alpha) X_2 \in \xl$. Since $\alpha \in [0,1], t \in \Rel,X_1$ and $X_2$ were chosen arbitrarily, we conclude that $\xl$ is convex. This proves the first assertion of the proposition. 

We now turn to proving that UBSR is a monetary risk measure, i.e., it satisfies the monotonicity and cash-invariance properties.
%let the assumptions of \Cref{proposition:ubsr-as-a-root} hold. 

We begin by showing that the monotonicity property is satisfied. Let $X_1,X_2\in \xl$ be such that $X_1 \leq X_2$ holds almost surely. Let $t_1 \triangleq SR_{l,\lambda}(X_1)$ and $t_2 \triangleq SR_{l,\lambda}(X_2)$. Since $l$ is increasing, we have $l(-X_1 - t_1) \geq l(-X_2 - t_1)$ almost surely. Taking expectation and subtracting $\lambda$ yields (a) $g_{X_1}(t_1) \geq g_{X_2}(t_1)$. By \Cref{proposition:ubsr-as-a-root}, we have $g_{X_1}(t_1) = g_{X_2}(t_2) = 0$, which combined with (a), yields $0 = g_{X_2}(t_2) \geq g_{X_2}(t_1)$. By definition of $t_2$ as $SR_{l,\lambda}(X_2)$, i.e., $t_2 = \inf\{t \in \Rel | g_{X_2}(t) \leq 0\}$, we conclude that $t_2\leq t_1$ this proves the monotonicity of $\sr{l}{\lambda}{\cdot}$.

We now show that the cash-invariance property is satisfied. Let $A \triangleq \left\{\hat{t} \in \Rel \left| g_X(\hat{t}) \leq 0 \right. \right\}$. Fix $m \in \Rel$. Let $X \in \xl$. Then $X+m \in \xl$ and by Definition \ref{eq:definition-UBSR}, we have
\begin{align*}
    &\sr{l}{\lambda}{X+m} = \inf \{\; t \in \Rel \;|\; \Exp[l(-(X+m)-t)] \leq \lambda \} \\
    % &= \inf \{\; t \in \Rel \;|\; \Exp[l(-X-(t+m))] \leq \lambda \} \\
    &= \inf \{\; t \in \Rel \;|\; t+m \in A \} 
    = \inf \{A\} - m = \ubsr - m.
\end{align*}
The first and fourth equalities follow from the definition of $\sr{l}{\lambda}{\cdot}$ in (\ref{eq:definition-UBSR}), while the second equality follows from the definition of $A$. The third equality is an identity on infimum that holds for every $m \in \Rel$ and every non-empty $A$. Thus we conclude that $\sr{l}{\lambda}{X+m} = \ubsr-m$.

We now prove the final assertion, i.e., UBSR is convex under the added assumption that $l$ is convex. To show the convexity of UBSR, we start by showing that $\mathcal{A}_{l,\lambda}$ is convex. Let $Y_1,Y_2 \in \mathcal{A}_{l,\lambda}$. Then by definition of $\mathcal{A}_{l,\lambda}$, we have (a) $\sr{l}{\lambda}{Y_1}\leq 0$ and $\sr{l}{\lambda}{Y_2}\leq 0$. From the fact that both $g_{Y_1}$ and $g_{Y_2}$ are decreasing functions, (a) implies that (b) $g_{Y_1}(0) \leq 0$ and $g_{Y_2}(0) \leq 0$.

Fix $\alpha \in \left[0,1\right]$ and denote $Y_\alpha \triangleq \alpha Y_1+ (1-\alpha) Y_2$ and $t_\alpha \triangleq SR_{l,\lambda}(Y_\alpha)$. Convexity of $\xl$ implies that $Y_\alpha \in \xl$. Then, we have
\begin{align*}
    &g_{Y_\alpha}(0) = \Exp\left[ l(\alpha (-Y_1) + (1-\alpha) (-Y_2))\right] - \lambda \\
    &\leq \alpha (\Exp\left[ l(-Y_1)\right]\lambda) + (1-\alpha)(\Exp\left[ l((-Y_2)) - \lambda \right]) = \alpha g_{Y_1}(0) + (1-\alpha)g_{Y_2}(0) \leq 0. 
\end{align*}
The first inequality follows from the convexity of $l$, while the last inequality follows from (b). Now consider the set $A \triangleq \left\{ t \in \Rel \left| g_{Y_\alpha}(t) \leq 0\right. \right\}$. The above claim $g_{Y_\alpha}(0) \leq 0$ implies that $0 \in A$, and by definition of $t_\alpha$ as the infimum of $A$, we conclude that $t_\alpha \leq 0$. This implies that $Y_\alpha \in \mathcal{A}_{l,\lambda}$. Since $\alpha$ was arbitrary, we conclude that $\mathcal{A}_{l,\lambda}$ is convex. 

Finally, we show that $SR_{l,\lambda}(\cdot)$ is convex. Let $X_1,X_2 \in \xl$. Then by definition of $\xl$, we have $X_i + \sr{l}{\lambda}{X_i} \in \xl; i \in \{1,2\}$. Moreover, cash invariance of  $SR_{l,\lambda}(\cdot)$ implies that $X_i + \sr{l}{\lambda}{X_i} \in \mathcal{A}_{l,\lambda}$ for each $i \in \{1,2\}$. 
 Then, by the convexity of $\mathcal{A}_{l,\lambda}$, we have 
\begin{align*}
    0 &\geq SR_{l,\lambda}\left( \alpha (X_1 + SR_{l,\lambda}(X_1)) + (1-\alpha )(X_2 + SR_{l,\lambda}(X_2))\right) \\
    %&= SR_{l,\lambda}\big(\alpha (X_1)+(1-\alpha)(X_2) + \alpha SR_{l,\lambda}(X_1) + (1-\alpha)SR_{l,\lambda}(X_2)\big)\\
    &= SR_{l,\lambda}\big(\alpha (X_1)+(1-\alpha)(X_2)\big) - \alpha SR_{l,\lambda}(X_1) - (1-\alpha)SR_{l,\lambda}(X_2),
\end{align*}
where the last equality follows from cash invariance. Since $\alpha, X_1$ and $X_2$ were chosen arbitrarily, it follows that $\sr{l}{\lambda}{\cdot}$ is convex.
\end{myproof}
\subsection{Popular instances of UBSR}
We next present a few popular choices for the UBSR loss function.
\begin{enumerate}
\item Value-at-Risk (VaR). Let $\lambda \in \Rel$, and choose the loss function to be the Heaviside function, i.e., $l(x) = \mathbf{1}_{\{x>0\}}, \forall x \in \Rel$. Then $\ubsr$, with the choice of $\lambda = \alpha$, coincides with $\textrm{VaR}_\alpha(X)$. The Value-at-Risk (VaR) at level $\alpha \in (0,1)$ for a random variable $X$ is given by \cite[Definition 4.45]{follmer2016stochastic}
\begin{equation*}
    \textrm{VaR}_\alpha(X) \triangleq  \inf\left\{ t \in \Rel \left| \textrm{Pr}\left(X+t < 0\right) \leq \alpha \right.\right\}.
\end{equation*}
See \cite{follmer2016stochastic,giesecke-risk-large-losses,dunkel_efficient_2007,zhaolin2016ubsrest} for more details.

% \todo{is this needed ?}
% Note that the loss function in \cref{eq:sr-var} is not strictly increasing and fails to satisfy \Cref{as:l-strictly-increasing}. However, \Cref{proposition:ubsr-unique-root} can be shown to hold if the CDF of $X$ is strictly increasing. See remarks \ref{remark:weak-probability-condition} and \ref{remark:strictly-increasing-cdf}.
\item  Entropic risk. Let $\lambda =1$ and $\beta > 0$, and define the loss function as $l(x) = e^{\beta x},\forall x \in \Rel$. Then, $\ubsr$ coincides with the entropic risk measure ($\rho_{e}$) \cite[Example 4.114]{follmer2016stochastic}), defined as $\rho_\textrm{e}(X) \triangleq \beta^{-1}\left[\log\left(\Exp[e^{-\beta x}]\right)\right]$. See \cite{giesecke-risk-large-losses,dunkel_efficient_2007,dunkel2010stochastic} for more details.

\item  Expectile risk. Given $a \geq b\geq 0$ and $c \in \Rel$, define the loss function as $l(x) = c + ax^+ - bx^-, \forall x \in \Rel$. The piecewise linear function above is a simple yet popular choice of loss function that scales losses and gains differently. $\sr{l}{\lambda}{\cdot}$ is coherent if and only if the loss function is of the form above \cite{giesecke-risk-large-losses}. For the special case of $a \in \left[1/2,1\right), b = 1 - a, c = 0$ and $\lambda = 0$, $-\ubsr$ coincides with the expectile risk \cite{delbaen_risk_2016}, which is defined as the solution to
\begin{equation*}
    \argmin_{t \in \Rel} \left\{a \Exp\left[\left([X-t]^+\right)^2\right] + (1-a)\Exp\left[\left([X-t]^-\right)^2 \right]\right\}.
\end{equation*}
Expectiles were introduced by \cite{expectile_newey_asymmetric_1987} as the minimizer to an asymmetric least-square criterion for solving a regression problem. An expectile can be interpreted in multiple ways \cite{philipps_interpreting}, however, we focus on the negative expectile, which is a coherent risk measure. In fact, it is the only coherent risk measure that is elicitable \cite{Bellini_2015,ziegel_coherence_2016}. See \cite{follmer2016stochastic,philipps_interpreting,daouia_expectile_2024} for detailed discussions on the expectile risk.\\[0.5ex]

\item  Polynomial Risk. Given $a > 1$, define the loss function as $l(x) = a^{-1}[x^+]^a,$ $\forall x \in \Rel$. Polynomial loss functions have been previously analyzed by \cite{giesecke-risk-large-losses} and \cite{dunkel_efficient_2007} for UBSR estimation with a bounded random variable.
\end{enumerate}
% \paragraph{5. Expectile risk:}Given $\alpha \in \left[1/2,1\right)$, define the loss function as
% \begin{equation*}
%     l(x) = \alpha [x]^+ - (1 - \alpha)[x]^- , \forall x \in \Rel.
% \end{equation*}
% Then, $-\ubsr$ coincides with the expectile risk $\rho_x$,  defined below.
% \begin{equation*}
%     \rho_{x}(X) \triangleq \left\{ t \in \Rel \left| \alpha \Exp\left[\left([X-t]^+\right)^2\right] + (1-\alpha)\Exp\left[\left([X-t]^-\right)^2 \right]\right.\right\}.
% \end{equation*}
% Expectile risk is the only coherent risk measure that is elicitable. See \todo{add references} for further details on the expectile risk.

Observe that the loss function underlying VaR satisfies the assumption A$2$ of \Cref{proposition:ubsr-as-a-root} and the assumption A$2$' of \Cref{proposition:ubsr-unique-root}, and therefore both propositions apply to VaR. Furthermore, by \Cref{proposition:sr-convex_risk_measure} we conclude that VaR is a monetary risk measure. Similarly, the loss functions underlying entropic risk, expectile, and polynomial risk measures  satisfy assumption A$1$ of \Cref{proposition:ubsr-as-a-root} and the assumption A$1$' of \Cref{proposition:ubsr-unique-root}, and therefore both propositions apply to each of these instances of UBSR. In addition, the last three risk measures  mentioned above have convex loss functions, and by \Cref{proposition:sr-convex_risk_measure} we conclude that the corresponding UBSR measures are convex. 
%\cite{dunkel2010stochastic} consider this loss function under two assumptions: Gaussian and heavy-tailed distributions. \cite{giesecke-risk-large-losses} consider a mixture of student and Gaussian distributions to test sensitivity of the risk measure, and observed that $a=1.5$ to be a good choice.

% \subsection{OCE for Unbounded Random Variables}
% \label{section:oce-unbounded}
% \input{inputs/oce-unbounded}

\section{Estimation of UBSR.}
\label{sec:estimation}
In this section, we discuss techniques to estimate the UBSR of a given random variable $X$. In practice, the true distribution of $X$ is unavailable, and one relies on the samples of $X$ to estimate the UBSR. We use the sample average approximation (SAA) technique \cite{saa-intro-shapiro-alexander,robust-stoc-appx-nemirovski-shapiro,shapiro_book} for UBSR estimation. Such a scheme for UBSR estimation was proposed and analyzed by \cite{zhaolin2016ubsrest}.

Consider the following optimization problem:
\begin{align}\label{eq:saa_stochastic}
    \text{minimize} \,\,\, t, &\quad
    \text{subject to} \,\,\,\, \Exp[l(-X-t)] \leq \lambda.
\end{align}
It is easy to see from Definition \ref{eq:definition-UBSR} that $\ubsr$ is the solution to the above problem. In the SAA scheme, we solve an alternate optimization problem obtained by replacing the expectation in \cref{eq:saa_stochastic} with an $m$-sample estimate. Using i.i.d. samples $\{{Z}_i\}_{i=1}^m$ from $X$, collectively denoted by a random vector $\Z$, we frame the following optimization problem, whose solution is an estimator of $\ubsr$:
\begin{align}\label{eq:saa_deterministic}
    \text{minimize} \,\,\, t, &\quad
    \text{subject to} \,\,\,\,\, \dfrac{1}{m} \sum_{i=1}^{m}l(-{Z}_i-t) \leq \lambda.
\end{align}
\subsection{Error bounds for UBSR estimation.}
Each realization of $\Z$ yields a possibly different solution to \cref{eq:saa_deterministic}. Therefore, we first quantify the estimation error for a single realization. In fact, we make the analysis even more general and view the SAA estimator given by \cref{eq:saa_deterministic} as a function of $m$ real numbers, denoted as $\mathbf{z} \in \Rel^m$. Thus, we do not restrict $\mathbf{z}$ to be a realization of $\Z$ and instead, $\mathbf{z}$ is any vector in $\Rel^m$. Formally, let $m\geq 1$, and let the function $\srt^m:\Rel^m \to \Rel$ be defined as 
\begin{equation}\label{eq:sr-m-definition}
    \srm{\mathbf{z}} \triangleq \min \left\{ t \in \Rel \left| \dfrac{1}{m} \sum_{j=1}^m l(-\textrm{z}_j-t) \leq \lambda \right. \right\}.
\end{equation}
Since $\lambda$ lies in the interior of the range of $l$, it is easy to verify that $\srm{\mathbf{z}}$ satisfies the constraint in (\ref{eq:sr-m-definition}) with equality for any continuous loss function $l$. This implies that for all $\mathbf{z} \in \Rel^d$, 
\begin{equation}\label{eq:saa-ubsr-solution-characterization}
    \dfrac{1}{m} \sum_{j=1}^m l(-\textrm{z}_j-\srm{\mathbf{z}}) = \lambda.
\end{equation}
The quality of solution given by the SAA scheme in \cref{eq:saa_deterministic} is determined by the estimation error averaged across all the realizations of $\Z$, i.e., by the mean absolute error $\Exp\left[|\srm{\mathbf{Z}} - \ubsr |\right]$ or the mean squared-error $\Exp\left[|\srm{\mathbf{Z}} - \ubsr |^2\right]$. We now introduce an assumption on the loss function that is used to derive non-asymptotic error bounds on the estimator $\srm{\mathbf{Z}}$. \begin{assumption}\label{assumption:l-b1-increasing}
    The loss function $l$ is strictly increasing, and there exists $b>0$ such that $l(y)-l(x)>b(y-x)$ for every $y>x, \text{ and } x,y \in \Rel$. 
\end{assumption}
Assumptions similar to \Cref{assumption:l-b1-increasing} are commonly made in the estimation of UBSR, cf. \cite{dunkel2010stochastic,zhaolin2016ubsrest}.
Recall that $\Z$ is a random vector of i.i.d. samples of $X$. Next, we derive MAE and MSE bounds on the UBSR estimator $\srm{\mathbf{Z}}$. 
\begin{lemmma}\label{lemma:ubsr-saa-bounds}
     Suppose the assumptions of \Cref{proposition:ubsr-unique-root} hold and suppose $l$ satisfies \Cref{assumption:l-b1-increasing}. Let $X \in \xl$ be such that there exists $\sigma>0$ satisfying $\mathrm{Var}\left(l\left(-X-\ubsr\right)\right) \leq \sigma^2$. Then,
    \begin{align*}%\label{eq:ubsr-saa-bounds}
        \Exp\left[|\srm{\mathbf{Z}} - \ubsr |\right] &\leq \dfrac{\sigma}{b \sqrt{m}}, \;\; \textrm{ and } 
        \Exp\left[|\srm{\mathbf{Z}}- \ubsr |^2\right] \leq \dfrac{\sigma^2}{b^2 m},
    \end{align*}
    where $b$ is as given in \Cref{assumption:l-b1-increasing}.
\end{lemmma}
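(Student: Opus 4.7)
\textbf{Proof proposal for \Cref{lemma:ubsr-saa-bounds}.}

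The plan is to express the estimation error purely in terms of the deviation of the sample mean $\frac{1}{m}\sum_{j=1}^m l(-Z_j-\ubsr)$ from its expectation $\lambda$, and then use a simple variance argument to close out the bounds. The crucial step is to reduce a nonlinear root comparison to a linear deviation bound by exploiting the at-least-linear growth built into \Cref{assumption:l-b1-increasing}.

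First, I would pin down the two equations characterizing $\ubsr$ and $\srm{\mathbf{Z}}$. Since the assumptions of \Cref{proposition:ubsr-unique-root} hold, $\ubsr$ is the unique root of $g_X$, so $\Exp[l(-X-\ubsr)] = \lambda$. On the sample side, continuity of $l$ together with $\lambda$ lying in the interior of the range of $l$ yield the equality form of the SAA constraint in \eqref{eq:saa-ubsr-solution-characterization}: $\frac{1}{m}\sum_{j=1}^m l(-Z_j-\srm{\mathbf{Z}}) = \lambda$. Subtracting these gives
\begin{equation*}
\frac{1}{m}\sum_{j=1}^{m}\bigl[l(-Z_j-\srm{\mathbf{Z}})-l(-Z_j-\ubsr)\bigr] \;=\; \lambda - \frac{1}{m}\sum_{j=1}^{m} l(-Z_j-\ubsr).
\end{equation*}
Let $Y_j \triangleq l(-Z_j-\ubsr)$; the $Y_j$ are i.i.d.\ with mean $\lambda$ and variance at most $\sigma_1^2$, so the RHS is simply $\Exp[Y_1] - \frac{1}{m}\sum_j Y_j$.

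Next, I would use \Cref{assumption:l-b1-increasing} to sandwich the LHS. If $\srm{\mathbf{Z}} > \ubsr$, then $-Z_j-\srm{\mathbf{Z}} < -Z_j-\ubsr$ and the $b_1$-strict-monotonicity yields $l(-Z_j-\srm{\mathbf{Z}}) - l(-Z_j-\ubsr) \le -b_1(\srm{\mathbf{Z}}-\ubsr)$ for every $j$; averaging and combining with the identity above gives $b_1(\srm{\mathbf{Z}}-\ubsr) \le \frac{1}{m}\sum_j Y_j - \Exp[Y_1]$. The symmetric case $\srm{\mathbf{Z}} < \ubsr$ gives $b_1(\ubsr-\srm{\mathbf{Z}}) \le \Exp[Y_1] - \frac{1}{m}\sum_j Y_j$. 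Both cases (together with the trivial equality case) collapse to
\begin{equation*}
\bigl|\srm{\mathbf{Z}} - \ubsr\bigr| \;\le\; \frac{1}{b_1}\left|\frac{1}{m}\sum_{j=1}^{m} Y_j - \Exp[Y_1]\right|.
\end{equation*}

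Finally, the bounds fall out by standard moment arguments on a sample mean. Taking expectations and applying Jensen's inequality gives $\Exp[|\srm{\mathbf{Z}}-\ubsr|] \le \frac{1}{b_1}\sqrt{\mathrm{Var}(Y_1)/m} \le \sigma_1/(b_1\sqrt{m})$; squaring the pointwise bound and taking expectations gives $\Exp[|\srm{\mathbf{Z}}-\ubsr|^2] \le \mathrm{Var}(Y_1)/(b_1^2 m) \le \sigma_1^2/(b_1^2 m)$. The only step that required any thought is the reduction to a linear deviation, and the $b_1$ growth condition is precisely what makes it work; once that reduction is in hand, everything else is routine.
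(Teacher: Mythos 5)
Your proposal is correct and follows essentially the same route as the paper's proof: both exploit that $\ubsr$ and $\srm{\mathbf{Z}}$ are roots of the population and empirical versions of $g$, use the $b_1$-growth condition of \Cref{assumption:l-b1-increasing} to convert the root gap into a deviation of the sample mean $\frac{1}{m}\sum_j l(-Z_j-\ubsr)$ from $\lambda$, and then close with Jensen/Cauchy--Schwarz and the variance bound. The only difference is cosmetic (you subtract the two root equations before invoking the growth condition, while the paper applies the growth condition to $g^m_{\mathbf{z}}$ first and then substitutes the roots), so no further comment is needed.
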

\begin{myproof}
    Let $\mathbf{z} \in \Rel^m$ and define $g_\mathbf{z}^m(t) \triangleq \dfrac{1}{m}\sum_{i=1}^m l(-\textrm{z}_i-t) - \lambda $. By \Cref{assumption:l-b1-increasing}, $ g_\mathbf{z}^m(t_1) - g_\mathbf{z}^m(t_2) \geq b(t_2-t_1), \forall t_1<t_2$. Recall the definition of $\srm{\cdot}$ in (\ref{eq:sr-m-definition}). If $\ubsr \geq \srm{\mathbf{z}}$, then by substituting $t_1=\srm{\mathbf{z}}$ and $t_2=\ubsr$, we have
    \begin{align}\label{eq:ubsr-saa-sr-inequality}
        \left| \ubsr - \srm{\mathbf{z}} \right| \leq \dfrac{\left| g_\mathbf{z}^m(\ubsr) - g_\mathbf{z}^m(\srm{\mathbf{z}})\right|}{b}.
    \end{align}
    If $\ubsr < \srm{\mathbf{z}}$ holds, then $t_1=\ubsr$ and $t_2=\srm{\mathbf{z}}$ results in same bound as in \cref{eq:ubsr-saa-sr-inequality}. 
     By (\ref{eq:saa-ubsr-solution-characterization}) and \Cref{proposition:ubsr-unique-root}, we have $g_\mathbf{z}^m(\srm{\mathbf{z}}) = 0 = g_X(\ubsr)$. Then by substituting $g_\mathbf{z}^m(\srm{\mathbf{z}})$ in \cref{eq:ubsr-saa-sr-inequality} with $g_X(\ubsr)$, we have
    \begin{align*}
        &\left| \ubsr - \srm{\mathbf{z}} \right| \leq \dfrac{\left| g_\mathbf{z}^m(\ubsr) - g_X(\ubsr)\right|}{b} \\
        &=\dfrac{\left| \dfrac{1}{m}\sum_{i=1}^ml(-\textrm{z}_i-\ubsr) - \Exp\left[l(-X-\ubsr)\right]\right|}{b}
    \end{align*}
    The above inequality holds for any $\mathbf{z} \in \Rel^m$. Consequently, for any $m$-dimensional random vector $\mathbf{Z}$, the following holds w.p. $1$:
    \begin{align*}
        &\left| \ubsr - \srm{\mathbf{Z}} \right| \leq \dfrac{\left| \dfrac{1}{m}\sum_{i=1}^ml(-Z_i-\ubsr) - \Exp\left[l(-X-\ubsr)\right]\right|}{b}.
    \end{align*}
    Taking expectation on both sides, and replacing $\Exp\left[l(-X-\ubsr)\right]$ with \\$\dfrac{1}{m}\Exp\left[\sum_{i=1}^m l(-Z_i-\ubsr)\right]$, we have 
    \begin{align*}
        &\Exp\left[\left| \ubsr - \srm{\mathbf{Z}}\right] \right| \\
        &\leq \dfrac{\Exp\left[\left|\sum_{i=1}^m l(-Z_i-\ubsr)\right.\right.}{b m} - \dfrac{\left.\left.\Exp\left[\sum_{i=1}^m l(-Z_i-\ubsr)\right]\right|\right]}{b m} \\
        &\leq \dfrac{\sqrt{\mathrm{Var}\left(\sum_{i=1}^m l(-Z_i-\ubsr)\right)}}{b m} 
        %&= \dfrac{\sqrt{\sum_{i=1}^m \left(\mathrm{Var}\left(l(-Z_i-\ubsr)\right)\right)}}{b_1 m} \\
        = \dfrac{\sqrt{\mathrm{Var}\left(l(-X-\ubsr)\right)}}{b \sqrt{m}} \leq \dfrac{\sigma}{b \sqrt{m}}.
    \end{align*}
    The second inequality is the Holder's inequality. The first equality follows because $Z_i$'s are independent, while the second equality follows because $Z_i$'s are identical. The last inequality follows from the variance assumption of the lemma and we conclude the proof of the first claim of the lemma. The second claim follows by first squaring both sides before taking the expectation in the above proof, and then applying completely parallel arguments as above. 
\end{myproof}
% \begin{myproof}
%     See \Cref{proof:lemma-ubsr-saa-bounds}
% \end{myproof}
%Compared to the conference version \cite{gupte2023optimization} of this manuscript where MAE and MSE bounds of the order \order{1/m^{1/4}} and \order{1/\sqrt{m}} were obtained, not only the bounds in \Cref{lemma:ubsr-saa-bounds} are significantly tighter, but also apply to a wider class of loss functions as we no longer require the loss function $l$ to be smooth. In addition, the result in the conference version was obtained using a Wasserstein distance bound, which in turn required certain higher moment bounds to hold, whereas \Cref{lemma:ubsr-saa-bounds} guarantees MAE and MSE bounds of the order \order{1/\sqrt{m}} and \order{1/m}, respectively using only a variance bound. 
In comparison to \cite{JMLR-LAP-SPB}, where an MAE bound of order \order{1/\sqrt{m}} was obtained under the assumption of convexity and Lipschitz continuity on $l$, \Cref{lemma:ubsr-saa-bounds} covers possibly non-Lipschitz, non-smooth and non-convex loss functions, with an MAE bound of the same order and an MSE bound of order \order{1/m}. 

\subsection{Algorithm for UBSR estimation.}
To obtain $\srm{\mathbf{Z}}$ one needs to solve an optimization problem, as defined in \cref{eq:sr-m-definition}. However, a closed form expression of the solution to the problem is not available. Instead, it is possible to obtain a solution within $\delta$-neighborhood of $\srm{\mathbf{Z}}$. We describe below a variant of the bisection method that gives ${t}_m(\Z)$, a $\delta$-approximate solution to the optimization problem in \cref{eq:sr-m-definition}. 
%Algorithm \ref{alg:saa_bisect} presents the pseudocode to get  that satisfies Proposition \ref{proposition:sr-estimator-bounds}. 

\begin{figure}[!t]
\centering
\begin{algorithm}[H]
\caption{UBSR-SB (Search and Bisect)}\label{alg:saa_bisect}
\begin{algorithmic}[1] 
    \REQUIRE threshold $\delta > 0, \text{ i.i.d. samples } \{Z_i\}_{i=1}^m$
    \item[]\hspace*{-\leftmargin}\textbf{Define:} $\hat{g}(t) = \frac{1}{m}\sum_{j=1}^m l(-Z_i-t) - \lambda$.
    % \IF{$\hat{g}(0)>0$}
    %         \STATE $low, high \leftarrow -1, 0 $
    %     \ELSE
    %         \STATE $low, high \leftarrow 0, 1$
    % \ENDIF
    \STATE $low, high \leftarrow 0, 1$
    \STATE \textbf{if} $\hat{g}(0)>0$ \textbf{then} $low, high \leftarrow -1, 0 $
    % \WHILE {$\hat{g}(high) > 0$} \STATE {$high \leftarrow 2 * high$} \ENDWHILE
    % \WHILE{$\hat{g}(low) < 0$} \STATE $low \leftarrow 2 * low$ \ENDWHILE
    \STATE \textbf{while} $\hat{g}(high) > 0$ \textbf{do} $high \leftarrow 2 * high$
    \STATE \textbf{while} $\hat{g}(low) < 0$ \textbf{do} $low \leftarrow 2 * low$
    \STATE $T \leftarrow high - low$ 
    \WHILE{$T > 2\delta$}
        % \IF{$\hat{g}(t_m )>0$}
        %     \STATE $low \leftarrow t_m$
        % \ELSE
        %     \STATE $high \leftarrow t_m$
        % \ENDIF
        \STATE \textbf{if} $\hat{g}(t_m )>0$ \textbf{then} $low \leftarrow t_m$ \textbf{else} $high \leftarrow t_m$ 
        \STATE $T \leftarrow high$ - $low, t_m \leftarrow (low+high)/2$
    \ENDWHILE
    
    \RETURN $t_m$
\end{algorithmic}
\end{algorithm}
\end{figure}

The reference \cite{zhaolin2016ubsrest} assumes knowledge of $t_X^{\mathrm{l}},t_X^{\mathrm{u}}$ (defined in \Cref{proposition:ubsr-as-a-root}) for a bisection method to solve (\ref{eq:sr-m-definition}).  However, these values are seldom known in practice. Our algorithm does not require the knowledge of $t_X^{\mathrm{l}},t_X^{\mathrm{u}}$ beforehand. Instead, the algorithm works by first finding the search interval $\left[t_X^{\mathrm{l}},t_X^{\mathrm{u}}\right]$, and then performing a bisection search. The variables $low,high$ of the algorithm are proxies for $t^{\mathrm{l}}_{X},t^{\mathrm{u}}_{X}$ and the first two loops of the algorithm find $low$ and $high$ respectively such that $\srm{\mathbf{Z}} \in [low, high]$. The final loop in the algorithm performs bisection search to return a value in the $\delta$-neighborhood of $\srm{\mathbf{Z}}$. The total iteration complexity of the algorithm is at most $ 2\left(1+ \log\left( \dfrac{\max(|t^{\mathrm{u}}_{X}|, |t^{\mathrm{l}}_{X}|)}{\delta} \right)\right)$. 
The error bounds from \Cref{lemma:ubsr-saa-bounds} easily extend to the solution given by \Cref{alg:saa_bisect}. The following proposition extends the bounds from \Cref{lemma:ubsr-saa-bounds} to the solution given by Algorithm \ref{alg:saa_bisect}.

\begin{proposition}\label{proposition:sr-estimator-bounds} 
Suppose the UBSR risk parameters $l$ and $\lambda$, and $X \in \xl$ are chosen such that the assumptions of \Cref{lemma:ubsr-saa-bounds} hold. Let $\mathbf{Z} = \{{Z}_i\}_{i=1}^m$ be $m$ i.i.d. samples of $X$. Let $t_m(\Z)$ be an approximate solution to (\ref{eq:saa_deterministic}) given by the Algorithm \ref{alg:saa_bisect} for the inputs, $\mathbf{Z}$ and $\delta = \dfrac{d_1}{\sqrt{m}}$, for some $d_1>0$. Then,
    \begin{equation*}
        \Exp[|t_m(\Z) - \ubsr|] \leq \frac{d_1+\frac{\sigma}{b}}{\sqrt{m}}, \,\, \textrm{ and } \;\; 
        \Exp[\left(t_m(\Z) - \ubsr \right)^2] \leq \frac{2\left(d_1^2+\frac{\sigma^2}{b^2}\right)}{m},
    \end{equation*}
    where $b$ and $\sigma_1$ are as given in \Cref{lemma:ubsr-saa-bounds}.
\end{proposition}
\begin{myproof}
    Using the triangle inequality, we have
    \begin{align*}
        \Exp[|{t}_m(\Z) - \ubsr|] &\leq \Exp[|{t}_m(\Z) - \srm{\mathbf{Z}}| + |\srm{\mathbf{Z}} - \ubsr|] \\
        &\leq \delta + \Exp[|\srm{\mathbf{Z}} - \ubsr|] \leq \dfrac{d_1+\dfrac{\sigma}{b}}{\sqrt{m}}.
    \end{align*}
    where the last inequality follows from \Cref{lemma:ubsr-saa-bounds}.
    Similarly, using the identity: $(a+b)^2\le 2a^2+2b^2$, and invoking \Cref{lemma:ubsr-saa-bounds} we obtain
    \begin{align*}
    \Exp[\left({t}_m(\Z) - \ubsr\right)^2] &\leq 2\Exp\left[\left({t}_m(\Z) - \srm{\mathbf{Z}}\right)^2\right] 
        + \Exp\left[\left(\srm{\mathbf{Z}} - \ubsr\right)^2\right]\\
        &\leq \dfrac{2 \left(d_1^2 + \dfrac{\sigma^2}{b^2}\right)}{m}.
    \end{align*}
\end{myproof}

We now analyze the iteration complexity of Algorithm \ref{alg:saa_bisect}. Suppose the first and second loops run for $n_1,n_2$ iterations. It is trivial to see that $n_1 < 1 + \log_2(|t^{\mathrm{u}}_{X}|)$ and $n_2 < 1 + \log_2(|t^{\mathrm{l}}_{X}|)$. Due to the carefully chosen initial values of variables $low$ and $high$, at least one among $n_1$ or $n_2$ will always be $0$. Then $T \leq 2^n$ holds, where $n \triangleq \max(n_1, n_2)$. Suppose the final loop terminates after $k$ iterations. Then at $k-1$, we have $\dfrac{T}{2^{k-1}} > 2\delta$ which implies that $k < 1 + \log_2\left(\dfrac{\max(|t^{\mathrm{u}}_{X}|,|t^{\mathrm{l}}_{X}|)}{\delta}\right)$.
Thus the total iteration complexity of the algorithm is at most $\max(n_1,n_2) + k $ which is upper-bounded by $ 2\left(1+ \log\left( \dfrac{\max(|t^{\mathrm{u}}_{X}|, |t^{\mathrm{l}}_{X}|)}{\delta} \right)\right)$.

% A detailed discussion on UBSR estimation using samples, including an estimation algorithm and its analysis is provided in \Cref{supp:ubsr-estimation-analysis} of the supplementary material \cite{prashla_web_resource}.

% \section{Optimization of Convex Risk Measures}
% \label{sec:optimization}
% \input{inputs/biased-stoc-optimization}

\section{UBSR Optimization}
\label{sec:ubsr-opt}
%\subsection{A note on stochastic optimization with biased gradients.}
Let $\BB$ be an open subset of $\Rel^d$. Consider an objective function $F: \BB \times \Rel^k \to \Rel$ and a $k$-dimensional random vector $\xi$. A standard stochastic optimization problem involves maximizing (or minimizing) the function $f:\Theta \subseteq \BB \to \Rel$, where $f(\theta) \triangleq \Exp[F(\theta, \xi)]$, and $\Theta$ is a compact and convex set. The difficulty in solving this problem arises because, in reality, the distribution of $\xi$ is unknown, and one may only have access to samples from $F(\theta, \xi)$, $\nabla_\theta F(\theta, \xi)$, or both. A popular approach to solve this problem is to form a sample-based gradient estimator of $\nabla f$ and employ it in a stochastic gradient (SG) based algorithm. The convergence rates for such SG algorithms are well known in the literature \cite{shapiro_book,moulines2011non,robust-stoc-appx-nemirovski-shapiro}.

A major drawback in the above problem formulation is that the uncertainty or risk of the random variable $F(\theta,\xi)$ is ignored. Risk-sensitive optimization, on the other hand, accounts for uncertainty arising from the underlying random variable by incorporating a risk measure $\rho$ into the objective. The optimization problem is given by
\begin{equation*}
    \centerline{$\theta_* \in \argmin_{\theta \in \Theta} 
\rho(F(\theta, \xi)).$}
\end{equation*}
We now state a result that applies to all convex risk measures. The following lemma shows that the above risk-sensitive objective inherits the convexity of the objective function $F$ when the risk measure $\rho$ is convex. 

\begin{lemmma}\label{lemma:h-convex-general}
    Let $F(\cdot,\xi)$ be continuously differentiable and $\mu$-strongly concave w.p. $1$ for some $\mu\geq 0$, and let $\rho$ be convex. Then $h:\Theta \to \Rel$ given by $h(\theta) = \rho(F(\theta,\xi))$ is $\mu$-strongly convex.
\end{lemmma}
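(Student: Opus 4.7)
The plan is to unpack the definition of $\mu$-strong concavity of $F(\cdot,\xi)$ and then apply, in sequence, the three defining properties of a convex risk measure (monotonicity, cash invariance, convexity) to transport the strong-convexity inequality from the inside of $\rho$ to the outside. Concretely, fix $\theta_1,\theta_2\in\B$ and $\alpha\in[0,1]$, write $\bar\theta\triangleq \alpha\theta_1+(1-\alpha)\theta_2$, and let $c\triangleq \tfrac{\mu}{2}\alpha(1-\alpha)\|\theta_1-\theta_2\|^2\geq 0$, which is a deterministic scalar.

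First, I would use $\mu$-strong concavity of $F(\cdot,\xi)$ (w.p. $1$), which is equivalent to the pointwise a.s. inequality
\begin{equation*}
F(\bar\theta,\xi)\;\geq\;\alpha F(\theta_1,\xi)+(1-\alpha)F(\theta_2,\xi)+c.
\end{equation*}
Applying monotonicity of $\rho$ (note the sign flip because $\rho$ is a risk measure on loss/gain conventions as defined in Section 3), I would obtain
\begin{equation*}
\rho\bigl(F(\bar\theta,\xi)\bigr)\;\leq\;\rho\bigl(\alpha F(\theta_1,\xi)+(1-\alpha)F(\theta_2,\xi)+c\bigr).
\end{equation*}
Next, cash invariance of $\rho$ converts the additive deterministic shift $+c$ on the inside into a shift $-c$ on the outside, giving
\begin{equation*}
\rho\bigl(\alpha F(\theta_1,\xi)+(1-\alpha)F(\theta_2,\xi)+c\bigr)\;=\;\rho\bigl(\alpha F(\theta_1,\xi)+(1-\alpha)F(\theta_2,\xi)\bigr)-c.
\end{equation*}
Finally, convexity of $\rho$ yields
\begin{equation*}
\rho\bigl(\alpha F(\theta_1,\xi)+(1-\alpha)F(\theta_2,\xi)\bigr)\;\leq\;\alpha\rho\bigl(F(\theta_1,\xi)\bigr)+(1-\alpha)\rho\bigl(F(\theta_2,\xi)\bigr).
\end{equation*}
Chaining the three displays and recalling $h(\theta)=\rho(F(\theta,\xi))$ gives exactly the $\mu$-strong convexity inequality
\begin{equation*}
h(\bar\theta)\;\leq\;\alpha h(\theta_1)+(1-\alpha)h(\theta_2)-\tfrac{\mu}{2}\alpha(1-\alpha)\|\theta_1-\theta_2\|^2.
\end{equation*}

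I do not expect any deep obstacle; the proof is essentially a bookkeeping exercise showing that the three axioms of a convex risk measure are precisely what is needed to preserve strong convexity under the composition $\rho\circ F(\cdot,\xi)$. The only technicality to verify is that all random variables appearing above lie in the domain on which $\rho$ is defined, so that each application of monotonicity, cash invariance, and convexity is legitimate; this reduces to checking that $F(\theta,\xi)\in\xl$ (resp.\ $\xhu$) for every $\theta\in\B$ and that this set is closed under taking convex combinations and adding constants, which is already established in \Cref{proposition:sr-convex_risk_measure} for UBSR and is an analogous statement for OCE. The case $\mu=0$ recovers ordinary convexity, so the argument covers both the strongly convex and the convex cases uniformly.
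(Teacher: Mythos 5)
Your proposal is correct and follows essentially the same route as the paper's proof: the pointwise strong-concavity inequality for $F(\cdot,\xi)$, followed by monotonicity, cash invariance, and convexity of $\rho$ applied in that order to arrive at the strong-convexity inequality for $h$. The paper additionally cites Theorem 2.1.8 of Nesterov to pass between the derivative-based and inequality-based characterizations of strong convexity, but the substance of the argument is identical to yours.
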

% \begin{proof}
%     See \Cref{proof:lemma_h_cvx_gen} for the proof.
% \end{proof}
\begin{myproof}Since $F$ is continuously differentiable and $\mu$-strongly concave w.p. 1. Then $-F$ is $\mu$-strongly convex and by Theorem 2.1.8 of \cite{nesterov_introductory_2004}, we have the following for every $\theta_1,\theta_2 \in \Theta$ and every $\alpha \in [0,1]$ . 
\begin{equation}\label{eq:f-sc}
    F(\alpha \theta_1 +(1-\alpha)\theta_2,\xi) \geq \alpha F(\theta_1,\xi) + (1-\alpha) F(\theta_2,\xi) + \frac{\alpha(1-\alpha)\mu}{2}\norm{\theta_1-\theta_2}^2, \qquad \text{w.p. 1}.
\end{equation} 
Then we have,
\begin{align*}
    &h(\alpha \theta_1 + (1-\alpha) \theta_2) \\
    &= \rho\left(F\left(\alpha \theta_1 + (1-\alpha)\theta_2, \xi\right)\right) \leq \rho\left(\alpha F\left(\theta_1,\xi\right) + (1-\alpha) F\left(\theta_2,\xi\right) + \frac{\alpha(1-\alpha)\mu}{2}\norm{\theta_1-\theta_2}^2\right) \\
    &= \rho\left(\alpha F\left(\theta_1,\xi\right) + (1-\alpha) F\left(\theta_2,\xi\right)\right) - \frac{\alpha(1-\alpha)\mu}{2}\norm{\theta_1-\theta_2}^2 \\
    &\leq \alpha \rho\left(F\left(\theta_1,\xi\right)\right) + (1-\alpha) \rho\left(F\left(\theta_2,\xi\right)\right) - \frac{\alpha(1-\alpha)\mu}{2}\norm{\theta_1-\theta_2}^2 \\
    &= \alpha h(\theta_1) + (1-\alpha) h(\theta_2) - \frac{\alpha(1-\alpha)\mu}{2}\norm{\theta_1-\theta_2}^2.
\end{align*}
Here, the first inequality follows from \cref{eq:f-sc} and the monotonicity of $\rho$ given by \Cref{proposition:sr-convex_risk_measure}, while the second equality and the second inequality follow from cash-invariance and convexity of $\rho$ respectively given by \Cref{proposition:sr-convex_risk_measure}. Then by Theorem 2.1.8 of \cite{nesterov_introductory_2004}, we conclude that $h$ is $\mu$-strongly convex. 
\end{myproof}

In this paper, we choose UBSR as the risk criterion. Precisely, we are interested in solving the problem of minimizing the UBSR of $F(\theta,\xi)$ over the decision space $\Theta$, i.e., to find \\[1ex]
\begin{equation}
\centerline{$\theta_* \in \argmin_{\theta \in \Theta} 
SR_{l,\lambda}(F(\theta, \xi)).$}
\end{equation}
Previous works \cite{guo_distributionally_2019,delage_shortfall_2022,zhang_preference_2022}, with the exception of \cite{Hegde2024}, consider UBSR optimization in an offline setting where a fixed set of $M$ samples of $\xi$ are available. In this offline setting, $F$ is assumed to be a known function such that both $F(\theta,\xi)$ and $\nabla F(\theta,\xi)$ can be evaluated for every $\theta$, using the same $M$ samples of $\xi$. A popular example of this setting is portfolio optimization, where $F(\theta,\xi) = \theta^T \xi$. The convexity and linearity of $F$ are then used to solve an SAA variant of UBSR optimization where $\xi$ is replaced with the $M$-sample empirical r.v. Following \cite{Hegde2024}, we consider an online setting where we assume that given a parameter choice $\theta$, samples of $F(\theta, \xi)$ and $\nabla_\theta F(\theta, \xi)$ can be obtained. We do not impose assumptions of convexity or linearity on $F$. Our online setting addresses risk-sensitive optimization problems in the domains of online learning and multi-armed bandits. Moreover, our setting also captures the learning constraints of horizontal federated learning \cite{federated_learning}, where the dataset ($\xi$) may not be publicly available, but samples of $F(\theta,\xi)$ and $\nabla_\theta F(\theta, \xi)$ may be obtained by querying the individual nodes by passing the parameter $\theta$. Lastly, our setting subsumes the offline setting considered in previous works, and our gradient-based approach can be easily specialized to it; see \Cref{section:offline} for further details. 
%We next derive an expression for the UBSR gradient in Section \ref{sec:ubsr-grad-expr}. Subsequently, Sections \ref{ss:ubsr_grad_estimator} and \ref{ss:sg-online} handle the online UBSR optimization problem, while Section \ref{section:offline} covers the extension to the offline setting.
\subsection{UBSR gradient expression}
\label{sec:ubsr-grad-expr}
From \Cref{eq:definition-UBSR}, the optimization objective $\srt:\BB \to \Rel$ is given as\footnote{For notational convenience, we suppress the dependency of $l,\lambda$ on $\srt$.}: 
\begin{equation}\label{eq:sr-theta}
     \srth{} = \inf \left\{ t \in \Rel \left| \Exp[l(-F(\theta,\xi)-t)] \leq \lambda \right.\right\} .
\end{equation}    
Next, we define $g:\Rel \times \BB \to \Rel$ as follows:
\begin{equation}\label{eq:g_t_theta}
    g(t,\theta) \triangleq \Exp[ l(-F(\theta,\xi)-t)] - \lambda.
\end{equation}
For the sake of readability, we shall restate \Cref{proposition:ubsr-unique-root} in terms of $\srt$ and $\theta$. We first restate an assumption made in \Cref{proposition:ubsr-unique-root}.
\begin{assumption}\label{assumption:sr-base-assumption}
    For every $\theta \in \BB$, $ F(\theta, \xi) \in \xl$ and there exist $t_\textrm{u}(\theta), t_\textrm{l}(\theta) \in \Rel$ such that $g(t_\textrm{u}(\theta), \theta) \leq 0 < g(t_\textrm{l}(\theta),\theta)$.
\end{assumption}

\begin{proposition}\label{prop:proposition-g-root}
Suppose that $l$ is continuous and increasing. Suppose that either $l$ is strictly increasing or for every $\theta \in \BB$, $F(\theta,\xi)$ has a strictly increasing CDF. Suppose \Cref{assumption:sr-base-assumption} holds. Then, for every $\theta \in \BB$, $g(\cdot, \theta)$ is continuous, strictly decreasing, and has a unique root that coincides with $\srth{}$, i.e., $g(\srth{}, \theta) = 0$ for every $\theta \in \BB$.
\end{proposition}

We shall use the result above to invoke the implicit function theorem \cite{real-analysis-rudin} to derive an expression for the gradient of UBSR. %, and use this expression to arrive at a biased gradient estimator for $\nabla\srt$. %Next, we incorporate this gradient estimator into an SG scheme for solving (\ref{eq:hthetaopt}). We provide non-asymptotic bounds for this SG algorithm for the case when $\srt(\cdot)$ is either strongly convex or convex.
%A standard assumption of smoothness of the random function $F$ is made below.
For this derivation, we require the following assumptions on the loss function $l$ and the objective function $F(\cdot,\xi)$.
% \begin{assumption} \label{as:l_cont_diff} $l$ is continuously differentiable. 
% \end{assumption}
% \begin{assumption}\label{as:loss-fn-polynomial-bound}
%      The loss function $l$ is continuously differentiable and $l'$ is $S_1$-Lipschitz. \todo{Need $S_1$ here ?}
% \end{assumption}
% \begin{assumption}\label{as:F_cont_diff}
%     $F(\cdot, \xi)$ is continuously differentiable almost surely. 
% \end{assumption}
\begin{assumption}\label{assumption:loss-fn-smooth}
    The loss function $l$ is continuously differentiable, and there exists $S>0$ such that $l$ is $S$-smooth.
\end{assumption}
\begin{assumption}\label{as:lp_moments_F_dF}
    The objective function $F(\cdot,\xi)$ is continuously differentiable a.s. and there exists $L>0$ such that, for every $\theta_1,\theta_2 \in \Theta$, we have
    \begin{align*}
        \norm{\nabla F(\theta_1, \xi)- \nabla F(\theta_2,\xi)}_{2} \leq L \norm{\theta_1-\theta_2}_2 \;\; \text{a.s.} 
    \end{align*}
\end{assumption}
\begin{assumption}\label{assumption:F-bounded-higher-moment}
    There exists $\sigma_0 > 0$ and $M>0$ satisfying Var$\left(F(\theta,\xi)\right) \leq \sigma_0^2$ and $\norm{\nabla F(\theta, \xi)}_{\Leb{2}} \leq M$, for all $\theta \in \Theta$.
\end{assumption}
\begin{assumption}\label{as:variance-bound-l-F} 
There exist $T>0$ such that
\begin{align*}
\norm{\mathbf{Y}_\theta - \Exp\left[\mathbf{Y}_\theta\right]}_{\Leb{2}} \leq T, \;\; \text{for every } \theta \in \Theta,
\end{align*}
where $\mathbf{Y}_\theta \triangleq l'(-F(\theta,\xi)-\srth{})\nabla F(\theta,\xi), \forall \theta \in \Theta$.
\end{assumption}

In the literature on UBSR optimization \cite{zhaolin2016ubsrest,Hegde2024}, the authors assume that the underlying random variables have bounded support. In contrast, our work extends the analysis to possibly unbounded random variables. In that spirit, we make \cref{assumption:F-bounded-higher-moment,as:lp_moments_F_dF,as:variance-bound-l-F} and use them to derive MAE and MSE bounds on our proposed UBSR gradient estimator. Assumptions similar to \cref{as:lp_moments_F_dF,as:variance-bound-l-F} have been made before in \cite{zhaolin2016ubsrest} and \cite{Hegde2024} for the asymptotic and non-asymptotic analysis of the UBSR optimization scheme, respectively. Using these assumptions, we establish that the objective function $\srt(\cdot)$ is smooth in the next section. Akin to \Cref{as:variance-bound-l-F}, an assumption that bounds the variance of the stochastic gradient is commonly made in the non-asymptotic analysis of stochastic gradient algorithms \cite{moulines2011non,bottou2018optimization,bhavsar2021nonasymptotic}. %The assumption above is made in a similar spirit, and the result below establishes that the mean-squared error of the UBSR gradient estimate \cref{eq:J-definition}  vanishes asymptotically at a $O(1/m)$ rate. 

We now present a lemma that we use to derive the gradient expression of UBSR. This lemma uses the dominated convergence theorem to derive expressions for the partial derivatives of the function $g$ defined in \cref{eq:g_t_theta}.
\begin{lemmma}\label{lemma:g-dct-lemma}
    Suppose the loss function $l$ is continuously differentiable. Suppose $F(\theta, \xi) \in \xl, \forall \theta \in \BB$ and $F(\cdot, \xi)$ is continuously differentiable almost surely. Then $g$ is continuously differentiable on $\Rel \times \BB$, and for all $(t,\theta) \in \Rel \times \BB$, the partial derivatives are given by
    % \begin{align}\label{part_der}
    % \frac{\partial g(t,\theta)}{\partial \theta_i} &= - \Exp\left[ l'(-F(\theta,\xi)-t) \frac{\partial F(\theta, \xi)}{\partial \theta_i} \right],  \\ \label{part_der2}
    % \frac{\partial g(t,\theta)}{\partial t} &= - \Exp\big[ l'(-F(\theta,\xi)-t) \big],
    % \end{align}
    \begin{align}\label{part_der}
    \frac{\partial g(t,\theta)}{\partial \theta_i} &= - \Exp\left[ l'(-F(\theta,\xi)-t) \frac{\partial F(\theta, \xi)}{\partial \theta_i} \right], \frac{\partial g(t,\theta)}{\partial t} &= - \Exp\big[ l'(-F(\theta,\xi)-t) \big],
    \end{align}
    where $i \in \{1,2,\ldots,d\}$.
\end{lemmma}
\begin{myproof}
    Let $\mu_\xi$ be the probability measure on $\Rel$ induced by the random variable $\xi$. Then for every $t \in \Rel$ and $\theta \in \BB, g$ can be written as 
    \begin{equation*}
        g(t,\theta) = \int_{z} \left[l\left(-F\left(\theta,z\right) - t\right) - \lambda\right] \mu_\xi(dz).
    \end{equation*}
    Let $t_0 \in \Rel$ and $\theta_0 \in \BB$ be chosen arbitrarily.
    
    We now obtain an expression for the partial derivative of $g(t,\theta)$ w.r.t. $t$.
    Suppressing the dependency on $\lambda$ and $\theta_0$, define $f:\Rel \times \Rel \to \Rel$ by $f(t,z) = l(-F(\theta_0,z) - t) - \lambda$. Then, it is clear that  $g(t,\theta_0) =  \int_{z} f(t,z)\mu_\xi(dz)$. Let $\delta > 0$, and suppose that $t \in (t_0 - \delta, t_0+\delta)$, then we claim that\\
    (i) $\int_{z} \left| f(t,z)\right|\mu_\xi(dz) < \infty$;
        (ii) For a fixed $z, \frac{\partial f}{\partial t}(t,z)$ exists and $\frac{\partial f}{\partial t}(\cdot , z)$ is continuous; and\\
        (iii) $\int_z {\sup_{t \in \left[t_0-\delta, t_0+\delta \right]} \left| \frac{ \partial f_{\theta_0}}{\partial t} (t,z)\right|  \mu_\xi(dz)} < \infty.$
    
    The first claim holds because $F(\theta_0,\xi) \in \xl$, which follows from the assumptions of the lemma. The second claim follows because $l$ and $F$ are continuously differentiable. Note that a continuous function on a compact set is bounded. The set $\left[t_0 - \delta, t_0 + \delta\right]$ is compact, and the partial derivative w.r.t. $t$ is continuous, and therefore bounded. Combining this with the fact that the probability measure $\mu_\xi$ is finite, the third claim follows. Precisely, the following inequality is satisfied. 
    \begin{align*}
            &\int_z \sup_{t \in \left[t_0-\delta, t_0+\delta \right]} \left| \frac{ \partial f_{\theta_0}}{\partial t} (t,z)\right|  \mu_\xi(dz) \\
            &= \int_z \sup_{t \in \left[t_0-\delta, t_0+\delta \right]} \left| \frac{\partial }{\partial t} \left[l(-F(\theta_0,z) - t) - \lambda\right] \right|  \mu_\xi(dz) < \infty. 
    \end{align*}
    Then the claims $1$ to $3$ show that the conditions (i),(ii), and (iii') respectively of Theorem A.5.3 of \cite{durrett_2019} hold, and the theorem allows the interchange of the derivative and the integral to yield 
    \begin{align*}
        \frac{\partial g(t,\theta)}{\partial t}\Bigr|_{\substack{t=t_0}} = -\int_z l'(-F(\theta,z) - t_0) \mu_\xi(dz)  
        = -\Exp\left[l'(-F(\theta,\xi) - t_0)\right].    
    \end{align*}
    Since $t_0$ and $\theta_0$ were chosen arbitrarily, the partial derivative expression holds for every $t \in \Rel$ and every $\theta \in \BB$. Furthermore, since the conditions of Theorem A.5.3 of \cite{durrett_2019} hold and Theorem A.5.1 of \cite{durrett_2019} also applies, it is implied that $g(\cdot,\theta)$ is continuously differentiable for every $\theta \in \BB$.
    
     We now obtain an expression for the partial derivative of $g(t,\theta)$ w.r.t. $\theta$. For each $i \in [1,2,\ldots, d]$, let $\mathcal{N}_i(\delta)$ be a neighborhood of $\theta_0$ defined as 
    \begin{equation*}    
    \mathcal{N}_i(\delta) \triangleq \left\{\theta \in \Rel^d \left| \forall j, |\theta^j - \theta_0^j| \leq \delta \;\textrm{ if } \; j = i \; \textrm{ and } \theta^j = \theta_0^j \; \textrm{o.w. }\; \right.\right\}. 
    \end{equation*}
    Since $\theta_0 \in \BB$, and $\BB$ is open, there exists $\hat{\delta}>0$ such that $\mathcal{N}_i(\hat{\delta}) \subseteq \BB, \forall i \in [1,2,\ldots, d]$. Suppressing the dependency on $F$ and $t_0$, define $f:\BB \times \Rel \to \Rel$ by  $f(\theta,z) = l(-F(\theta,z) - t_0) - \lambda$. Then, it is clear that $g(t_0,\theta) =  \int_{z} f(\theta, z)\mu_\xi(dz)$. Choose $i \in [1,2,\ldots, d]$ arbitrarily. Let $\partial_i$ denote $\frac{\partial}{\partial \theta^i}$. Then, for every $\theta \in \mathcal{N}_i(\hat{\delta})$, \\
    1) $\int_{z} \left| f(\theta, z)\right|\mu_\xi(dz) < \infty$.\\
    2) For every $z, \partial_i f(\cdot,z)$ exists and $\partial_i f(\cdot , z) $ is continuous.\\
    3) $\int_z \sup_{\theta \in \mathcal{N}_i(\hat{\delta})} \left| \partial_i f (\theta,z)\right|  \mu(dz) < \infty. $\\
        The first claim holds because $F(\theta_0,\xi) \in \xl$, which follows from the assumptions of the lemma.  Claim 2 follows from the fact that $l$ and $F$ are continuously differentiable. In Claim 3, we apply the fact that the partial derivatives are locally bounded, which follows from the continuous differentiability of $l$ and $F$ and because the set $\mathcal{N}_i(\hat{\delta})$ is compact. Precisely, we have
        \begin{align*}
           &\int_z \sup_{\theta \in \mathcal{N}_i(\hat{\delta})} \left| \partial_i f(\theta,z)\right|  \mu(dz) = \int_z \sup_{\theta \in \mathcal{N}_i(\hat{\delta})} \left| \partial_i \left[l(-F(\theta, z) - t_0) - \lambda\right]\right|  \mu(dz) \\
           & \leq \int_z \sup_{\theta \in \mathcal{N}_i(\hat{\delta})} \left| \partial_i \left[l'(-F(\theta, z) - t_0)\right]\right| \left| \partial_i F(\theta, z) \right| \mu(dz) < \infty.
        \end{align*}
        
        The claims (1),(2), and (3) show that the conditions of Theorem A.5.3 of \cite{durrett_2019} hold, and the theorem yields 
    \begin{align*}
        \partial_i g(t,\theta) &= \int_z l'(-F(\theta,z) - t)\partial_i F(\theta) \mu_\xi(dz) = -\Exp\left[l'(-F(\theta,\xi) - t)\partial_iF(\theta)\right].    
    \end{align*}
    Since $i,t_0$ and $\theta_0$ are chosen arbitrarily, the proof applies to all partial derivatives of $g$ w.r.t $\theta$. Furthermore, since the conditions of Theorem A.5.3 of \cite{durrett_2019} hold, Theorem A.5.1 of \cite{durrett_2019} also applies, which implies that all partial derivatives of $g$ w.r.t. $\theta$ are continuous. 
    
    Combining the two steps, we conclude that all partial derivatives of $g$ are continuous on $\Rel \times \BB$. Then, we recursively apply Theorem 12.11 of \cite{Apostol} to conclude that $g$ is continuously differentiable on $\Rel \times \BB$.
\end{myproof}
% \begin{myproof}
%     See \Cref{proof:lemma_g_dct} for the proof.
% \end{myproof}
%The assumptions of continuous differentiability on $l$ and $F$ are common for interchanging the derivative and the expectation, to obtain the partial derivatives of $g$, defined in (\ref{eq:g_t_theta}). Similar assumptions have been made in the context of UBSR estimation and optimization in previous works, cf. \cite{zhaolin2016ubsrest,Hegde2024}. 
% The next lemma \todo{add this to Thm}shows that the partial derivatives of $\srt$ can be expressed as a ratio of partial derivatives of $g$, using the implicit function theorem in conjunction with the results derived in \Cref{lemma:g-dct-lemma} and \Cref{prop:proposition-g-root}. 
% \begin{lemmma}\label{lemma:g-ift-lemma}Then, $\srt$ is a continuously differentiable mapping on $\B$ and its partial derivatives can be expressed as:
% \begin{equation}\label{eq:implicit}
%     \frac{\partial \srth{}}{\partial
%     \theta_i}\Big|_{\hat{\theta}} =  \frac{-\partial g(t,\theta) / \partial \theta_i}{\partial g(t,\theta) / \partial t}\Big|_{h(\hat{\theta}), \hat{\theta}} \qquad \forall i \in \{1,2,\ldots,d\} \;\text{ and }\;\forall \hat{\theta} \in \B.
% \end{equation}
% \end{lemmma}
% \begin{myproof}
%     See \Cref{proof:lemma-g-ift} for the proof.
% \end{myproof}
We now present the main result, which provides an expression for the gradient of UBSR with respect to the vector parameter $\theta$. 
\begin{theorem}[Gradient of UBSR]\label{theorem:ubsr-gradient}
Suppose the assumptions of \Cref{prop:proposition-g-root} and \Cref{lemma:g-dct-lemma} hold. Then the function $\srt$ given by \cref{eq:sr-theta} is continuously differentiable and the gradient of $\srt$ can be expressed as follows: 
\begin{equation}\label{h_grad_0}
\nabla \srth{} =  \frac{-\Exp \Big[ l'(-F(\theta,\xi)-\srth{}) \nabla F(\theta,\xi) \Big]}{\Exp \big[ l'(-F(\theta,\xi)-\srth{}) \big]}, \forall \theta \in \BB. 
\end{equation}
\end{theorem}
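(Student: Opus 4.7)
The plan is to obtain the gradient by applying the implicit function theorem to the identity $g(h(\theta),\theta)=0$, which \Cref{prop:proposition-g-root} guarantees holds for every $\theta \in B$. The expressions in \Cref{lemma:g-dct-lemma} will then combine with the standard implicit function formula to produce \eqref{h_grad_0}.

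First I would verify the three inputs required by the implicit function theorem at an arbitrary $\theta_0 \in B$: (i) $g(h(\theta_0),\theta_0)=0$, by \Cref{prop:proposition-g-root}; (ii) $g$ is continuously differentiable on an open neighborhood of $(h(\theta_0),\theta_0)$ with partial derivatives given by \eqref{part_der}--\eqref{part_der2}, by \Cref{lemma:g-dct-lemma}; and (iii) the partial derivative $\partial g/\partial t$ is nonzero at $(h(\theta_0),\theta_0)$. Item (iii) is the only substantive check. Since $g(\cdot,\theta_0)$ is strictly decreasing by \Cref{prop:proposition-g-root} and continuously differentiable by \Cref{lemma:g-dct-lemma}, the derivative $\partial g/\partial t(t,\theta_0)=-\Exp[l'(-F(\theta_0,\xi)-t)]$ is $\leq 0$. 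If it vanished at $t=h(\theta_0)$, then the nonnegativity of $l'$ (inherited from $l$ being increasing) would force $l'(-F(\theta_0,\xi)-h(\theta_0))=0$ almost surely; continuity of the partial derivative in $t$ would then propagate this to a neighborhood of $h(\theta_0)$, contradicting the strict monotonicity of $g(\cdot,\theta_0)$ in $t$.

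Next I would invoke the implicit function theorem at each $\theta_0 \in B$ to obtain a neighborhood $U$ of $\theta_0$ and a continuously differentiable $\tilde h : U \to \Rel$ with $\tilde h(\theta_0)=h(\theta_0)$ and $g(\tilde h(\theta),\theta)=0$ on $U$. Uniqueness of the root of $g(\cdot,\theta)$ from \Cref{prop:proposition-g-root} forces $\tilde h \equiv h$ on $U$, so $h$ inherits continuous differentiability on $U$. Because $\theta_0 \in B$ was arbitrary, $h \in C^1(B)$, and the implicit function formula yields
\begin{equation*}
\nabla h(\theta) = -\frac{\partial g(h(\theta),\theta)/\partial \theta}{\partial g(h(\theta),\theta)/\partial t}.
\end{equation*}
Substituting the expressions from \eqref{part_der}--\eqref{part_der2} gives exactly \eqref{h_grad_0}.

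The main obstacle I anticipate is the non-vanishing check in item (iii): the theorem statement does not assume $l'>0$ anywhere specific, so one has to extract this strict positivity of $\Exp[l'(-F(\theta,\xi)-h(\theta))]$ from the strict monotonicity of $g(\cdot,\theta)$ and the continuity of its derivative. Everything else is essentially bookkeeping once the implicit function theorem is in hand.
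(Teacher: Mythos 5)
Your overall route is exactly the paper's: check the hypotheses of the implicit function theorem for $g$ at $(h(\theta_0),\theta_0)$, using \Cref{prop:proposition-g-root} for the identity $g(h(\theta),\theta)=0$ and \Cref{lemma:g-dct-lemma} for continuous differentiability of $g$ and the formulas \eqref{part_der}--\eqref{part_der2}, then read off $\nabla h$ from the implicit-function formula. The paper's proof consists of precisely these bookkeeping steps (citing Rudin's Theorem 9.28), and, like you, it treats the non-vanishing of $\partial g/\partial t$ at $(h(\theta),\theta)$ as the remaining ingredient, asserting $\partial g/\partial t<0$ as a consequence of \Cref{prop:proposition-g-root}.

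The one step of yours that would fail is the justification of that non-vanishing. A strictly decreasing $C^1$ function can have derivative zero at a point (consider $t\mapsto -t^3$ at the origin), so strict monotonicity of $g(\cdot,\theta_0)$ does not by itself force $\partial g/\partial t\neq 0$ at $t=h(\theta_0)$; and continuity of $\partial g(\cdot,\theta_0)/\partial t$ only yields that $\Exp[l'(-F(\theta_0,\xi)-t)]$ is \emph{small} for $t$ near $h(\theta_0)$, not that it vanishes identically on a neighborhood, so no contradiction with strict monotonicity is produced. The "exact vanishing propagates by continuity" move is the gap. Under only the stated hypotheses the denominator can in fact degenerate --- e.g.\ $l(x)=x^3$ (strictly increasing, $C^1$), $\lambda=0$, and $F(\theta,\xi)$ a point mass, for which $h(\theta)$ is the unique root yet $\Exp[l'(-F(\theta,\xi)-h(\theta))]=0$ --- so the positivity has to come from an additional quantitative hypothesis rather than from monotonicity alone. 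The natural one is \Cref{assumption:l-b1-increasing}, which gives $l'\geq b_1>0$ and hence a denominator bounded below by $b_1$; this is exactly what the paper invokes in all downstream results that use the gradient formula. With that (or any condition guaranteeing $\Exp[l'(-F(\theta,\xi)-h(\theta))]>0$) in hand, the rest of your argument, including the patching of the local implicit functions via uniqueness of the root, is correct and matches the paper.
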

\begin{myproof}
From \Cref{lemma:g-dct-lemma}, we have that (a) $g$ is continuously differentiable on $\Rel \times \BB$. By \Cref{prop:proposition-g-root}, we have (b) $\frac{\partial g}{\partial t}  = -\Exp\left[l'\left(-F\left(\theta,\xi\right)-t\right) \right] < 0$, and c) $g(\srth{}, \theta) = 0, \forall \theta \in \BB$. We now invoke the implicit function theorem. Using (a), (b), and (c), we invoke Theorem 9.28 of \cite{real-analysis-rudin} to infer that, for every $\hat{\theta} \in \BB$, we have
\begin{equation*}
    \frac{\partial \srth{}}{\partial \theta_i}\Big|_{\hat{\theta}} = -\frac{\partial g(t,\theta) / \partial \theta_i}{\partial g(t,\theta) / \partial t}\Big|_{(h(\hat{\theta}), \hat{\theta})} \quad \forall i \in \{1,2,\ldots,d\}.
\end{equation*} 
By \Cref{lemma:g-dct-lemma}, the partial derivatives above are expressed as
\begin{equation*}%\label{h_partial}
\frac{\partial \srth{}}{\partial \theta_i} = \frac{-\Exp\Big[ l'(-F(\theta,\xi)-\srth{}) \frac{\partial F(\theta,\xi)}{\partial \theta_i} \Big]}{\Exp\big[ l'(-F(\theta,\xi)-\srth{}) \big]}, \forall i.%=1\text{ to } d.%\in \{1,2,\ldots,d\}.
\end{equation*}
and the claim of the theorem follows.
\end{myproof}
A scalar version of the above theorem has been proved in earlier works \cite[Theorem 3]{zhaolin2016ubsrest}. The expression for the derivative there is a special case of \cref{h_grad_0} with $d=1$, and is obtained for the weaker case of bounded random variables, whereas we cover possibly unbounded random variables.  

We obtained the expression for the gradient of $\srt$, and showed that the gradient is continuous on $\BB$. It is natural to examine properties such as Lipschitz continuity and smoothness of $\srt$, considering the fact that in the risk-neutral setting, these properties have been extensively used to establish convergence guarantees for gradient-based algorithms. In the following, we show that these properties also hold for the UBSR objective.  
\begin{lemmma}\label{lemma:h-Lipschitz}
    Suppose the assumptions of \Cref{theorem:ubsr-gradient}, and \cref{assumption:l-b1-increasing,assumption:F-bounded-higher-moment,assumption:loss-fn-smooth} hold. Then $\srt$ is $K_0$-Lipschitz, i.e., 
    \begin{align*}
    |\srth{1} - \srth{2}| \leq K_0\norm{\theta_1 - \theta_2}_2, \forall \theta_1,\theta_2 \in \BB,
\end{align*}
where $K_0=M \sqrt{\frac{S^2 \sigma_0^2}{b^2}+1}$, $\sigma_0$ and $M$ are as in \Cref{assumption:F-bounded-higher-moment}, and $b$ and $S$ are as in \cref{assumption:l-b1-increasing,assumption:loss-fn-smooth} respectively.
\end{lemmma}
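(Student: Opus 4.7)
The plan is to apply the explicit gradient formula from \Cref{theorem:ubsr-gradient}, derive a uniform pointwise bound $\norm{\nabla h(\theta)}_2 \le K_0$ on $\B$, and then conclude Lipschitz continuity by integrating $\nabla h$ along the segment joining $\theta_1$ and $\theta_2$.

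For the pointwise bound, start from
\begin{equation*}
\nabla h(\theta) \;=\; -\frac{\Exp\bigl[l'(-F(\theta,\xi)-h(\theta))\,\nabla F(\theta,\xi)\bigr]}{\Exp\bigl[l'(-F(\theta,\xi)-h(\theta))\bigr]},
\end{equation*}
and bound the numerator and denominator separately. For the denominator, the growth condition in \Cref{assumption:l-b1-increasing} ($l(y)-l(x) > b_1(y-x)$ for $y>x$), combined with the differentiability of $l$ guaranteed by $S_1$-smoothness (\Cref{assumption:loss-fn-smooth}), upgrades to the pointwise lower bound $l'(x) \ge b_1$ by passing to the limit in the difference quotient; hence $\Exp[l'(-F-h(\theta))] \ge b_1$. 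For the numerator, use non-negativity of $l'$ together with Jensen's inequality and then Cauchy--Schwarz to get
\begin{equation*}
\norm{\Exp[l'(-F-h(\theta))\nabla F]}_2 \;\le\; \Exp\bigl[l'(-F-h(\theta))\norm{\nabla F}_2\bigr] \;\le\; \norm{l'(-F-h(\theta))}_{\Leb{2}}\,\norm{\nabla F}_{\Leb{2}} \;\le\; M_0\,\norm{l'(-F-h(\theta))}_{\Leb{2}},
\end{equation*}
using \Cref{assumption:F-bounded-higher-moment} at the last step. Now decompose $\norm{l'(-F-h(\theta))}_{\Leb{2}}^2 = \mathrm{Var}(l'(-F-h(\theta))) + (\Exp[l'(-F-h(\theta))])^2$, and use that $l'$ is $S_1$-Lipschitz (again from \Cref{assumption:loss-fn-smooth}) to conclude $\mathrm{Var}(l'(-F-h(\theta))) \le S_1^2\,\mathrm{Var}(F(\theta,\xi)) \le S_1^2\sigma_0^2$. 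Writing $A = \Exp[l'(-F-h(\theta))] \ge b_1$, combining these yields
\begin{equation*}
\norm{\nabla h(\theta)}_2 \;\le\; \frac{M_0\sqrt{S_1^2\sigma_0^2 + A^2}}{A} \;=\; M_0\sqrt{\frac{S_1^2\sigma_0^2}{A^2}+1} \;\le\; M_0\sqrt{\frac{S_1^2\sigma_0^2}{b_1^2}+1} \;=\; K_0,
\end{equation*}
where the last inequality uses monotonicity of $A \mapsto S_1^2\sigma_0^2/A^2$ for $A \ge b_1$.

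No step in this argument is especially hard; the most delicate point is recognizing the variance-plus-mean decomposition of $\Exp[(l'(-F-h))^2]$ as the right way to simultaneously exploit $S_1$-smoothness of $l$ and the denominator lower bound $b_1$, since the hypotheses do not furnish any a priori upper bound on $l'$ itself. Once $\norm{\nabla h(\theta)}_2 \le K_0$ is in hand on $\B$ (where continuous differentiability of $h$ comes from \Cref{theorem:ubsr-gradient}), the Lipschitz conclusion follows from the fundamental theorem of calculus applied to $h$ along the segment $s \mapsto \theta_2 + s(\theta_1-\theta_2)$, which lies in $\B$ under the usual convexity assumption on the parameter set.
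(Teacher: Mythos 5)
Your proof is correct and takes essentially the same route as the paper's: both bound $\norm{\nabla h(\theta)}_2$ via the gradient formula of \Cref{theorem:ubsr-gradient}, a Cauchy--Schwarz step reducing the numerator to $M_0\norm{l'(-F-h(\theta))}_{\Leb{2}}$, the decomposition $\Exp[(l')^2]=\mathrm{Var}(l')+(\Exp[l'])^2$ with $\mathrm{Var}(l')\le S_1^2\sigma_0^2$ from $S_1$-Lipschitzness of $l'$ and $\Exp[l']\ge b_1$, yielding $K_0=M_0\sqrt{S_1^2\sigma_0^2/b_1^2+1}$ (your $b_1^2$ is the correct exponent; the paper's $b_1^1$ is a typo). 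The only cosmetic difference is the concluding step: you integrate $\nabla h$ along the segment (needing convexity of the parameter set), whereas the paper invokes boundedness of the continuous gradient on the compact set $\Theta$.
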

\begin{myproof}The assumptions of \Cref{theorem:ubsr-gradient} are satisfied. Then by \Cref{theorem:ubsr-gradient}, $\srt$ is continuously differentiable, and for every $\theta \in \BB$, the gradient is bounded as follows.
    \begin{align*}
        \norm{\nabla \srth{}}_2 &=  \norm{\frac{-\Exp \Big[ l'(-F(\theta,\xi)-\srth{}) \nabla F(\theta,\xi) \Big]}{\Exp \big[ l'(-F(\theta,\xi)-\srth{}) \big]}}_2 \\
        &\leq \sqrt{\frac{\Exp\left[l'\left(-F(\theta,\xi)-\srth{}^2\right)\right]}{(\Exp\left[l'\left(-F(\theta,\xi)-\srth{}\right)\right])^2}} M_0.
    \end{align*}
    The above inequality follows from \Cref{lemma:uv-2-norm} in the supplementary with $\mathbf{V} = \nabla F(\theta,\xi)$ and $U=\frac{l'\left( -F(\theta,\xi)-\srth{}\right)}{\Exp\left[l'\left(-F(\theta,\xi)-\srth{}\right)\right]}$, and subsequently applying \Cref{assumption:F-bounded-higher-moment}. Then we have
    \begin{align*}
        \norm{\nabla \srth{}}_2 &\leq M_0 \sqrt{\frac{\textrm{Var}(l'(-F(\theta,\xi)-\srth{})}{(\Exp\left[l'\left(-F(\theta,\xi)-\srth{}\right)\right])^2}+1} \\
        &\leq M_0 \sqrt{\frac{S^2 \sigma_0^2}{b^2}+1} \leq M_0 \left(\frac{S \sigma_0}{b}+1\right),
    \end{align*}
    where the last inequality follows from \Cref{lemma:lip-function-variance-bound} in the supplementary and \cref{assumption:l-b1-increasing,assumption:F-bounded-higher-moment,assumption:loss-fn-smooth}. Since $\nabla \srt: \Theta \to \Rel^d$ is continuous and bounded above by $K_0 \triangleq M_0 \sqrt{\frac{S^2 \sigma_0^2}{b^2}+1}$, and the set $\Theta$ is compact, we conclude that $\srt$ is $K_0$-Lipschitz.
\end{myproof}

Next, we show that $\srt$ is smooth when the objective function $F$ satisfies \Cref{as:lp_moments_F_dF} and the loss function $l$ satisfies the smoothness assumption in \Cref{assumption:loss-fn-smooth}. 
\begin{lemmma}\label{lemma:h-smooth}
    Suppose the loss function $l$ is convex and continuously differentiable, and \cref{assumption:sr-base-assumption,as:lp_moments_F_dF} are satisfied. Then $\srt$ is $L$-smooth, where $L$ is as defined in \Cref{as:lp_moments_F_dF}.
\end{lemmma}
\begin{myproof}
Let $\theta_1,\theta_2 \in \Theta$. By \Cref{prop:proposition-g-root}, we have $g(\srth{1}, \theta_1) = g(\srth{2}, \theta_2) = 0$. Therefore, we have
\begin{align*}
    0 &= \Exp\left[l\left(-F\left(\theta_1,\xi\right)) - \srth{1}\right) - l\left(-F\left(\theta_2,\xi\right)) - \srth{2}\right) \right] \\
    \numberthis \label{eq:sr-smoothness-t1}
    &\leq \Exp\left[l'\left(-F\left(\theta_1,\xi\right)) - \srth{1}\right) \left(F\left(\theta_2,\xi\right)) -F\left(\theta_1,\xi\right) + \srth{2} - \srth{1}\right)\right].
\end{align*}
The above inequality follows because $l$ is convex and continuously differentiable. Then, by the smoothness assumption on $F$, we have $F\left(\theta_2,\xi\right)) -F\left(\theta_1,\xi\right) \leq \nabla F(\theta_1,\xi)^T(\theta_2-\theta_1) + \frac{L}{2}\norm{\theta_1 - \theta_2}^2$. Applying this inequality to \cref{eq:sr-smoothness-t1}, we have
\begin{align*}
    0 &\leq \Exp\left[l'\left(-F\left(\theta_1,\xi\right)) - \srth{1}\right)\right]\left(\srth{2} - \srth{1} + \frac{L}{2}\norm{\theta_1 - \theta_2}^2\right) \\
    &+\Exp\left[l'\left(-F\left(\theta_1,\xi\right)) - \srth{1}\right)F(\theta_1,\xi)\right]^T(\theta_2-\theta_1) \\
    \implies& -\srth{1} - (-\srth{2}) \leq -\nabla \srth{1}^T(\theta_2-\theta_1) + \frac{L}{2}\norm{\theta_1 - \theta_2}^2.
\end{align*}
The above inequality implies that $\srth{\cdot}$ is $L$-smooth, and the claim of the lemma follows. 
\end{myproof}

For cases when the loss function is not convex, one can still establish a smoothness result on $\srth{}$ if \cref{assumption:sr-base-assumption,assumption:loss-fn-smooth,as:lp_moments_F_dF,assumption:F-bounded-higher-moment} are satisfied. We make this claim precise in the following result. 

\begin{lemmma}\label{lemma:sr-smooth-non-convex-loss-fn}
    Suppose the loss function is $S$-smooth, and the objective function is $L$-smooth and $M$-Lipschitz. Suppose the assumptions of \Cref{theorem:ubsr-gradient} are satisfied. Then $\srth{\cdot}$ is $K$-smooth, where K=$LSM$.
\end{lemmma}
\begin{myproof}
    Let $\theta_1,\theta_2 \in \Theta$. By \Cref{prop:proposition-g-root}, we have $g(\srth{1}, \theta_1) = g(\srth{2}, \theta_2) = 0$. Therefore, we have
\begin{align*}
    &0 = \Exp\left[l\left(-F\left(\theta_1,\xi\right)) - \srth{1}\right) - l\left(-F\left(\theta_2,\xi\right)) - \srth{2}\right) \right] \\
    &\leq \Exp\left[l'\left(-F\left(\theta_2,\xi\right)) - \srth{2}\right)\left(F\left(\theta_2,\xi\right)) -F\left(\theta_1,\xi\right) + \srth{2} - \srth{1}\right) \right] \\
    &+ \Exp\left[\frac{S}{2} \norm{F\left(\theta_2,\xi\right)) -F\left(\theta_1,\xi\right) + \srth{2} - \srth{1}}_2^2\right] \\
    \numberthis \label{eq:sr-smoothness-t2}
    &\leq \Exp\left[l'\left(-F\left(\theta_2,\xi\right)) - \srth{2}\right)\left(F\left(\theta_2,\xi\right)) -F\left(\theta_1,\xi\right) + \srth{2} - \srth{1}\right)\right] \\
    &+ \frac{S(M+K_0)}{2}\norm{\theta_1 - \theta_2}_2^2,
\end{align*}
where the first inequality follows because $l$ is $S$-smooth, while the second inequality follows because $F(\cdot,\xi)$ and and $\srt$ are $M$ and $K_0$ Lipschitz respectively. Then, by the smoothness assumption on $-F$, we have $ -F\left(\theta_1,\xi\right) - (-F\left(\theta_2,\xi\right)) \leq - \nabla F(\theta_2,\xi)^T(\theta_1-\theta_2) + \frac{L}{2}\norm{\theta_1 - \theta_2}^2$. Applying this inequality to \cref{eq:sr-smoothness-t2}, we have
\begin{align*}
    &0 \leq \\
    &\Exp\left[l'\left(-F\left(\theta_2,\xi\right)) - \srth{2}\right)\left(- \nabla F(\theta_2,\xi)^T(\theta_1-\theta_2) + \frac{L}{2}\norm{\theta_1 - \theta_2}^2 + \srth{2} - \srth{1}\right)\right] \\
    &+ \frac{S(M+K_0)}{2}\norm{\theta_1 - \theta_2}_2^2 \\
    & = \frac{S(M+K_0)}{2}\norm{\theta_1 - \theta_2}_2^2 -\Exp\left[l'\left(-F\left(\theta_2,\xi\right)) - \srth{2}\right)\left(\nabla F(\theta_2,\xi)\right)\right]^T\left(\theta_1 - \theta_2\right) \\
    &+ \Exp\left[l'\left(-F\left(\theta_2,\xi\right)- \srth{2}\right)\right]\left(\frac{L}{2}\norm{\theta_1 - \theta_2}^2 + \srth{2} - \srth{1}\right).
\end{align*}
Dividing by $\Exp\left[l'\left(-F\left(\theta_2,\xi\right)- \srth{2}\right)\right]$ and rearranging, we have
\begin{align*}
    \srth{1} - \srth{2} \leq \nabla \srth{2}^T(\theta_1-\theta_2) + \frac{L}{2}\norm{\theta_1 - \theta_2}^2 + \frac{S(M+K_0)}{2 b}\norm{\theta_1 - \theta_2}_2^2.
\end{align*}
Comparing the above inequality with Theorem 2.1.5 of \cite{nesterov_introductory_2004}, we conclude that $\srt$ is $L+\frac{S(M+K_0)}{b}$ smooth and the claim of the lemma follows. 
\end{myproof}

\subsection{UBSR gradient estimation.}\label{ss:ubsr_grad_estimator}
Obtaining unbiased estimates of the gradient in (\ref{h_grad_0}) is difficult, not only because it is a ratio of expectations, but also because the gradient expression in (\ref{h_grad_0}) features $\srth{}$ and the latter quantity needs to be estimated from samples.
Thus, one can only obtain a biased UBSR gradient estimator by replacing the expectations in (\ref{h_grad_0}) with their sample averages, and using UBSR estimates in place of the true UBSR. We present such an estimator for the UBSR gradient and subsequently show that the estimator's MSE is \order{1/m}, where \order{m} samples are used to form the estimator.

Fix a $\theta \in \Theta$. Our gradient estimator for $\nabla \srth{}$ is constructed from two batches of samples. The first batch contains $m_1$ i.i.d. samples of $F(\theta,\xi)$. The second batch is sampled independently of the first, and contains i.i.d. joint samples from $F(\theta,\xi)$ and $\nabla F(\theta,\xi)$. The double sampling from $F(\theta,\xi)$ to estimate $\nabla \srth{}$ is necessary to avoid cross terms and has been used previously by \cite{Hegde2024}.  From the gradient expression in (\ref{h_grad_0}), it is apparent that an estimate of $\srth{}$ is required to form an estimate for $\nabla \srth{}$. Using the first batch of $m_1$ samples, we estimate $\srth{}$ using the scheme presented in \Cref{sec:estimation}. Precisely, using the following shorthand notation for $m_1$ samples from $F(\theta,\xi):\mathbf{\hat{Z}}_\theta = \left[ \mathrm{\hat{Z}}_\theta^1, \mathrm{\hat{Z}}_\theta^2,\ldots, \mathrm{\hat{Z}}_\theta^{m_1} \right]$, we form the UBSR estimator $\srmk{1}{\mathbf{\hat{Z}}_\theta}$, where $\srmk{1}{\cdot}$ is as defined in \cref{eq:sr-m-definition}. Next, we define the following shorthand notation for the $m_2$ joint-samples from $F(\theta,\xi)$ and $\nabla F(\theta,\xi):\mathbf{Z}_\theta = \left[ \mathrm{Z}_\theta^1, \mathrm{Z}_\theta^2,\ldots, \mathrm{Z}_\theta^{m_2} \right]$ and $\mathbf{V}_\theta = \left[ \mathrm{V}_\theta^1, \mathrm{V}_\theta^2,\ldots, \mathrm{V}_\theta^{m_2} \right]$, respectively. Finally, we define the function $\J:\Rel^{m_1} \times \Rel^{m_2} \times \Rel^{m_2 \times d} \to \Rel^d$ as follows:
\begin{align}\label{eq:J-definition}
     \jt &\triangleq \frac{-\sum_{j=1}^{m_2} \Big[ l'(-\mathrm{z}^j_\theta-\srmk{1}{\mathbf{\hat{z}_\theta}})\cdot \mathbf{v}^j_\theta\Big]}{\sum_{j=1}^{m_2} \big[ l'(-\mathrm{z}^j_\theta-\srmk{1}{\mathbf{\hat{z}_\theta}}) \big]}.
\end{align}
%Given a $\theta \in \Theta$, the functions $\srmk{1}{\cdot},J^{m_1,m_2}_\theta(\cdot,\cdot,\cdot)$ are used to estimate $\srth{},\nabla \srth{}$ respectively. 
% To summarize, $\mathbf{\hat{Z}}_\theta$ is an $m_1$-dimensional random vector such that each $\hat{Z}^j_\theta$ is an i.i.d. copy of $F(\theta,\xi)$. $\mathbf{Z}_\theta$ and $\mathbf{V}_\theta$ are $m_2$-dimensional random vector and $m_2 \times d$ dimensional random matrix respectively, such that each pair $Z^j_\theta,{V}^j_\theta$ is jointly-sampled, and i.i.d. copy of $F(\theta,\xi),\nabla F(\theta,\xi)$, respectively. Furthermore, $\mathbf{Z}_\theta$ and $\mathbf{V}_\theta$ are sampled independently from $\mathbf{\hat{Z}}_\theta$. 
Then $\srmk{1}{\mathbf{\hat{Z}_\theta}}$ and $\Jt$ are our proposed estimators for $\srth{}$ and $\nabla \srth{}$ respectively. The sample sizes $m_1$ and $m_2$ control the bias of the gradient estimator, and we now present the MAE and MSE bounds on the gradient estimator $\J$, as a function of $m_1$ and $m_2$. To derive these bounds on $\J$, we utilize the bounds on the UBSR estimate $\srmk{1}{\cdot}$ given in \Cref{lemma:ubsr-saa-bounds}. For the sake of readability, we first rephrase a variance assumption made in \Cref{lemma:ubsr-saa-bounds}.%, and subsequently, we restate \Cref{lemma:ubsr-saa-bounds} as \Cref{lemma:ubsr-saa-bounds-opt} below.
\begin{assumption}\label{as:variance-bound-only-l} 
There exist $\sigma_1>0$ such that Var$\left(l(-F(\theta,\xi)-\srth{})\right) \leq \sigma_1^2$, for all $\theta \in \Theta$.
\end{assumption}

We now present error bounds on the gradient estimator $\J$.

\begin{lemmma}\label{lemma:ubsr-grad-est-bound}
    Suppose \cref{assumption:l-b1-increasing,as:variance-bound-only-l,assumption:sr-base-assumption,assumption:F-bounded-higher-moment,as:variance-bound-l-F,assumption:loss-fn-smooth} are satisfied. Then, for every $\theta \in \Theta$ and every $m_1,m_2 \in \N$, the gradient estimator $\hat{J} \triangleq \Jt$ defined in \cref{eq:J-definition} satisfies 
    \begin{align*}
        \Exp \left[ \norm{\hat{J} - \nabla \srth{}}_2 \right] \leq \frac{D_1}{\sqrt{m^*}}, 
        \;\;\textrm{ and } \;\;
        \Exp \left [\norm{\hat{J} - \nabla \srth{}}_2^2 \right] \leq \frac{D_2}{m^*},
    \end{align*}
    where $D_1 = \frac{1}{b_1} \left[ M_0S_1\sigma_0 \left(\frac{S_1\sigma_0}{b_1} + 1\right) + 11 \log(e^2(d+1))T_1 \right] $ $+ \frac{M_0S_1\sigma_1}{b_1^2}\left(\frac{S_1\sigma_0}{b_1}+2\right)$, $D_2 =\frac{4M_0^2S_1^2\sigma_1^2}{b_1^4}\left(\left(\frac{S_1\sigma_0^2}{b_1}+2\right)^2 \right)$ $+\frac{4}{b_1^2} \left[M_0^2S_1^2\sigma_0^2\left(\frac{S_1\sigma_0}{b_1}+1\right)^2 + 72 \log^2(e^2(d+1))T_1^2 \right]$, and $m^* \triangleq \min\{m_1,m_2\}$.
\end{lemmma}
\begin{myproof}
Let $q \in \{1,2\}$, and let $A,A',A_m$ and $B,B',B_m$ be defined as follows:
    \begin{align*}
        &A \triangleq \A,\,\, 
        B \triangleq \B,\\
        &A' \triangleq \Ap, \,\,
        B' \triangleq \Bp,\\
        &A_m \triangleq \Am, \,\,
        B_m \triangleq \Bm. 
    \end{align*}
    Note that by \Cref{assumption:l-b1-increasing}, $B_m,B' \geq b$ a.s. and $B\geq b$.
    
    We divide the proof into three steps. We derive intermediate results in the first two steps, and use them in the third step to derive bound on the gradient estimator $\Jt$. 
    
    \text{Step 1.} (A bound on the $q^{th}$ moment of the $2$-norm of the random vectors $A_m-A'$ and $B_m-B'$) 
    \begin{align}
        \nonumber &\Exp \left[ \norm{A_m - A'}_2^q \right] =  \Exp \left[ \left\lVert \sum_{j=1}^{m_2} \left[l'(-\mathrm{Z}^j_\theta-\srmk{1}{\mathbf{\hat{Z}_\theta}}) - l'(-\mathrm{Z}^j_\theta-\srth{}) \right] \mathbf{V}^j_\theta\right]\Bigg\rVert_2^q\right] \cdot \frac{1}{m_2^q}\\
        \label{eq:mq-lemma}
        &\leq \Exp \left[ \sum_{j=1}^{m_2} \bigg\lVert \left[ l'(-\mathrm{Z}^j_\theta-\srmk{1}{\mathbf{\hat{Z}_\theta}}) - l'(-\mathrm{Z}^j_\theta-\srth{})\right] \mathbf{V}^j_\theta\bigg\rVert_2^q \right] \frac{m_2^{q-1}}{m_2^q} \\
        \nonumber &= \frac{1}{m_2}\Exp \left[ \sum_{j=1}^{m_2} \left|l'(-\mathrm{Z}^j_\theta-\srmk{1}{\mathbf{\hat{Z}_\theta}})- l'(-\mathrm{Z}^j_\theta-\srth{}) \right|^q \norm{\mathbf{V}^j_\theta}_2^q\right] \\
        \nonumber &\leq \frac{S^q}{m_2}\Exp \left[\sum_{j=1}^{m_2}  \left|\srmk{1}{\mathbf{\hat{Z}_\theta}}) - \srth{} \right|^q \norm{\mathbf{V}^j_\theta}_2^q\right] \\
        \label{eq:AmA-prime}
        &= \frac{S^q}{m_2} \left[\sum_{j=1}^{m_2}  \Exp \left[ \left|\srmk{1}{\mathbf{\hat{Z}_\theta}}) -\srth{} \right|^q\right] \Exp \left[\norm{\mathbf{V}^j_\theta}_2^q \right] \right] \leq  M^qS^q \Exp \left[ \left| \srmk{1}{\mathbf{\hat{Z}_\theta}})-\srth{} \right|^q \right],
    \end{align}
    where the inequality in \cref{eq:mq-lemma} follows from \Cref{lemma-normed-sum} in the supplementary material. The second inequality follows from \Cref{assumption:loss-fn-smooth}, which implies that $l'$ is $S$-Lipschitz. The equality in \cref{eq:AmA-prime} uses the independence of $\Zh_\theta$ and $\mathbf{V}_\theta$, while the final inequality follows from \Cref{assumption:F-bounded-higher-moment} after utilizing the assumption that each $\mathbf{V}_\theta^j$ is an identical copy $\nabla F(\theta,\xi)$. Using similar arguments, we have
    \begin{align} \label{eq:BmB-prime}
        \Exp &\left[ \left| B_m - B' \right|^q \right] \leq S^q \Exp \big[ \big| \srmk{1}{\mathbf{\hat{Z}_\theta}})-\srth{} \big|^q \big].
    \end{align}
    \text{Step 2.} (Bounds on $A'-A$ and $B'-B$) \\
    Let $\mathbf{Y}_j \triangleq l'(-\mathrm{Z}^j_\theta-\srth{})\mathbf{V}^j_\theta - \Exp\left[l'(-F(\theta,\xi)-\srth{})\nabla F(\theta,\xi)\right]$ for every $j \in \{1,2,\ldots,d\}$. Letting $\mathbf{W}=\sum_{j=1}^{m_2} \mathbf{Y}_j$, we have 
    \begin{align*}
         \Exp &\left[ \norm{A' - A}_2^q \right] 
         = \frac{1}{m_2^q} \Exp \left[ \norm{\mathbf{W}}_2^q\right] \leq \frac{1}{m_2^q} \left[\Exp \left[ \norm{\mathbf{W}}_2^2\right]\right]^\frac{q}{2},
    \end{align*}
    where the last inequality is Lyapunov's inequality. Since each jointly-sampled pair $\langle \mathrm{Z}^j_\theta,\mathbf{V}^j_\theta\rangle$ is i.i.d., we note that $\mathbf{Y}_j$ is a zero-mean random vector and the assumptions of Theorem 1 of \cite{tropp-springer-2016} are satisfied. By Theorem 1 of \cite{tropp-springer-2016}, we have the following for $q=1$.
    \begin{align*}
        &\Exp \left[ \norm{A' - A}_2 \right] \\
        &\leq \frac{\sqrt{8\log(e^2(d+1)) \max\left\{ U, V \right\}}}{m_2}  + \frac{1}{m_2}\left[8\log(e^2(d+1))\sqrt{\Exp\left[\max_j \norm{Y_j}_2^2\right]}\right], 
    \end{align*}
    where $U= \norm{\sum_j \Exp\left[Y_j Y_j^{\mathrm{T}}\right]}_2$ and $V=\sum_j \Exp\left[\norm{Y_j}^2_2\right]$. Next, we note that $ U, V$, and the term inside the second square-root are all bounded above by $\sum_j \Exp\left[\norm{{\mathbf{Y}_j}}_2^2\right]$. We conclude this using simple matrix algebra and the independence of $\mathbf{Y}_j$'s. Then, we have 
    \begin{align*}
       \Exp\left[ \norm{A' - A}_2 \right] &\leq \frac{1}{m_2}\left( \sqrt{8\log(e^2(d+1))}+8\log(e^2(d+1))\right) \sqrt{\sum_j \Exp\left[\norm{{Y_j}}_2^2\right]} \\
       \numberthis \label{eq:AA-prime-1}
       &\leq \frac{11 \log(e^2(d+1))\sqrt{m_2 \Exp\left[\norm{{Y_1}}_2^2\right]}}{m_2} \leq  \frac{11 \log(e^2(d+1)) T}{\sqrt{m_2}}.
    \end{align*}
    The second inequality above follows because ${Y_j}$'s are identical, while the last inequality follows from \Cref{as:variance-bound-l-F}. Similarly, by Theorem 1 of \cite{tropp-springer-2016} and for $q=2$, we have
    \begin{equation}\label{eq:AA-prime-2}
        \Exp \left[ \norm{A' - A}_2^2 \right] \leq \frac{72 \log^2(e^2(d+1)) T^2}{m_2}.
    \end{equation}
    Let $Y_j \triangleq l'(-\mathrm{Z}^j-\srth{})$. Then $B = \Exp\left[Y_j\right],\forall j=1$ to $m_2$. Then, we have
    \begin{align*}
        \Exp \left[ \left| B' - B \right|^q \right] &\!=\! \frac{\Exp \left[ \left| \sum\limits_{j=1}^{m_2} Y_j - \Exp\left[\sum\limits_{j=1}^{m_2} Y_j\right]\right|^q\right]}{m_2} 
        \leq \frac{\left[\Exp \left[ \left| \sum\limits_{j=1}^{m_2} Y_j - \Exp\left[\sum\limits_{j=1}^{m_2} Y_j\right]\right|^2\right]\right]^{\frac{q}{2}}}{m_2^q}  \\
        \numberthis \label{eq:BB-prime}
        &=\frac{1}{m_2^q}\left[\text{Var}\left(\sum_{j=1}^{m_2} Y_j \right)\right]^{q/2} =\frac{1}{m_2^q}\left[m_2 \text{Var}\left(Y_1 \right)\right]^{q/2} \leq \frac{S^q\sigma_0^q}{m_2^{q/2}}.
    \end{align*}
    The first inequality is Lyapunov's inequality. The last equality follows by combining two facts: a) variance of the sum of independent random variables equals the sum of their variances, and b) each $Y_j$ is i.i.d. The last inequality follows from \Cref{lemma:lip-function-variance-bound} in the supplementary material.
    
\textbf{Step 3.} (Main bound on the error term: $\Jt - \nabla \srth{}$) \\
Since the assumptions of \Cref{theorem:ubsr-gradient} hold, $\nabla h$ exists and is well-defined. Then, we have
    \begin{align*}
        &\norm{ \Jt - \nabla \srth{}}_2^q \\
        &= \norm{ B_m^{-1}A_m - B^{-1}A}_2^q = \norm{(BB_m)^{-1}(BA_m - AB_m)}_2^q \\
        &\leq (BB_m)^{-q}\norm{(BA_m - AB_m)}_2^q = (BB_m)^{-q} \norm{B(A_m - A) + A(B - B_m)}^q_2 \\
        &\leq \frac{q}{B_m^{q}}\left(\norm{A_m - A}_2^q + B^{-q} \norm{A(B - B_m)}^q_2\right) \leq \frac{q}{b^q}\left[\norm{A_m - A}_2^q + |B - B_m|^q B^{-q} \norm{A}_2^q\right] \\
        \numberthis \label{eq:tmppp}
        &\leq \frac{q}{b^q}\left[\norm{A_m - A}_2^q + M_0^q \left(\frac{S \sigma_0}{b}+1\right)^q |B - B_m|^q\right]
    \end{align*}
    For the second inequality above, we first use the Minkowski's inequality, followed by the 'power of sum' inequality: $(a_1+a_2)^q \leq q(a_1^q + a_2^q)$, which holds for $a_1 \geq 0,a_2\geq 0$ and $q \in \{1,2\}$. The third inequality follows from \Cref{assumption:l-b1-increasing} and the Cauchy-Schwartz inequality for the Euclidean norm on $\Rel$. For the fourth inequality, we use the following bound.
    \begin{align*}
        B^{-q} &\norm{A}_2^q = B^{-q} \norm{\Exp \left[ l'(-F(\theta,\xi)-\srth{}) \nabla F(\theta,\xi) \right]}_2^q \\
        &\leq B^{-q}\left(\Exp \left[ l'(-F(\theta,\xi)-\srth{})^2\right]\right)^{q/2}\norm{ \nabla F(\theta,\xi)}_{\Leb{2}}^q \\
        &=  B^{-q} \left(\mathrm{Var}\left(l'(-F(\theta,\xi)-\srth{})\right)+B^2\right)^{q/2}\norm{ \nabla F(\theta,\xi)}_{\Leb{2}}^q \\
        &= \left(\frac{\mathrm{Var}\left(l'(-F(\theta,\xi)-\srth{})\right)}{B^2}+1\right)^{q/2}\norm{ \nabla F(\theta,\xi)}_{\Leb{2}}^q \leq M_0^q \left(\frac{S \sigma_0}{b}+1\right)^q,
    \end{align*}
    where the first inequality follows from \Cref{lemma:uv-2-norm}. The last inequality follows from \cref{assumption:l-b1-increasing,assumption:F-bounded-higher-moment,lemma:lip-function-variance-bound}. Precisely, for \Cref{lemma:lip-function-variance-bound}, we use \Cref{assumption:F-bounded-higher-moment} and the fact that $l'$ is $S$-Lipschitz. Then, from (\ref{eq:tmppp}), we have
    \begin{align*}
          &\norm{ \Jt - \nabla \srth{}}_2^q \leq \frac{q^2}{b^q} \left[ \norm{A_m - A'}_2^q + \norm{A' - A}_2^q \right] \\
          \numberthis \label{eq_pq}
          &+ \frac{q^2M^q}{b^q}\left(\frac{S_1 \sigma_0}{b}+1\right)^q\left[|B - B'|^q + |B' - B_m|^q\right],
    \end{align*}
    where we add and subtract $A'$ and $B'$ in the two terms of (\ref{eq:tmppp}) respectively, and apply the 'power of sum' inequality given earlier in the proof. Taking expectation on both sides of (\ref{eq_pq}), we have
    \begin{align*}
        &\Exp \left[ \norm{\Jt - \nabla \srth{}}_2^q \right] \leq \frac{q^2}{b^q} \left[ \Exp \norm{A_m - A'}^q_2 +  \Exp \left[\norm{A' - A}_2^q \right] \right] \\
        &+ \frac{q^2  M^q}{b^{q}}\left(\frac{S \sigma_0}{b}+1\right)^q \left[ \Exp\left[|B_m - B'|^q \right] + \Exp\left[|B' - B|^q \right]  \right].
    \end{align*}
    The \cref{assumption:l-b1-increasing,as:variance-bound-only-l} hold, which implies that the assumptions of \Cref{lemma:ubsr-saa-bounds} are satisfied. Using \cref{eq:AmA-prime,eq:BmB-prime,eq:AA-prime-1,eq:BB-prime} with $q=1$, and invoking \Cref{lemma:ubsr-saa-bounds}, we have
    \begin{align*}
       \Exp &\left[ \norm{\Jt - \nabla \srth{}}_2 \right] \\
       &\leq  \frac{MS\sigma_1}{b^2\sqrt{m^*}}\left[\frac{S\sigma_0}{b}+2\right] + \frac{1}{b\sqrt{m^*}} \left[ MS\sigma_0 \left(\frac{S\sigma_0}{b} + 1\right) + 11 \log(e^2(d+1))T \right],
    \end{align*}
    where $m^* \triangleq \min\left(m_1, m_2\right)$.
    Using \cref{eq:AmA-prime,eq:BmB-prime,eq:AA-prime-2,eq:BB-prime} with $q=2$, and invoking \Cref{lemma:ubsr-saa-bounds}, we have
    \begin{align*}
        \Exp &\left[ \norm{ \Jt - \nabla \srth{}}_2^2 \right] \leq \frac{4M^2S^2\sigma_1^2}{b^4 m^*}\left[1 + \left(\frac{S\sigma_0^2}{b}+1\right)^2 \right] \\
        &+ \frac{4}{b^2 m^*} \left[M^2S^2\sigma_0^2\left(\frac{S\sigma_0}{b}+1\right)^2 + 72 \log^2(e^2(d+1))T^2 \right].
    \end{align*} 
\end{myproof}
% \begin{myproof}
%     See \Cref{proof:ubsr_grad_est_bound} for the proof.
% \end{myproof}
\begin{remark}
    In \cite{zhaolin2016ubsrest,Hegde2024}, the authors consider the scalar case for UBSR optimization.   In \cite{zhaolin2016ubsrest}, the authors establish asymptotic consistency of the UBSR derivative estimator, while \cite{Hegde2024} establishes non-asymptotic error bounds for this estimator. In contrast, our result above applies to the multivariate case, and we provide non-asymptotic error bounds for the UBSR gradient estimator. These bounds can be used to infer asymptotic consistency.
\end{remark}

\paragraph{Practical technique for computing gradient of UBSR}
The gradient estimator in \cref{eq:J-definition} contains the UBSR estimate: $\srmk{1}{\mathbf{\hat{z}_\theta}}$ in the numerator as well as the denominator. As discussed in \Cref{sec:estimation}, $\srmk{1}{\mathbf{\hat{z}_\theta}}$ may not be available in closed form and in practice, one may use a modified gradient estimator where $\srmk{1}{\mathbf{\hat{z}_\theta}}$ in \cref{eq:J-definition} is replaced with its approximation $t_{m_1}$, given by the Algorithm \ref{alg:saa_bisect}. The result below shows that the bounds given by \Cref{lemma:ubsr-grad-est-bound} can be readily extended to this modified gradient estimator.
\begin{proposition}\label{prop:grad-est-bound-numerical-I}
    Let $\hat{J} \triangleq \hat{J}^{m,m}(\mathbf{\hat{Z}}_\theta,\mathbf{Z}_\theta,\mathbf{V}_\theta)$ be the gradient estimator constructed by replacing $\srmk{1}{\mathbf{\hat{Z}_\theta}}$ in (\ref{eq:J-definition}) with its approximation $t_m(\Zh_\theta)$ obtained from Algorithm \ref{alg:saa_bisect} using $\delta=\frac{d_1}{\sqrt{m}}$ for some $d_1>0$. Suppose the assumptions of \Cref{lemma:ubsr-grad-est-bound} hold. Then, for every $\theta \in \Theta$ and every $m \in \N$, we have the following error bounds on this gradient estimator:
    \begin{align*}
        \Exp \left[ \norm{\hat{J} - \nabla \srth{}}_2 \right] \leq \frac{\hat{D}_1}{\sqrt{m}}, 
        \;\; \textrm{ and } \;\; 
        \Exp \left [\norm{\hat{J} - \nabla \srth{}}_2^2 \right] \leq \frac{\hat{D}_2}{m},
    \end{align*}
    where $\hat{D}_1 = \frac{1}{b} \left[ M_0S\sigma_0 \left(\frac{S\sigma_0}{b} + 1\right) + 11 \log(e^2(d+1))T \right] + \frac{M_0S(\sigma_1+d_1)}{b^2}\left[\frac{S\sigma_0}{b}+2\right]$  and $\hat{D}_2 =  \frac{4}{b^2} \left[M_0^2S^2\sigma_0^2\left(\frac{S\sigma_0}{b}+1\right)^2 + 72 \log^2(e^2(d+1))T^2 \right] + \frac{8M_0^2S^2(\sigma_1^2+d_1^2)}{b^4}\left[1 + \left(\frac{S\sigma_0^2}{b}+1\right)^2 \right]$.
\end{proposition}
We omit the detailed proof of the proposition as the line of proof mirrors that of \Cref{lemma:ubsr-grad-est-bound}. The modified gradient estimator given by the proposition is constructed by replacing $\srmk{1}{\mathbf{\hat{Z}_\theta}}$ from \cref{eq:J-definition} with $t_m(\Zh_\theta)$ given by \cref{alg:saa_bisect}. One can conclude from the proof of \Cref{lemma:ubsr-grad-est-bound} that because of this replacement, the resulting proof of the proposition differs from that of \Cref{lemma:ubsr-grad-est-bound} by exactly one term. Precisely, instead of bounding the term: $\Exp \left[ \left| \srmk{1}{\mathbf{\hat{Z}_\theta}}-\srth{} \right|^q \right]$ as done in the Step 1 of the proof of \Cref{lemma:ubsr-grad-est-bound}, we now bound the term : $\Exp \left[ \left| t_m(\Zh_\theta) - \srth{} \right|^q \right]$. We bound this term by breaking $t_m(\Zh_\theta) - \srth{}$ into two separate terms: $t_m(\Zh_\theta) - \srmk{1}{\mathbf{\hat{Z}_\theta}}$ and $\srmk{1}{\mathbf{\hat{Z}_\theta}}-\srth{}$. The \Cref{alg:saa_bisect} guarantees that the first term is bounded by $\delta$, while the bound on the second term is obtained by \Cref{lemma:ubsr-saa-bounds}. The claim of the proposition now follows from the proof of \Cref{lemma:ubsr-grad-est-bound}.

\subsection{SG Algorithm for UBSR optimization}
\label{ss:sg-online}
In this section, we propose and analyze a SG algorithm that uses the gradient estimates described in the previous section.
\begin{figure}[!t]
\centering
\begin{algorithm}[H]
\caption{UBSR-SG}\label{alg:ubsr-minimization}
\begin{algorithmic}[1] 
    \REQUIRE $\theta_0 \sim \Theta$, batch sizes $\{m_1^k,m_2^k\}_{k \ge 1}$, step sizes $\{\alpha_k\}_{k \ge 1}$
    \FOR{$k= 1, 2, \ldots, n$}
        \STATE $\mathbf{\hat{Z}}^k_{\theta_{k-1}} \leftarrow m_1^k$ samples from $\sim F(\theta_{k-1},\xi)$.
        \STATE compute $\srt^{m_1^k}(\mathbf{\hat{Z}}^k_{\theta_{k-1}})$ using \cref{eq:sr-m-definition}.
        \STATE $\left( \mathbf{{Z}}^k_{\theta_{k-1}}, \mathbf{{V}}^k_{\theta_{k-1}}\right) \leftarrow m_2^k$ paired samples from $\left( F(\theta_{k-1},\xi), \nabla F(\theta_{k-1},\xi) \right)$.
        \STATE compute $\hat{J}^k = J^{m_1^k,m_2^k}(\mathbf{\hat{Z}}^k_{\theta_{k-1}},\mathbf{Z}^k_{\theta_{k-1}}, \mathbf{V}^k_{\theta_{k-1}})$ using \cref{eq:J-definition} and $\srt^{m_1^k}$.
        \STATE update $\theta_k \leftarrow \Pi_\Theta\left(\theta_{k-1} - \alpha_k \hat{J}_{k}\right)$
    \ENDFOR
    
    \RETURN $\theta_n$
\end{algorithmic}
\end{algorithm}
\end{figure}
% \begin{algorithm}
% \SetKwInOut{Input}{Input}\SetKwInOut{Output}{Output}
% \SetAlgoLined
% \Input{$\theta_0 \sim \Theta$, thresholds $\{\delta_k\}_{k \ge 1}$, batch sizes $\{m_k\}_{k \ge 1}$ and step sizes $\{\alpha_k\}_{k \ge 1}$.}
% \For{$k= 1, 2, \ldots, n$}{
%     sample $\mathbf{\hat{Z}}^k = [\hat{Z}^k_1, \hat{Z}^k_2, \ldots, \hat{Z}^k_{m_k} ]$\; 
%     compute $t^k$ with inputs $\delta_k,\mathbf{\hat{Z}}^k$ to the Algorithm \ref{alg:saa_bisect}\;
%     sample $\mathbf{Z}^k = [Z^k_1, Z^k_2, \ldots, Z^k_{m_k} ]$\; 
%     compute $J^k = \hat{J}_{\theta_{k-1}}^{m_k}(\mathbf{Z}^k, \mathbf{\hat{Z}}^k)$ using $t^k$\;
%     update $\theta_k \leftarrow \Pi_\Theta\left(\theta_{k-1} - \alpha_k J^k\right)$\;
% }
% \caption{UBSR-SG}\label{alg:ubsr-minimization}
% \Output{$\theta_n$}
% \end{algorithm}
\Cref{alg:ubsr-minimization} presents the pseudocode for the SG algorithm for UBSR minimization. 
In this algorithm, $\Pi_{\Theta}(x) \triangleq \arg \min_{\theta \in \Theta} 
 \norm{x - \theta}_2$ denotes the operator that projects onto the convex and compact set $\Theta$.  In each iteration $k$ of this algorithm, we use the iterate $\theta_{k-1}$ and obtain $m_1^k$ samples from $F(\theta_{k-1},\xi)$ to estimate $\srth{k-1}$. Then, we obtain $m_2^k$ joint-samples from $\left(F\left(\theta_{k-1},\xi\right), \nabla F\left(\theta_{k-1},\xi\right)\right)$, that are independent of the previous samples and use them to estimate $\nabla \srth{k-1}$. Then, we perform the update given in the \Cref{alg:ubsr-minimization} to obtain the next iterate. 

In the following, we present three results on the convergence of the iterates of \Cref{alg:ubsr-minimization} under different classes of the UBSR objective $\srt$, namely, strongly convex, convex, and non-convex. All three results are stated under the assumption that $\srt$ is smooth. The sufficient conditions to ensure convexity and strong convexity of $\srt$ are given in \Cref{lemma:h-convex-general}, while the sufficient conditions for smoothness of $\srt$ are provided in \cref{lemma:h-smooth,lemma:sr-smooth-non-convex-loss-fn}.
\subsubsection{The strongly-convex case.}
 For Algorithm \ref{alg:ubsr-minimization}, we derive non-asymptotic bounds for the choice of the increasing batch sizes, and under the following strong-convexity assumption on the objective.
 %Further, non-asymptotic analysis of stochastic optimization algorithms under a strong-convexity assumption has received a lot of research attention, cf.  \cite{ghadimi-lan-strongly-convex-stoc-opt,RakhlinSS12-strongly-cvx-sco,stoc-bias-reduced-gradient-neurips}. 
\begin{assumption}\label{as:strong-convexity}
    There exists $\mu>0$ such that $\srt$ is $\mu$-strongly convex, i.e.,  $\nabla^2 \srth{} - \mu I \succeq 0, \forall \theta \in \Theta$.
\end{assumption}
While \Cref{lemma:h-convex-general} provides sufficient conditions to infer that \Cref{as:strong-convexity} holds, the strong convexity requirement above can also be shown to hold under different hypotheses on the loss function $l$, the objective function $F$, and the noise $\xi$. For instance, the strong convexity assumption above holds in a portfolio optimization example with Gaussian noise. Thus, instead of imposing constraints on $l,F$, and $\xi$ that could vary depending on the problem instance, \Cref{as:strong-convexity} subsumes all these variations into one single problem-independent assumption. We now present a result that bounds the error on the last iterate of \Cref{alg:ubsr-minimization} under \Cref{as:strong-convexity}.  
\begin{theorem}\label{theorem:sgd_convergence}
    Suppose $\srt$ is $K$-smooth and the assumptions of \Cref{lemma:ubsr-grad-est-bound} are satisfied. Suppose \Cref{as:strong-convexity} is satisfied for some $\mu>0$. Let $\theta_*$ denote the minimizer of $\srth{}$ defined in \cref{eq:sr-theta} and we assume that $\nabla \srth{*} = 0$. Let $c\geq \frac{3}{2\mu}$, and Let $\left\{\theta_k\right\}_{k=1}^n$ be the iterates obtained by running \Cref{alg:ubsr-minimization} with $\alpha_k = \frac{c}{k^a}$, and $m_1^k=m_2^k=k,\,\forall k$. Then for all $n \in \N$, we have
    \begin{align*}
        % &\Exp \left[ \srth{n} - \srth{*}\right] \leq \frac{K}{2} \Exp \left[ \norm{\theta_n - \theta_*}_2^2 \right], \;\; \text{ and}\\
        &\Exp\left[\norm{\theta_n - \theta_*}_2^2\right] \leq \frac{\Exp\left[\norm{\theta_0 - \theta_*}_2^2\right]}{(n+1)^{3}} + \frac{2^{2 \mu c }c^2{D}_2}{\left(n+1\right)^{2}} 
        + C \left(\frac{2^{\mu c}c D_1}{n+1}+\frac{2 \Exp\left[ \norm{\theta_0 - \theta_*}_2\right]}{\left(n+1\right)^{2}}\right),
    \end{align*}
    where $C=\exp{\left(c^2K^2\right)}\left(2^{2 \mu c}\left(1+c K\right)c{D}_1\right)$ and $D_1,D_2$ are as given in \Cref{lemma:ubsr-grad-est-bound}.
\end{theorem}
\begin{myproof}
%We split the proof into three parts. In Part I, we derive some intermediate results that are applied in the later parts of the proof. In part II, we derive an MAE bound on the last iterate of the SG algorithm, whereas in part III, we derive an MSE bound on the last iterate of the SG algorithm.
Recall that the objective function $\srt$ is $\mu$-strongly convex and $K$-smooth. $\theta_0$ is chosen arbitrarily and $\left\{\theta_1,\theta_2,\ldots,\theta_n\right\}$ are the random iterates of the SG algorithm. Using the notation $z_k \triangleq \theta_k - \theta^*$, we have the following w.p. $1$.
\begin{align*}
    \norm{z_{n-1} - \alpha_n \nabla \srth{n-1}}_2^2  &= \norm{z_{n-1}}_2^2 + \alpha_n^2 \norm{\nabla \srth{n-1}}_2^2 - 2 \alpha_n \left\langle z_{n-1}, \nabla \srth{n-1}\right\rangle \\
    \numberthis \label{eq:local-reference-1a}
    &\leq (1 + \alpha_n^2K^2)\norm{z_{n-1}}_2^2 - 2 \alpha_n \left\langle z_{n-1}, \nabla \srth{n-1}\right\rangle.
\end{align*}
The above inequality follows because $\nabla \srt$ is $K$-Lipschitz and $\nabla \srth{*} = 0$. Since $\srt$ is a differentiable and $\mu$-strongly convex function, by Definition 2.1.3 of \cite{nesterov_introductory_2004}, we have $\srth{1} \geq \srth{2} + \left\langle \nabla \srth{2}, \theta_2 - \theta_1 \right\rangle + \frac{\mu}{2}\norm{\theta_1 - \theta_2}_2,$ for every $\theta_1,\theta_2 \in \Theta$. Putting $\theta_1 = \theta_{n-1},\theta_2 = \theta^*$ in the identity and using the condition: $\nabla \srth{*} = 0$, we have $\srth{*} - \srth{n-1}) \leq \frac{-\mu}{2}\norm{z_{n-1}}_2^2$. Furthermore, by putting $\theta_1 = \theta^*$ and $\theta_2 = \theta_{n-1}$ in the identity, we have
\begin{align*}
    -\left\langle z_{n-1}, \nabla \srth{n-1}\right\rangle \leq \srth{*} - \srth{n-1}  - \frac{\mu}{2}\norm{z_{n-1}}_2^2 \leq -\mu \norm{z_{n-1}}_2^2.
\end{align*}
Substituting the above result back in (\ref{eq:local-reference-1a}), we have
\begin{equation}\label{eq:local-reference-1b}
    \norm{z_{n-1} - \alpha_n \nabla \srth{n-1}}_2^2 \leq (1 -2\alpha_n \mu + \alpha_n^2K^2)\norm{z_{n-1}}_2^2.
\end{equation}
Next, recall that for a fixed $\theta \in \Theta$ and batch sizes $m_1,m_2$, and by \Cref{prop:grad-est-bound-numerical-I}, the gradient estimator of $\Jt$ satisfies
\begin{align}\label{eq:gradient-bounds}
        \Exp \left[ \norm{\Jt - \nabla \srth{}}_2 \right] \leq \frac{{D}_1}{\sqrt{\max\{m_1,m_2\}}}, 
        \text{ and } \\
        \Exp \left [\norm{\Jt - \nabla \srth{}}_2^2 \right] \leq \frac{{D}_2}{\max\{m_1,m_2\}},
\end{align}
%\Cref{assumption:general-gradient-estimator}. %For each iteration $k$ of the SG algorithm, we have batch size $m_k$ and a random iterate $\theta_{k-1}$ that we use to form the gradient estimator $J_{m_k}(\theta_{k-1}, \mathbf{Z})$. 
Note that the above bounds apply to a fixed $\theta \in \Theta$. We now extend these bounds to the gradient estimators $\{J^{m_1^{k},m_2^{k}}(\mathbf{\hat{Z}}_{\theta_{k-1}},\mathbf{Z}_{\theta_{k-1}},\mathbf{V}_{\theta_{k-1}})\}_{k \in \mathcal{N}}$ that correspond to the random iterates $\{\theta_{k-1}\}_{k \in \mathcal{N}}$ given by \Cref{alg:ubsr-minimization}. We define the following shorthand notation $\zeta_k \triangleq J^{m_1^{k},m_2^{k}}(\mathbf{\hat{Z}}_{\theta_{k-1}},\mathbf{Z}_{\theta_{k-1}},\mathbf{V}_{\theta_{k-1}}) - \nabla \srth{k-1}$. Define filtration $\mathcal{F}_0 = \sigma(\theta_{0})$ and $\mathcal{F}_{k} = \sigma\left(\theta_0, \mathbf{\hat{Z}}_{\theta_{0}},\mathbf{Z}_{\theta_{0}},\mathbf{V}_{\theta_{0}}, \mathbf{\hat{Z}}_{\theta_{1}},\mathbf{Z}_{\theta_{1}},\mathbf{V}_{\theta_{1}},\ldots,\mathbf{\hat{Z}}_{\theta_{k-1}},\mathbf{Z}_{\theta_{k-1}},\mathbf{V}_{\theta_{k-1}}\right), \forall k \in \mathcal{N}$. By the SGD update in \Cref{alg:ubsr-minimization}, $\theta_{k-1}$ is $\mathcal{F}_{k-1}$ measurable, and by the sampling procedure to construct $J^{m_1^k,m_2^k}$, we have $\mathbf{\hat{Z}}_{\theta_{k-1}} \perp \mathcal{F}_{k-1}$, $\mathbf{{Z}}_{\theta_{k-1}} \perp \mathcal{F}_{k-1}$ and $\mathbf{V}_{\theta_{k-1}} \perp \mathcal{F}_{k-1}$. Then by the Lemma 2.3.4 (Independence Lemma) of \cite{shreve_stochastic_2004}, following holds for every $k \in \mathbb{N}$:
\begin{align}\label{eq:sg-independence-lemma}
    \Exp\left[ \left. \norm{\zeta_k}_2  \right| \mathcal{F}_{k-1} \right] \leq \frac{{D}_1}{\sqrt{\max\{m_1^k,m_2^k\}}}, \;\; \textrm{ and } \;\;
    \Exp\left[ \left. \norm{\zeta_k}_2^2  \right| \mathcal{F}_{k-1} \right] \leq \frac{{D}_2}{\max\{m_1^k,m_2^k\}}.
\end{align}

\paragraph{MAE bound proof:}
We now derive MAE bounds on the last iterate of the SG algorithm. For each iteration $n\ \in \textrm{N}$ of the SG update, we have $z_n = \Pi_\Theta\left(\theta_{n-1} - \alpha_n \left(\nabla \srth{n-1}+\zeta_n\right)\right) - \theta^*$. Note that $\theta^* \in \Theta$ holds, and therefore, $\theta^* = \Pi_\Theta(\theta^*)$. Using this identity along with the non-expansive property of the projection operator, we have the following w.p. $1$.
\begin{align*}
    \norm{z_n}_2 &\leq \norm{z_{n-1} - \alpha_n \left(\nabla \srth{n-1}+\zeta_n\right)}_2 \\
    &\leq \norm{z_{n-1} - \alpha_n \nabla \srth{n-1}}_2 + \alpha_n \norm{\zeta_n}_2 \leq \sqrt{1 - 2\alpha_n \mu + \alpha_n^2K^2}\norm{z_{n-1}}_2 + \alpha_n \norm{\zeta_n}_2,
\end{align*}
where the last inequality follows from \cref{eq:local-reference-1b}. Here, the square-root is well-defined because $1 - 2\alpha_k \mu + \alpha_k^2K^2$ is non-negative for every $k$. Indeed, $(1 - 2\alpha_k \mu + \alpha_k^2K^2) \geq (1 - 2\alpha_k \mu + \alpha_k^2\mu^2) \geq 0$, where we used the fact that for a $K$-smooth and $\mu$-strongly convex function, $K \geq \mu$ holds. After unrolling the above inequality, following holds w.p. $1$:
\begin{align*}
    \norm{z_n}_2 &\leq \norm{z_0}_2 \left(\prod_{k=1}^n \sqrt{1 - 2\alpha_k \mu + \alpha_k^2K^2}\right) + \sum_{k=1}^n \left[\left(\alpha_k \norm{\zeta_k}_2\right) \right] \left(\prod_{j=k+1}^n \sqrt{1 - 2\alpha_j \mu + \alpha_j^2K^2}\right) \\
    \numberthis \label{eq:local-reference-4}
    &= \norm{z_0}_2  \sqrt{\prod_{k=1}^n \left( 1 - 2\alpha_k \mu + \alpha_k^2K^2\right)} + \sum_{k=1}^n \left[\left(\alpha_k \norm{\zeta_k}_2\right) \right] \sqrt{\prod_{j=k+1}^n \left( 1 - 2\alpha_j \mu + \alpha_j^2K^2\right)}.
\end{align*}
Note that if $0\leq a_j \leq b_j,\forall j$ then $\Pi_j a_j \leq \Pi_j b_j$. Let $a_j = (1 - 2\alpha_j \mu + \alpha_j^2K^2)$ and $b_j = \exp{\left(2\alpha_j \mu + \alpha_j^2K^2\right)}$. Then, we apply the identity: $1+x \leq e^x,\forall x \in \Rel$ to infer that $a_j\leq b_j,\forall j$. Then, we have
\begin{equation}\label{eq:product-1-plus-x}
    \prod_{j=k+1}^n \left( 1 - 2\alpha_j \mu + \alpha_j^2K^2\right) \leq \sum_{j=k+1}^n \exp{\left(- 2\alpha_j \mu + \alpha_j^2K^2\right)}.
\end{equation}
Substituting the above result in (\ref{eq:local-reference-4}), we have
\begin{align*}
    \norm{z_n}_2 & \leq \norm{z_{0}}_2 \exp{\left(\sum_{k=1}^n\left(- \alpha_k \mu + \frac{\alpha_k^2K^2}{2}\right)\right)} + \sum_{k=1}^n\exp{\left(\sum_{j=k+1}^n\left(- \alpha_j \mu + \frac{\alpha_j^2K^2}{2}\right)\right)}\alpha_k \norm{\zeta_k}_2 \\
    \numberthis \label{eq:local-reference-2}
    &\leq \exp{\left(\sum_{j=1}^n \frac{\alpha_j^2K^2}{2}\right)} \left[\norm{z_{0}}_2  \exp{\left(\sum_{k=1}^n - \alpha_k \mu\right)} + \sum_{k=1}^n\exp{\left(\sum_{j=k+1}^n - \alpha_j \mu\right)}\alpha_k \norm{\zeta_k}_2\right].
\end{align*}
For the term: $\exp{\left(\sum_{j=1}^n \frac{\alpha_j^2K^2}{2}\right)}$, we use the condition: $a \in \left(\frac{1}{2},1\right]$ and apply simple calculus to have the following bound:
\begin{align*}
    \exp{\left(\sum_{j=1}^n \frac{\alpha_j^2K^2}{2}\right)} &= \exp{\left(\frac{c^2K^2}{2}\left(1+\sum_{j=2}^n \frac{1}{j^{2a}}\right)\right)} \leq \exp{\left(\frac{c^2K^2}{2}\left(1+\int_{j=1}^n \frac{1}{j^{2a}}dj\right)\right)}  \\
    &\leq \exp{\left(\frac{c^2K^2}{2}\left(1+\frac{1}{2a-1}\right)\right)} \leq \exp{\left(\frac{c^2K^2}{2a-1}\right)}.
\end{align*}
In a similar manner, for the other term: $\exp{\left(\sum_{j=k+1}^n - \alpha_j \mu\right)}$, we have
\begin{align*}
    &\exp{\left(\sum_{j=k+1}^n - \alpha_j \mu\right)} = \exp{\left( \mu c \sum_{j=k+1}^n \frac{-1}{j^a}\right)} \leq \exp{\left( \mu c \sum_{j=k+1}^n \frac{-1}{j}\right)} \\
    &\leq \exp{\left( \mu c \int_{j=k+1}^{n+1} \frac{-1}{j}dj\right)} = \exp{\left( \mu c \left[-\log(x)\right]^{n+1}_{k+1}\right)} = \left(\frac{k+1}{n+1}\right)^{\mu c} \leq 2^{\mu c}\left(\frac{k}{n+1}\right)^{\mu c}.
\end{align*}
Substituting the bounds for the above two terms back in (\ref{eq:local-reference-2}), we have
\begin{equation*}
    \norm{z_n}_2  \leq \exp{\left(\frac{c^2K^2}{2a-1}\right)} \left[\frac{\norm{z_{0}}_2}{(n+1)^{\mu c}} + \frac{2^{\mu c}c}{(n+1)^{\mu c}} \sum_{k=1}^n k^{\mu c - a} \norm{\zeta_k}_2\right].
\end{equation*}
Taking expectation on both sides, we have
\begin{align*}
    \Exp\left[\norm{z_n}_2\right] 
    \leq \exp{\left(\frac{c^2K^2}{2a-1}\right)} \left[\frac{\Exp\left[\norm{z_{0}}_2\right]}{(n+1)^{\mu c}} + \frac{2^{\mu c}c{D}_1}{(n+1)^{\mu c}} \sum_{k=1}^n k^{\mu c - a - 0.5}\right],
\end{align*}
% where the above inequality follows from \cref{eq:sg-independence-lemma} after applying the law of total expectation. With $m_k=k$, we have
% \begin{align*}
%     \Exp\left[\norm{z_n}_2\right] 
%     &\leq \exp{\left(\frac{c^2S^2}{2a-1}\right)} \left[\frac{\Exp\left[\norm{z_{0}}_2\right]}{(n+1)^{\mu c}} + \frac{2^{\mu c}cC_1}{(n+1)^{\mu c}} \sum_{k=1}^n k^{\mu c - a - e_1}\right].
% \end{align*}
The theorem condition : $\mu c-a-0.5 > -1$ implies that the summation above is bounded by a finite integral given below.
\begin{align*}
    \sum_{k=1}^n k^{\mu c - a - 0.5} \leq 
        \int_{k=1}^{n+1} k^{\mu c - a - 0.5}dk \leq \frac{(n+1)^{1+\mu c - a - 0.5}}{0.5+\mu c - a}.
\end{align*}
Then,
\begin{equation}\label{eq:sg-theorem-mae-bound}
    \Exp\left[\norm{z_n}_2\right] \leq \exp{\left(\frac{c^2K^2}{2a-1}\right)} \left[\frac{\Exp\left[\norm{z_{0}}_2\right]}{(n+1)^{\mu c}} + \frac{2^{\mu c}c{D}_1}{\left(0.5+\mu c - a\right)(n+1)^{a - 0.5}}\right].
\end{equation}
This concludes the MAE bound on the last iterate of SG algorithm. 
\paragraph{MSE bound proof:} We now provide the bound on parameter error on the last iterate: $z_n \triangleq \theta_n - \theta^*$.
\begin{align*}
    \norm{z_n}_2^2 &\leq \norm{z_{n-1} - \alpha_n \nabla \srth{n-1} - \alpha_n \zeta_n }_2^2 \\
    &= \norm{z_{n-1} - \alpha_n \nabla \srth{n-1}}_2^2 - 2 \alpha_n \langle z_{n-1} - \alpha_n \nabla \srth{n-1}, \zeta_n\rangle + \alpha_n^2 \norm{\zeta_n}_2^2 \\
    &\leq (1 - 2\alpha_n \mu + \alpha_n^2K^2)\norm{z_{n-1}}_2^2 + 2\alpha_n \norm{z_{n-1} - \alpha_n \nabla \srth{n-1}}_2 \norm{\zeta_n}_2 + \alpha_n^2 \norm{\zeta_n}_2^2 \\
    &\leq (1 - 2\alpha_n \mu + \alpha_n^2K^2)\norm{z_{n-1}}_2^2 + 2\alpha_n \sqrt{1 - 2\alpha_n \mu + \alpha_n^2K^2} \norm{z_{n-1}}_2 \norm{\zeta_n}_2 + \alpha_n^2 \norm{\zeta_n}_2^2 \\
    &\leq (1 - 2\alpha_n \mu + \alpha_n^2K^2)\norm{z_{n-1}}_2^2 + 2\alpha_n  \left(1 + c K\right) \norm{z_{n-1}}_2 \norm{\zeta_n}_2 + \alpha_n^2 \norm{\zeta_n}_2^2,
\end{align*}
where the second and third inequalities follow from (\ref{eq:local-reference-1b}), while the last inequality follows from: $\sqrt{1 - 2\alpha_n \mu + \alpha_n^2K^2} < 1+\alpha_n K \leq 1 + \alpha_1 K = 1+cK$. Unrolling the above equation, we have
\begin{align*}
    \norm{z_n}_2^2 &\leq \norm{z_0}_2^2 \prod_{k=1}^n \left( 1 - 2\alpha_k \mu + \alpha_k^2K^2\right) \\
    &+ \sum_{k=1}^n \left[\left(2 \left(1+c K\right)\alpha_k \norm{z_{k-1}}_2 \norm{\zeta_k}_2 + \alpha_k^2 \norm{\zeta_k}_2^2 \right) \prod_{j=k+1}^n \left( 1 - 2\alpha_j \mu + \alpha_j^2K^2\right) \right] \\
    &\leq \frac{\norm{z_0}_2^2}{(n+1)^{2\mu c}} + \sum_{k=1}^n \left[\left(2 \left(1+c K\right)\alpha_k  \norm{z_{k-1}}_2 \norm{\zeta_k}_2 + \alpha_k^2 \norm{\zeta_k}_2^2 \right)\left(\frac{k+1}{n+1}\right)^{2 \mu c} \right].
\end{align*}
The last inequality follows from (\ref{eq:product-1-plus-x}). Taking expectations on both sides, we have
\begin{align}
    \nonumber 
    \Exp&\left[\norm{z_n}_2^2\right] \\
    \label{eq:local-reference-3}
    &\leq \frac{\Exp\left[\norm{z_0}_2^2\right] + 2^{2 \mu c}\sum_{k=1}^n \left[\left(2\left(1+c K\right) \alpha_k  \Exp\left[ \norm{z_{k-1}}_2 \norm{\zeta_k}_2\right] + \alpha_k^2 \Exp\left[\norm{\zeta_k}_2^2\right] \right) k^{2 \mu c} \right]}{\left(n+1\right)^{2 \mu c}} 
\end{align}
Next, we have for all $k \in \{1,2,\ldots,n\}$,
\begin{align*}
    \Exp&\left[ \norm{z_{k-1}}_2 \norm{\zeta_k}_2\right] \\
    &= \Exp\left[ \left.\Exp\left[ \norm{z_{k-1}}_2 \norm{\zeta_k}_2 \right| \mathcal{F}_{k-1}\right]\right] =\Exp\left[ \norm{z_{k-1}}_2 \left.\Exp\left[  \norm{\zeta_k}_2\right| \mathcal{F}_{k-1}\right]\right] \leq \frac{{D}_1}{\sqrt{k}}\Exp\left[ \norm{z_{k-1}}_2\right].
\end{align*}
The first equality is the law of total expectation, while the second equality follows because $\theta_{k-1}$ is $\mathcal{F}_{k-1}$-measurable. The last inequality follows from the first bound in \cref{eq:sg-independence-lemma} with $m_1^k=m_2^k=k$. Substituting this back into \cref{eq:local-reference-3} and applying the second bound from \cref{eq:sg-independence-lemma}, we have
\begin{align*}
    \Exp&\left[\norm{z_n}_2^2\right] \\
    &\leq \frac{\Exp\left[\norm{z_0}_2^2\right]}{(n+1)^{2\mu c}} + \frac{2^{2 \mu c}}{\left(n+1\right)^{2 \mu c}}\sum_{k=1}^n \left[ \frac{2\left(1+c K\right){D}_1}{k^{0.5}} \alpha_k k^{2 \mu c} \Exp\left[ \norm{z_{k-1}}_2\right]  + \alpha_k^2 \frac{{D}_2}{k}k^{2 \mu c} \right].
\end{align*}
Substituting $\alpha_k=c/{k^a}$, we have
\begin{align}
    \nonumber
    \Exp\left[\norm{z_n}_2^2\right] &\leq \frac{\Exp\left[\norm{z_0}_2^2\right]}{(n+1)^{2\mu c}}+ \frac{2^{2 \mu c }c^2{D}_2}{\left(n+1\right)^{2 \mu c}}\sum_{k=1}^n k^{2 \mu c - 2a - 1} \\
    \label{eq:local-reference-5}
    &+ \frac{2^{2 \mu c+1}\left(1+c K\right)c{D}_1}{\left(n+1\right)^{2 \mu c}}\sum_{k=1}^n k^{2\mu c-a-0.5}\Exp\left[ \norm{z_{k-1}}_2\right].
\end{align}
For bounding the last summation in \cref{eq:local-reference-5}, we make the following claim: for every $k \geq 1$, 
\begin{align*}
    \Exp\left[ \norm{z_{k-1}}_2\right] \leq \exp{\left(\frac{c^2K^2}{2a-1}\right)} \left[\frac{\Exp\left[\norm{z_{0}}_2\right]}{k^{\mu c}} + \frac{2^{\mu c}c{D}_1}{\left(1+\mu c - a - 0.5\right)k^{a -0.5}}\right].
\end{align*}
For $k>1$, the claim follows from the MAE bound in \cref{eq:sg-theorem-mae-bound}, while for the case $k=1$ it holds trivially. Substituting the above inequality back in \cref{eq:local-reference-5}, we have
\begin{align*}
    &\Exp\left[\norm{z_n}_2^2\right] \leq \frac{\Exp\left[\norm{z_0}_2^2\right]}{(n+1)^{2\mu c}} + \frac{2^{2 \mu c }c^2{D}_2}{\left(n+1\right)^{2 \mu c}}\sum_{k=1}^n k^{2 \mu c - 2a - 1} \\
    &+ \frac{2^{2 \mu c+1}\left(1+c K\right)c{D}_1}{\left(n+1\right)^{2 \mu c}}\exp{\left(\frac{c^2K^2}{2a-1}\right)} \sum_{k=1}^n \left(k^{\mu c-a-e_1}\Exp\left[ \norm{z_{0}}_2\right] + \frac{2^{\mu c}c{D}_1k^{2\mu c -2a}}{\left(0.5+\mu c - a\right)}\right) \\
    &\leq \frac{\Exp\left[\norm{z_0}_2^2\right]}{(n+1)^{2\mu c}} + \frac{2^{2 \mu c }c^2{D}_2}{\left(2 \mu c - 2a\right)\left(n+1\right)^{2a}} + \exp{\left(\frac{c^2K^2}{2a-1}\right)} \\
    & \times \left[\left(\frac{2^{2 \mu c+1}\left(1+c K\right)c{D}_1}{0.5+\mu c - a}\right) \frac{\Exp\left[ \norm{z_{0}}_2\right]}{\left(n+1\right)^{\mu c+a-0.5}} + \left(\frac{2^{3 \mu c}\left(1+c K\right)c^2{D}_1^2}{\left(0.5+\mu c - a\right)^2\left(n+1\right)^{2a - 1}}\right)\right].
\end{align*}
The last inequality follows by bounding each of the three summations with a finite integral, by applying the inequalities: $2\mu c - 2a > 0$, $\mu c - a > -0.5$, and $2\mu c - 2a  > -1$ respectively. Each of these inequalities follow from the theorem condition : $\mu c - a > -0.5$. This completes the proof for the bound on convergence in parameter, i.e., $\theta_n \to \theta^*$. 

For the bound on convergence in value ($\srth{n} \to \srth{*}$), we use the assumption that $\srt$ is $K$-smooth. Therefore, $\srt$ satisfies : $\srt(x) - \srt(y) \leq \nabla \srt(y)^\mathrm{T}(x-y) + \frac{K}{2}\norm{x-y}_2^2$ for every $x,y$ in $\BB$. Using the theorem condition $\nabla \srt(\theta^*)=0$, we substitute $x = \theta_n$ and $y=\theta^*$ in above inequality, the claim follows.
\end{myproof}

\subsubsection{The convex case.}
We now present a result that provides bounds for  \Cref{alg:ubsr-minimization} under the assumption that $\srt$ is smooth and convex.
\begin{theorem}\label{theorem:sgd_convergence_convex}
    Suppose $\srt$ is convex, $K$-smooth, and the assumptions of \Cref{lemma:ubsr-grad-est-bound} are satisfied. Let $\theta_*$ denote the minimizer of $\srth{}$ defined in \cref{eq:sr-theta} and we assume that $\nabla \srth{*} = 0$. Let $\left\{\theta_k\right\}_{k=1}^n$ be the iterates obtained by running \Cref{alg:ubsr-minimization} with $\alpha_k = \frac{1}{K\sqrt{k}}$, and $m_1^k=m_2^k=k,\,\forall k$. Define $\overline{\theta}_n\triangleq \sum_{k=1}^n\frac{\theta_k}{n}$. Then, for all $n\geq 1$, we have
    \begin{align*}
        &\Exp \left[ \srt(\overline{\theta}_n) - \srth{*}\right] \leq \frac{D}{\sqrt{n}},
    \end{align*}
    where $D=K\Exp\left[\norm{z_{0}}^2\right]+\frac{D_2 \pi^2}{6K}+2\Exp\left[\norm{z_0}\right]D_1(1+\log{n})+\frac{2D_1^2 \log^2{n}}{K}$. Here $D_1,D_2$ are as given in \Cref{lemma:ubsr-grad-est-bound}. 
\end{theorem}
\begin{myproof}
    Note that $\srt$ is $K$-smooth, and $\Theta$ is convex. Then by Theorem 2.1.5 of \cite{nesterov_introductory_2004}, for every $\theta_2,\theta_2 \in \Theta$, we have
    \begin{align}\label{eq:grad-product-bound}
        \left<\nabla \srth{2} - \nabla \srth{1},\theta_2-\theta_1\right> \geq \frac{1}{K}\norm{\nabla \srth{2} - \nabla \srth{1}}^2.
    \end{align} 
    Since $\srt(\cdot)$ is continuously differentiable and convex, by Definition 2.1.2 of \cite{nesterov_introductory_2004}, we have
    \begin{equation}\label{eq:sr-convexity}
        \srth{2} \geq \srth{1} + \nabla \srth{1}^T (\srth{2}-\srth{1}), \;\; \forall \theta_1,\theta_2 \in \B.
    \end{equation}
    Putting $\theta_2=\theta_*$ and $\theta_1=\theta_{n-1}$ in \cref{eq:grad-product-bound}, we have
    $-\left\langle z_{n-1}, \nabla \srth{n-1} \right\rangle \leq \frac{-1}{K}\norm{\nabla \srth{n-1}}^2$. Likewise, putting $\theta_2=\theta_*$ and $\theta_1=\theta_{n-1}$ in \cref{eq:sr-convexity}, we have $-\left\langle z_{n-1}, \nabla \srth{n-1} \right\rangle \leq \srth{*} - \srth{n-1}$. Combining both, we have the following inequality for all $n\geq 1$.
    \begin{equation}\label{eq:cvx-smth-inequality}
        - 2 \alpha_n \left\langle z_{n-1}, \nabla \srth{n-1} \right\rangle \leq \frac{-\alpha_n}{K}\norm{\nabla \srth{n-1}}^2 - \alpha_n \left(\srth{n-1} - \srth{*}\right).
    \end{equation}
    Then, we have
    \begin{align*}
        &\norm{z_{n-1} - \alpha_n \nabla \srth{n-1}}^2 = \norm{z_{n-1}}^2 + \alpha_n^2 \norm{\nabla \srth{n-1}}^2 - 2 \alpha_n \left\langle z_{n-1}, \nabla \srth{n-1} \right\rangle\\
        \numberthis \label{eq:smth-parameter-int-0}
        &\leq \norm{z_{n-1}}^2 + \left(\alpha_n^2 - \frac{\alpha_n}{K} \right) \norm{\nabla \srth{n-1}}^2 - \alpha_n \left(\srth{n-1} - \srth{*}\right) \leq \norm{z_{n-1}}^2,
    \end{align*}
    where the first inequality follows from \cref{eq:cvx-smth-inequality} and the second inequality follows from the step size assumption: $\alpha_n \leq 1/K$.
     Next, we have
    \begin{align*}
        \norm{z_n}^2 &= \norm{z_{n-1} - \alpha_n J^{m_1^{k},m_2^{k}}(\mathbf{\hat{Z}}_{\theta_{k-1}},\mathbf{Z}_{\theta_{k-1}},\mathbf{V}_{\theta_{k-1}})}^2 = \norm{\left(z_{n-1} - \alpha_n \nabla \srth{n-1}\right) - \alpha_n \zeta_n}^2 \\
        &= \norm{z_{n-1} - \alpha_n \nabla \srth{n-1}}^2 - 2 \alpha_n \left\langle z_{n-1} - \alpha_n \nabla \srth{n-1}, \zeta_n \right\rangle + \alpha_n^2 \norm{\zeta_n^2} \\
        &\leq \numberthis \label{eq:smth-parameter-int-1} \norm{z_{n-1}}^2 - \alpha_n \left(\srth{n-1} - \srth{*}\right) + 2 \alpha_n \norm{z_{n-1}}\norm{\zeta_n} + \alpha_n^2 \norm{\zeta_n^2},
    \end{align*}
    where the first inequality above follows from the first inequality in \cref{eq:smth-parameter-int-0} while the second inequality above follows from the Cauchy-Schwartz inequality and the final inequality in \cref{eq:smth-parameter-int-0}.The inequality in \cref{eq:smth-parameter-int-1} also implies that $\norm{z_n} \leq \norm{z_{n-1}}+\alpha_n \norm{\zeta_n}$. Unrolling this, we have the following inequality.
    \begin{equation}\label{eq:z-unrolled}
        \norm{z_n} \leq \norm{z_0}+\sum_{k=1}^n \alpha_k \norm{\zeta_k}
    \end{equation} 
    
    Taking expectation on both sides of \cref{eq:smth-parameter-int-1}, and rearranging the terms, we have
    \begin{align*}
        &\alpha_n \Exp\left[\srth{n-1} - \srth{*}\right] \leq \Exp\left[\norm{z_{n-1}}^2\right] - \Exp\left[\norm{z_n}^2\right] + \alpha_n^2 \Exp\left[\norm{\zeta_n^2}\right] + 2\alpha_n \Exp\left[\norm{z_{n-1}}\norm{\zeta_n}\right]\\
        &\leq \Exp\left[\norm{z_{n-1}}^2\right] - \Exp\left[\norm{z_n}^2\right] + \alpha_n^2 \frac{{D}_2}{n} + 2\alpha_n \frac{{D}_1}{\sqrt{n}}\Exp\left[\norm{z_{n-1}}\right]\\
        &\numberthis \label{eq:smth-biased-avg-weighted}\leq \Exp\left[\norm{z_{n-1}}^2\right] - \Exp\left[\norm{z_n}^2\right] + \alpha_n^2 \frac{{D}_2}{n} + 2\alpha_n \frac{{D}_1}{\sqrt{n}}\left[\Exp\left[\norm{z_{0}}\right] + \sum_{j=1}^{n-1} \alpha_j \frac{{D}_1}{\sqrt{j}}\right],
    \end{align*}
    where the last inequality follows from \cref{eq:z-unrolled}. The bounds on $\zeta_n$ follow from the gradient estimation bounds in \Cref{prop:grad-est-bound-numerical-I}. Interested readers are referred to the proof of \Cref{theorem:sgd_convergence} for further technical details. Taking summation from $k=1$ to $n$, we have 
    \begin{align*}    
    &\sum_{k=1}^n \alpha_k\Exp\left[\srth{k-1} - \srth{*}\right] \\
    &\leq \Exp\left[\norm{z_{0}}^2\right] + \sum_{k=1}^n \frac{\alpha_k^2 {D}_2}{k} + 2\Exp\left[\norm{z_0}\right]\sum_{k=1}^n \frac{\alpha_k {D}_1}{\sqrt{k}} + 2 {D}_1^2 \sum_{k=1}^n \frac{\alpha_k}{\sqrt{k}} \sum_{j=1}^{k-1}  \frac{\alpha_j}{\sqrt{j}}.
    \end{align*}
    Then, we have 
    \begin{align*}
    &\sum_{k=1}^n \alpha_k\Exp\left[\srth{k-1} - \srth{*}\right] \geq \alpha_n \sum_{k=1}^n \Exp\left[\srth{k-1} - \srth{*}\right] \\
    &= n \alpha_n \sum_{k=1}^n \frac{1}{n}\Exp\left[\srth{k-1} - \srth{*}\right] \geq \frac{\sqrt{n}}{K} \Exp\left[\srt(\overline{\theta_{n-1}})-\srth{*}\right]
    \end{align*}
    ,where the last inequality follows from the convexity of $\srt$. Then, we have
    \begin{align*}
    &\frac{\sqrt{n}}{K} \Exp\left[\srt(\overline{\theta_{n-1}})-\srth{*}\right] \\
    &\leq \Exp\left[\norm{z_{0}}^2\right] + \sum_{k=1}^n \frac{\alpha_k^2 {D}_2}{k} + 2\Exp\left[\norm{z_0}\right]\sum_{k=1}^n \frac{\alpha_k {D}_1}{\sqrt{k}} + 2 {D}_1^2 \sum_{k=1}^n \frac{\alpha_k}{\sqrt{k}} \sum_{j=1}^{k-1}  \frac{\alpha_j}{\sqrt{j}}.
    \end{align*}
    Putting $\alpha_k = \frac{1}{K\sqrt{k}}$ in the above inequality, we have
    \begin{align*}
    &\Exp\left[\srt(\overline{\theta_{n-1}})-\srth{*}\right] \\
    &\leq \frac{K\Exp\left[\norm{z_{0}}^2\right]+\frac{2 {D}_1^2}{K}}{\sqrt{n}} + \frac{ {D}_2 \pi^2}{6K\sqrt{n}} + \frac{2{D}_1\Exp\left[\norm{z_0}\right](1+\log{n})}{\sqrt{n}} + \frac{{D}_1^2(\log{n})^2}{K\sqrt{n}},
    \end{align*}
    where we use the following summation inequalities. 
    \begin{align*}
        \sum_{k=1}^n \frac{1}{\sqrt{k}} \geq \sum_{k=1}^n \frac{1}{\sqrt{n}} = \sqrt{n}, \; \sum_{k=1}^n \frac{1}{k^2} &\leq \frac{\pi^2}{6}, \; \sum_{k=1}^n \frac{1}{k} \leq 1+ \log{n}, \; \sum_{k=1}^n \frac{\log{k}}{k} \leq 1 + \frac{(\log{n})^2}{2}.
    \end{align*}
\end{myproof}

The $\log{n}$ and $\log^2{n}$ terms can be avoided by replacing the step size choice $\frac{1}{K\sqrt{k}}$ with $\frac{1}{K\sqrt{n}}$, and we avoid a separate proof.  

\subsubsection{The non-convex case.}
Next, we present a result that provides bounds on the norm of the gradient of $\srt$ at a random iterate $\theta_R$ of the \Cref{alg:ubsr-minimization}, distributed uniformly over the $n$ iterates. Precisely, let $\left\{\theta_k\right\}_{k=1}^n$ be the iterates given by \Cref{alg:ubsr-minimization}. Then $\theta_R$ takes value $\theta_k$ w.p. $1/n$, for every $k \in \{0,1,\ldots,n-1\}$. 
\begin{theorem}\label{theorem:sgd_convergence_non_convex}
 Suppose $\srt$ is $K$-smooth and the assumptions of \Cref{lemma:ubsr-grad-est-bound} are satisfied. Let $\theta_*$ denote the minimizer of $\srth{}$ defined in \cref{eq:sr-theta} and we assume that $\nabla \srth{*} = 0$. Let $\left\{\theta_k\right\}_{k=1}^n$ be the iterates obtained by running \Cref{alg:ubsr-minimization} with $\alpha_k = \frac{1}{K\sqrt{k}}$, and $m_1^k=m_2^k=k,\,\forall k$. Then, we have
\begin{equation*}
    \Exp\left[\norm{ \nabla \srth{R}}^2_2\right] \leq \frac{KD}{\sqrt{n}},
\end{equation*}
where  $K$ and $D$ are as given in \Cref{lemma:h-smooth} and \Cref{theorem:sgd_convergence_convex} respectively. 
\end{theorem}
\begin{myproof}
    Note that $\srt$ is $K$-smooth and $\Theta$ is convex. Then from the proof of \Cref{theorem:sgd_convergence_convex}, we know that
    $-\left\langle z_{n-1}, \nabla \srth{n-1} \right\rangle \leq \frac{-1}{K}\norm{\nabla \srth{n-1}}^2$. Therefore, we have
    \begin{align*}
        \norm{z_n}^2 &= \norm{z_{n-1} - \alpha_n \hat{J}^m_\theta(\mathbf{Z,\hat{Z}})}^2 = \norm{\left(z_{n-1} - \alpha_n \nabla \srth{n-1}\right) - \alpha_n \zeta_n}^2 \\
        &= \norm{z_{n-1} - \alpha_n \nabla \srth{n-1}}^2 - 2 \alpha_n \left\langle z_{n-1} - \alpha_n \nabla \srth{n-1}, \zeta_n \right\rangle + \alpha_n^2 \norm{\zeta_n^2} \\
        &\leq  \norm{z_{n-1}}^2 - \left(\frac{2\alpha_n}{K} - \alpha_n^2 \right) \norm{\nabla \srth{n-1}}^2  + 2 \alpha_n \norm{z_{n-1}}\norm{\zeta_n} + \alpha_n^2 \norm{\zeta_n^2} \\
        &\leq \norm{z_{n-1}}^2 - \frac{\alpha_n}{K} \norm{\nabla \srth{n-1}}^2  + 2 \alpha_n \norm{z_{n-1}}\norm{\zeta_n} + \alpha_n^2 \norm{\zeta_n^2}.
    \end{align*}
    The last inequality follows from the step size assumption: $\alpha_n \leq 1/K$. Rearranging and applying \cref{eq:z-unrolled}, we have
    \begin{align*}
        \frac{\alpha_n}{K}\norm{\nabla \srth{n-1}}^2 &\leq \norm{z_{n-1}}^2 - \norm{z_n}^2  + 2 \alpha_n \norm{\zeta_n} \left(\norm{z_0} + \sum_{j=1}^{n-1} \alpha_j \norm{\zeta_j}\right) + \alpha_n^2 \norm{\zeta_n^2}.
    \end{align*}
    Taking summation from $k=1$ to $n$, we have
    \begin{align*}
        &\frac{1}{K}\sum_{k=1}^n \alpha_k \norm{\nabla \srth{k-1}}^2 \leq \norm{z_{0}}^2 + 2 \sum_{k=1}^n \alpha_k \norm{\zeta_k} \left(\norm{z_0} + \sum_{j=1}^{k-1} \alpha_j \norm{\zeta_j}\right) + \sum_{k=1}^n \alpha_k \norm{\zeta_k^2} \\
& \frac{n \alpha_n}{K}\sum_{k=1}^n \frac{1}{n} \norm{\nabla \srth{k-1}}^2 \leq \norm{z_{0}}^2 + 2 \sum_{k=1}^n \alpha_k \norm{\zeta_k} \left(\norm{z_0} + \sum_{j=1}^{k-1} \alpha_j \norm{\zeta_j}\right) + \sum_{k=1}^n \alpha_k \norm{\zeta_k^2} \\
& \frac{n \alpha_n}{K} \Exp_R\left[\norm{\nabla \srth{R}}^2\right] \leq \norm{z_{0}}^2 + 2 \sum_{k=1}^n \alpha_k \norm{\zeta_k} \left(\norm{z_0} + \sum_{j=1}^{k-1} \alpha_j \norm{\zeta_j}\right) + \sum_{k=1}^n \alpha_k \norm{\zeta_k^2}.
    \end{align*}
    Taking expectation on both sides, we get the same bound as in \Cref{theorem:sgd_convergence_convex} with an added factor of $K$. The remainder of the proof is identical to the proof of \Cref{theorem:sgd_convergence_convex}, and we avoid a separate proof.
\end{myproof}
A few remarks are in order. 

From \Cref{theorem:sgd_convergence}, we infer that for the strongly convex case, the iteration complexity $N$, i.e., the number of iterations of the SG algorithm required to attain the bound $\Exp\left[\srth{N} - \srth{*}\right] \leq \epsilon$, is of the order \order{1/\epsilon}. With the batch sizes $m_1^k=m_2^k=k$, the total number of samples required after $N$ iterations is $\frac{N(N+1)}{2} \sim$ \order{N^2}, which translates to the sample complexity \order{1/\epsilon^2} in terms of $\epsilon$. Similar analysis leads to a sample complexity of \order{1/\epsilon^4} for each, the general convex case and the non-convex case. 

Our results allow for an increasing batch size without requiring knowledge of $N$ beforehand, making \Cref{alg:ubsr-minimization} an any-time algorithm.
%\todos{Say what sample complexity means and justify $1/\epsilon^2$ sample complexity bound.}

%An asymptotic convergence rate of $\mathcal{O}(1/n)$ was derived earlier by \cite{zhaolin2016ubsrest} for the scalar UBSR optimization case, but their result required a batch size $m\geq n$ for each iteration, thus requiring $n$ to be known beforehand. Our result not only establishes a non-asymptotic bound of the same order, but also allows for an increasing batch size that does require knowing $n$ beforehand, making \Cref{alg:ubsr-minimization} an any-time algorithm.

%A result similar to \Cref{theorem:sgd_convergence} can be obtained for the constant batch size, i.e., $m_1^k=m_2^k=m, \,\forall k$. The proof follows by completely parallel arguments to those employed in the proof of \Cref{theorem:sgd_convergence}, and we omit the details. The error bounds of the order \order{1/m} are obtained in this constant batch size case. %The second column of \Cref{table:complexity-analysis} covers this case for $m=n^p$.

\subsection{UBSR minimization in the offline setting}
\label{section:offline}
In this section, we describe how the UBSR gradient estimator given in \cref{eq:J-definition}, and \Cref{alg:ubsr-minimization} can be modified to cater to an offline setting, where a fixed set (dataset) of samples from $\xi$ are available and $F$ is presumed to be known in closed-form. Suppose we obtain $M$ i.i.d. samples of $\xi$. 
For notational convenience, we split these into two sets, namely $\{\hat{\xi}_1,\hat\xi_2,\ldots,\hat\xi_{M_1}\}$  and $\{{\xi}_1,\xi_2,\ldots,\xi_{M_2}\}$, where $M_1,M_2 \in \mathcal{N}$ such that $M_1+M_2=M$.  Then, our proposed gradient estimator $Q^{M_1,M_2}:\Theta \to \Rel^d$ is defined by
\begin{align}\label{eq:Q-definition}
    Q^{M_1,M_2}\left(\theta\right) \triangleq J^{M_1,M_2}\left(\left(F(\theta,\hat{\xi}_j\right)_{j=1}^{M_1},\left(F(\theta,{\xi}_j\right)_{j=1}^{M_2},\left(\nabla F(\theta,{\xi}_j\right)_{j=1}^{M_2}\right),
\end{align}
where $J^{M_1,M_2}$ is defined in \cref{eq:J-definition}. 

\begin{figure}[!t]
\centering
\begin{algorithm}[H]
\caption{UBSR-SG-OFFLINE}\label{alg:ubsr-minimization-offline}
\begin{algorithmic}[1] 
    \REQUIRE $\theta_0 \in \Theta$, samples $\{\hat{\xi}_k\}_{k = 1}^{M_1} \sim \xi, \{\xi_k\}_{k=1}^{M_2} \sim \xi$,  step sizes $\{\alpha_k\}_{k \ge 1}$.
    \FOR{$k= 1, 2, \ldots, n$}
        \STATE $\mathrm{\hat{Z}}^k_{\theta_{k-1}} \leftarrow \left[F(\theta_{k-1},\hat{\xi}_1),F(\theta_{k-1},\hat{\xi}_2),\ldots,F(\theta_{k-1},\hat{\xi}_{M_1}) \right]$.
        \STATE compute $\srt^{M_1}(\mathrm{\hat{Z}}^k_{\theta_{k-1}})$ using \cref{eq:sr-m-definition}.
        \STATE $\langle \mathrm{{Z}}^k_{\theta_{k-1}}, \!\mathrm{{V}}^k_{\theta_{k-1}}\rangle\! \leftarrow\! \left[\left(F\!\left(\theta_{k-1},{\xi}_1\right),\!\nabla F\!\left(\theta_{k-1},{\xi}_1\right)\right),\dots,\left(F\!\left(\theta_{k-1},{\xi}_1\right),\!\nabla F\!\left(\theta_{k-1},{\xi}_{M_2}\right)\right) \right]$.
        \STATE compute $\hat{Q}^k = J^{M_1,M_2}(\mathrm{\hat{Z}}^k_{\theta_{k-1}},\mathrm{Z}^k_{\theta_{k-1}}, \mathrm{V}^k_{\theta_{k-1}})$ using \cref{eq:J-definition} and $\srt^{m_1^k}$.
        \STATE update $\theta_k \leftarrow \Pi_\Theta\left(\theta_{k-1} - \alpha_k \hat{Q}_{k}\right)$.
    \ENDFOR
    \RETURN $\theta_n$
\end{algorithmic}
\end{algorithm}
\end{figure}

We now present error bounds on the gradient estimator $Q^{M_1,M_2}\left(\theta\right)$.

\begin{lemmma}\label{lemma:ubsr-grad-est-bound-offline}
    Suppose \cref{assumption:l-b1-increasing,as:variance-bound-only-l,assumption:F-bounded-higher-moment,as:variance-bound-l-F,assumption:loss-fn-smooth} and the assumptions of \Cref{theorem:ubsr-gradient} hold. Then, for every $\theta \in \Theta$ and every $M_1,M_2 \in \N$ satisfying $M_1+M_2=M$, the gradient estimator $Q^{M_1,M_2}$ defined in \cref{eq:Q-definition} satisfies 
    \begin{align*}
        \Exp \left[ \norm{Q^{M_1,M_2}\left(\theta\right) - \nabla \srth{}}_2 \right] &\leq \frac{D_3}{\sqrt{\max\{M_1,M_2\}}}, 
        \textrm{ and } \\
        \Exp \left [\norm{Q^{M_1,M_2}\left(\theta\right) - \nabla \srth{}}_2^2 \right] &\leq \frac{D_4}{\max\{M_1,M_2\}}.
    \end{align*}
    The constants $D_3$ and $D_4$ are independent of $M_1$ and $M_2$, and contain a factor of $\log(d)$, where $d$ is the dimensionality of the parameter space $\Theta$.
\end{lemmma}
The proof for the above lemma follows via application of \Cref{lemma:ubsr-grad-est-bound} by noting that the $M_1$ samples are independent from the $M_2$ samples. The constants $D_3$ and $D_4$ are similar to the constants $D_1$ and $D_2$ given in \Cref{lemma:ubsr-grad-est-bound}, and we avoid a separate derivation. Using the UBSR gradient estimator given in \cref{eq:Q-definition}, we propose \Cref{alg:ubsr-minimization-offline} to solve the offline UBSR minimization problem. With some trivial modifications, it is easy to see that the non-asymptotic convergence results for \Cref{alg:ubsr-minimization} stated in \Cref{theorem:sgd_convergence,theorem:sgd_convergence_convex,theorem:sgd_convergence_non_convex} can be easily extended to cover \Cref{alg:ubsr-minimization-offline}, and we omit separate proofs. 
As an aside, we remark that \Cref{alg:ubsr-minimization-offline} is of interest to risk-sensitive decision-making problems in machine learning and portfolio optimization applications in finance. For instance, in a portfolio optimization application, $\xi$ denotes a random returns vector and the objective is given by $F(\theta,\xi)=\theta^T\xi$.

% \subsection{OCE Risk Optimization}
% \label{sec:oce-opt}
% \input{inputs/oce_opt}

% \section{Proofs}
% \label{sec:proofs}
% \input{inputs/proofs-ubsr}

\section{Simulation Experiments}
\label{sec:experiments}

In this section, we demonstrate the performance of our algorithms for estimating and optimizing the UBSR measure across three experiments. In the first experiment, we estimate Value-at-Risk (VaR) using the UBSR-SB \cref{alg:saa_bisect} algorithm. In the second experiment, we estimate the entropic risk of a r.v. $X$, and solve a risk-sensitive optimization problem with entropic risk as the risk criterion. In the third experiment, we solve a portfolio optimization problem using historical data from three popular equity markets and compare the performance of several instances of the UBSR measure against popular benchmarks such as the Sharpe ratio and the equal-weighted portfolio. These experiments provide empirical support for the theoretical guarantees of the UBSR estimation and optimization algorithms that are proposed in this paper. 

\begin{figure}[ht]
    \centering
    \includegraphics[width=\textwidth]{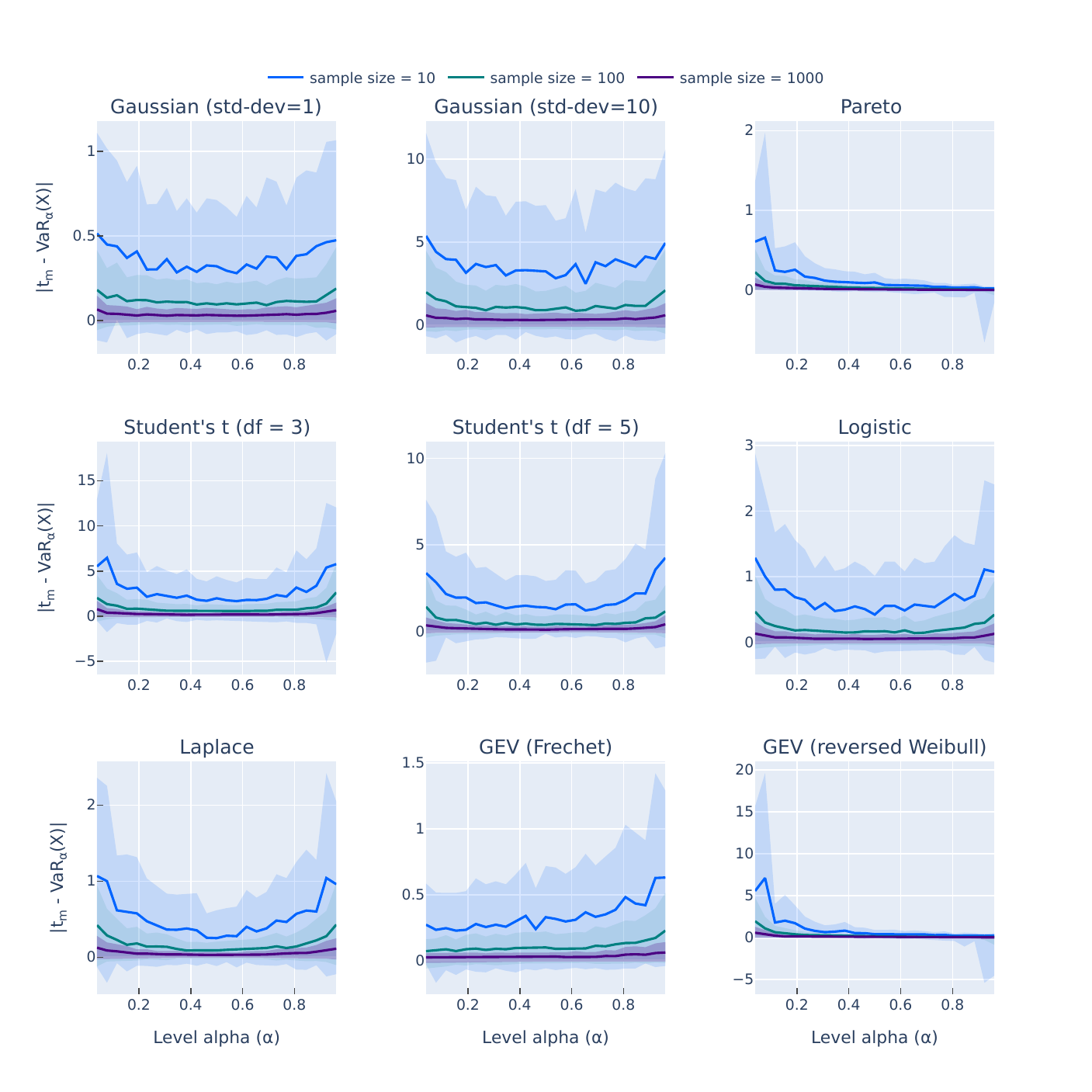}
    \caption{The figure shows the error distribution of the $m$-sample estimate $t_m$, given by algorithm UBSR-SB for different choices of $m$. The algorithm's performance is evaluated across different distributions, and the error statistics for each distribution are plotted separately. Each plot shows the error distribution across 25 uniformly chosen values of $\alpha$ in $(0,1)$ and three sample sizes: $ m=10, 100, 1000$. For each choice of $\alpha$ and $m$, we simulate the experiment $N=1000$ times and plot the error and its spread (standard error) by averaging across the $N$ simulations}
    \label{fig:ubsr-sb-distribution-VaR}
\end{figure}

\subsection{Estimation of Value-at-Risk (VaR)}
Given a r.v. $X$, we employ Algorithm \ref{alg:saa_bisect} for estimating $\textrm{VaR}_\alpha(X)$. We choose UBSR parameters $l(x) = \mathbf{1}_{\{x>0\}}, \forall x \in \Rel$ and $\lambda=\alpha$, and therefore $\ubsr$ coincides with $\textrm{VaR}_\alpha(X)$. In this experiment, we estimate $\textrm{VaR}_\alpha(X)$ for a variety of distributions of $X$, $25$ different values for level $\alpha$, and three choices for sample size $m$. We show separate plots for each choice of distribution of $X$. In each such plot, we show the mean and standard deviation of the errors for each combination of $\alpha$ and $m$. Each mean and std is obtained by averaging across $1000$ runs of the experiment. We plot the estimate of $\textrm{VaR}_\alpha(X)$ obtained by Algorithm \ref{alg:saa_bisect} in \Cref{fig:ubsr-sb-distribution-VaR}. From the figure above, one can conclude that the mean and variance of the estimation error converge to zero as $m$ increases.

\subsection{Estimation and Optimization of Entropic Risk.}
% Entropic risk is a special case of UBSR. 
% We consider the problem of estimation of entropic risk with case where the underlying distribution is Gaussian. 
% %Under a Gaussian assumption, a closed form expression for entropic risk is available. This makes entropic risk an ideal candidate to 
% We test the non-asymptotic performance of the estimator of entropic risk given by Algorithm \ref{alg:saa_bisect} (UBSR-SB).
% %(\ \cite{giesecke-risk-large-losses,dunkel_efficient_2007,dunkel2010stochastic,zhaolin2016ubsrest})
% Further, we also investigate the performance of Algorithm \ref{alg:ubsr-minimization} (UBSR-SG) for the problem of minimization of entropic risk.

 % From the expression for the entropic risk for the Gaussian case, we infer that the problem of entropic risk minimization is equivalent to mean-variance optimization. The minima of this mean-variance optimization problem is readily available, and therefore, non-asymptotic convergence of the iterates given by the UBSR-SG algorithm can be investigated.
 
\paragraph{Entropic risk estimation.}
For our experiment, we assume $X \sim \mathcal{N}(\mu, \sigma^2)$ with $\mu=-1$ and $\sigma^2 = 4$. Under this assumption, the value of the entropic risk measure $\rho_e(X)$ is given by
\begin{equation}\label{eq:entropic-risk}
    \rho_e(X) = \frac{1}{\beta} \log\left(\Exp\left[e^{-\beta X}\right] \right) = -\mu + \frac{\beta \sigma^2}{2},
\end{equation}
where $\beta>0$. We set $\beta=0.5$ in our experiments.
 
% \begin{figure}[ht]
%   %\begin{subfigure}{.5\textwidth}
%   \centering
%     \includegraphics[width=0.7\linewidth]{figures/entropic-mae-mse-ubsr.png}
%     \caption{Performance of UBSR-SB algorithm for estimation of entropic risk of a univariate Gaussian r.v. $X \sim \mathcal{N}(\mu,\sigma^2)$ using $m$ samples.}\label{figure:entropic-risk-ubsr-estimation}
%   %\end{subfigure}%
%   % \begin{subfigure}{.5\textwidth}
%   % \centering
%   %   \includegraphics[width=1\linewidth]{figures/entropic-error-pdf-ubsr.png}
%   %   \caption{Histogram of the estimation error}\label{figure:entropic-risk-ubsr-distribution}
%   % \end{subfigure}
%   % \caption{Performance of UBSR-SB algorithm for estimation of entropic risk of a univariate Gaussian r.v. $X \sim \mathcal{N}(\mu,\sigma^2)$ with $\mu=-1$ and $\sigma^2 = 4$. Given a sample size $m$, Algorithm \ref{alg:saa_bisect} (UBSR-SB) is run with $\delta = 1/\sqrt{m}$ and $m$ i.i.d. samples from $X$ to obtain the estimator $t_m$. For each choice of $m$, we repeat the simulation $N=1000$ times and compute the error mean and its spread (standard error) by averaging across the $N$ simulations.} %As the error vanishes rapidly with $m$, we plot \\$\sqrt{m}$ on the X-axis instead of $m$.
% \end{figure}

\begin{figure}[ht]
  \begin{tabular}{ccc}
  \begin{subfigure}{.5\textwidth}
  \centering
    \centering
    \includegraphics[width=\textwidth]{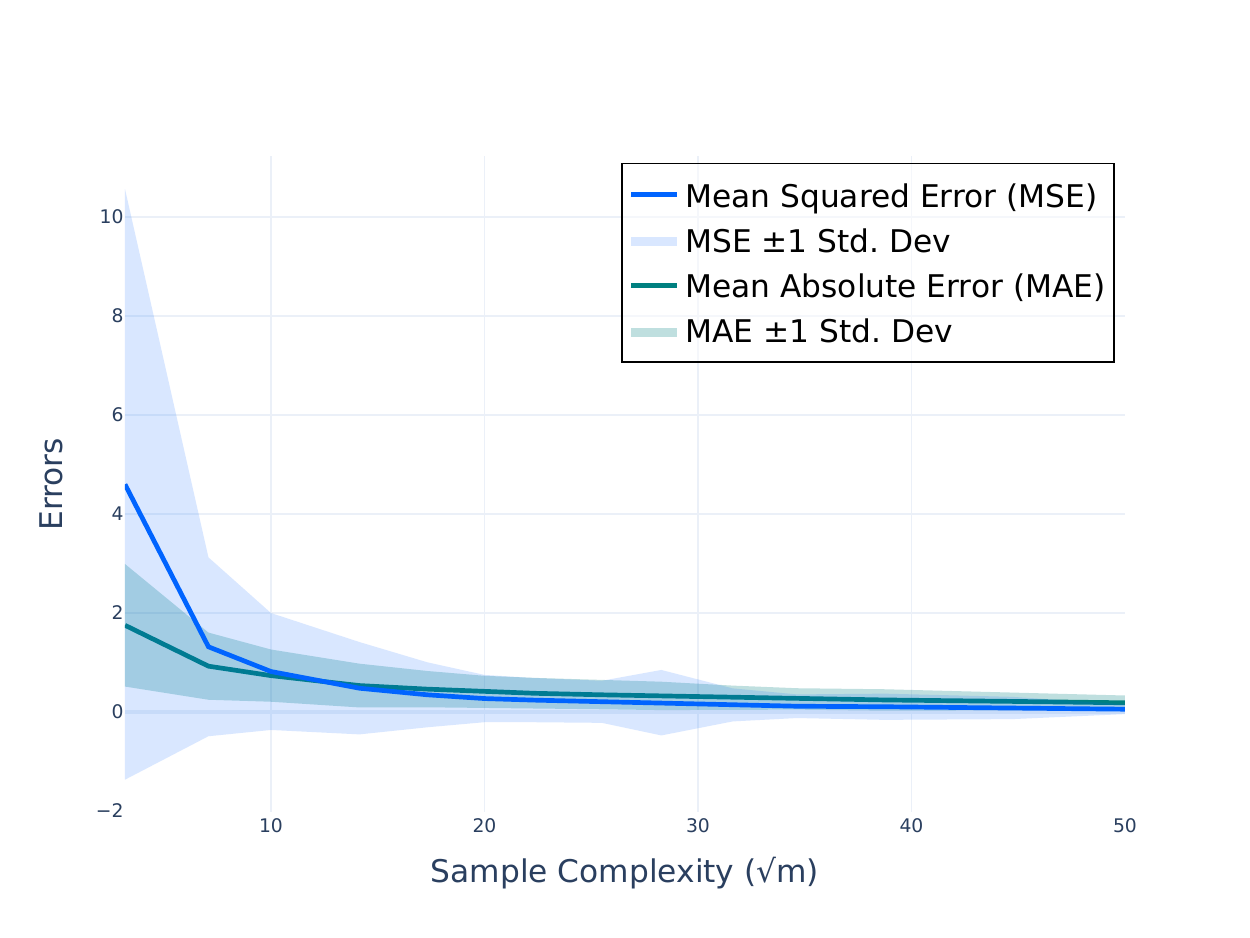}
    \caption{Performance of UBSR-SB algorithm for \\estimation of entropic risk of a univariate \\ Gaussian r.v. $X \sim \mathcal{N}(\mu,\sigma^2)$ using $m$ samples}\label{figure:entropic-risk-ubsr-estimation}
  \end{subfigure}
  \begin{subfigure}{.5\textwidth}
  \centering
    \includegraphics[width=\textwidth]{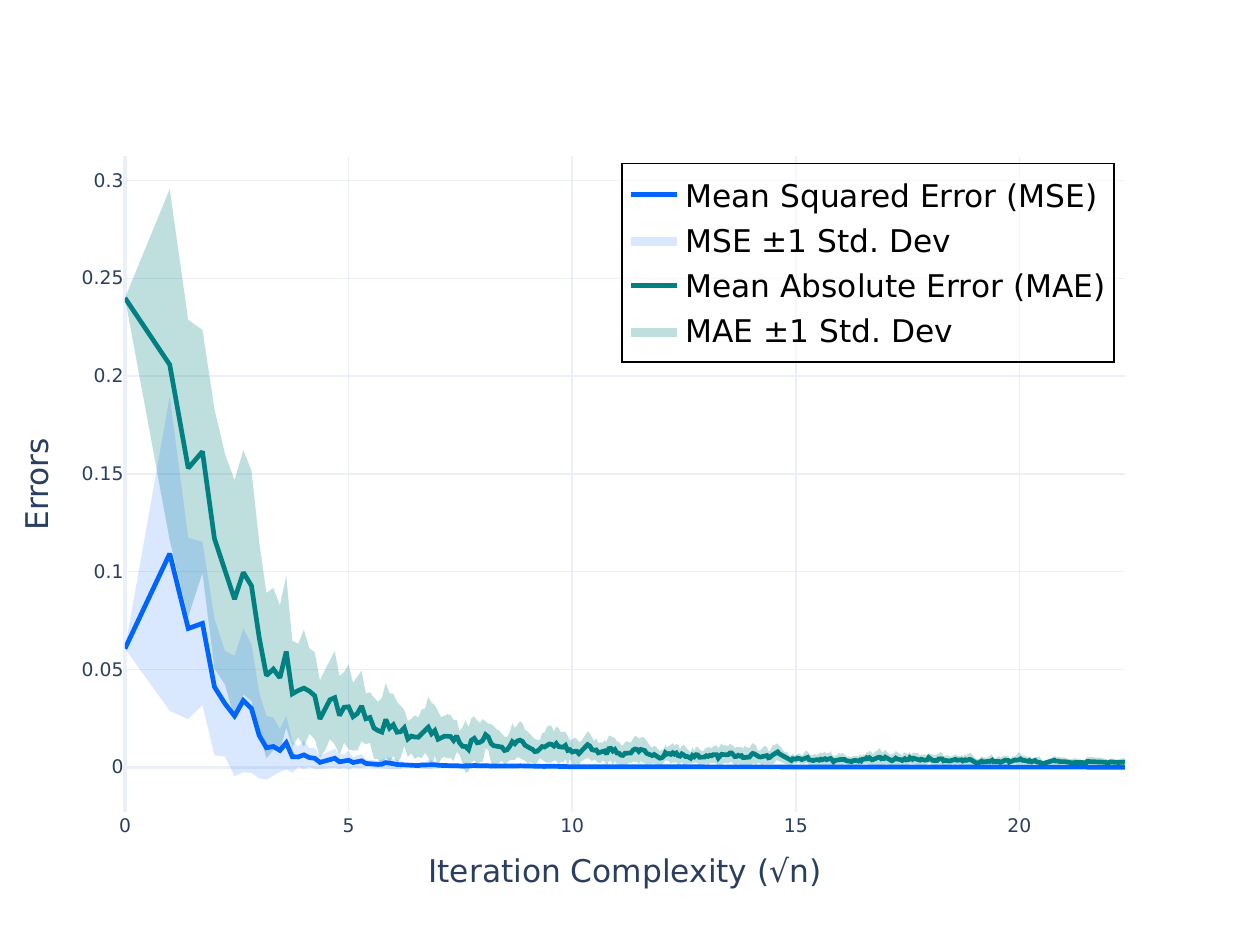}
    \caption{Performance of Algorithm \ref{alg:ubsr-minimization} \\(UBSR-SG) for minimizing entropic \\risk.}\label{figure:entropic-risk-ubsr-optimization}
  \end{subfigure}
  \end{tabular}
  \caption{Performance of our UBSR estimation and optimization algorithms for the case of entropic risk}
\end{figure}

For the choice of $l(x)=e^{\beta x}$ and $\lambda = 1$, $\ubsr$ coincides with the entropic risk in (\ref{eq:entropic-risk}). We choose $X \sim \mathcal{N}(\mu,\sigma^2)$ with $\mu=-1$ and $\sigma^2 = 4$. We employ Algorithm \ref{alg:saa_bisect} to estimate $\ubsr$ using $m$ samples of $X$. \Cref{figure:entropic-risk-ubsr-estimation} shows the plots of the MAE and MSE values of the estimation error vs $m$, computed across $1000$ simulations. The error bands around the MAE and MSE plots denote the standard deviation of the MAE and MSE errors respectively, for varying choices of sample size $m$. 

% \begin{figure}[ht]
%   \begin{subfigure}{.5\textwidth}
%   \centering
%     \includegraphics[width=1\linewidth]{figures/entropic-mae-mse-oce.png}
%     \caption{Estimation error as a function of number of samples}\label{figure:entropic-risk-oce-estimation}
%   \end{subfigure}%
%   \begin{subfigure}{.5\textwidth}
%   \centering
%     \includegraphics[width=1\linewidth]{figures/entropic-error-pdf-oce.png}
%     \caption{Histogram of the estimation error}\label{figure:entropic-risk-oce-distribution}
%   \end{subfigure}
%   \caption{The figure shows performance of OCE-SAA algorithm for estimation of entropic risk of a univariate Gaussian r.v. $X \sim \mathcal{N}(\mu,\sigma^2)$ with $\mu=-1$ and $\sigma^2 = 4$. Given a sample size $m$, Algorithm \ref{alg:oce_estimation} (OCE-SAA) is run with $\delta = 1/\sqrt{m}, \epsilon = 1$ and $m$ i.i.d. samples from $X$ to obtain the estimator $s_m$. For each choice of $m$, we repeat the simulation $N=1000$ times and compute the error mean and its spread (standard error) by averaging across the $N$ simulations.}
% \end{figure}

% For the choice of $u(x) = \frac{e^{\beta x} - 1}{\beta}$, $\oce$ coincides with the entropic risk in (\ref{eq:entropic-risk}). We employ Algorithm \ref{alg:oce_estimation} (OCE-SAA) to estimate $\oce$ using $m$ samples of $X$. The associated MAE and MSE bounds on the estimation error, for varying choices of sample size $m$ are given in \Cref{figure:entropic-risk-oce-estimation}. For $m=10,100$ and $1000$, we plot the histogram of the estimation error in \Cref{figure:entropic-risk-oce-distribution}. 

From the MAE and MSE error plots in \Cref{figure:entropic-risk-ubsr-estimation}, we note that our proposed estimators converge rapidly. Therefore we choose to plots these errors versus $\sqrt{m}$ instead of $m$, in order to make the error decrease discernible. %From the error distributions in the plots of \Cref{figure:entropic-risk-ubsr-distribution}, we conclude that the estimators $t_m$ and $s_m$ are asymptotically normal. 

\paragraph{Entropic risk minimization.}

We consider a portfolio optimization application with entropic risk as the objective.
In particular, we consider a $d$-dimensional random vector $\xi$, which follows a multivariate normal distribution with mean $\mu$ and covariance matrix $\Sigma$, which is positive-definite. The decision space $\Theta$ is a $d$-dimensional unit simplex. Given $\theta \in \Theta$, we are interested in optimizing the quantity $\theta^T\xi$ over $\Theta$. The quantity $\theta^T\xi$ is uncertain, owning to the randomness of $\xi$. Decision makers prefer a risk-sensitive criterion that caters to their preferences, instead of optimizing the mean $\theta^T \mu$, which is the classical risk-neutral approach.  

% \textbf{Mean-variance optimization.} A popular optimization criterion is the mean-variance objective, defined as follows. Let $\beta>0$, then the \textit{mean-variance optimization} problem is posed as
% \begin{equation}\label{eq:mean-variance-optimization}
%     \text{find }\; \theta^* \triangleq \argmin_{\theta \in \Theta} \left[-\theta^T \mu + \frac{\beta}{2} \theta^T \Sigma \theta\right]. 
% \end{equation}

For our experiments, we choose entropic risk as the optimization criterion. Consider the objective function defined as $F(\theta,\xi) \triangleq \theta^T\xi$. Since $\xi \sim \mathcal{N}(\mu,\Sigma)$, we have $F(\theta,\xi) \sim \mathcal{N}(\theta^T\mu, \theta^T\Sigma\mu), \forall \theta \in \Theta$. Using $\rho_e$ defined in (\ref{eq:entropic-risk}), we replace $X$ in (\ref{eq:entropic-risk}) with $F(\theta,\xi)$, and redefine the entropic risk as a function of $\theta$. Formally, we have $\rho_E:\Theta \to \Rel$, where $\rho_E(\theta) = \rho_e(F(\theta,\xi)), \forall \theta \in \Theta$. Then, by (\ref{eq:entropic-risk}), it follows that for every $\theta \in \Theta$,
\begin{equation}\label{eq:entropic-risk-theta}
    \rho_E(\theta) = -\theta^T\mu + \frac{\beta \theta^T\Sigma\theta}{2}. 
\end{equation}
From (\ref{eq:entropic-risk-theta}), it is easy to see that $\theta^*$ can be obtained using any standard numerical solver for a known $\mu$ and $\Sigma$.

\paragraph{Experiment setup.} 
In our setup, we set $d=5$. Using an arbitrary vector $\mu \in \Rel^d$ and arbitrary, positive-definite matrix $\Sigma \in \Rel^d \times \Rel^d$, we define $\xi \sim \mathcal{N}(\mu,\Sigma)$. The choices for $\mu,\Sigma$ are governed by a distribution underlying the \texttt{make\_spd\_matrix} function of \texttt{scikit-learn} python package. We employ the convex optimization solver in the \texttt{pyportfolioopt} Python package to obtain $\theta^*$.  

We test the performance of the \Cref{alg:ubsr-minimization} (UBSR-SG). We define $\srth{} \triangleq \sr{l}{\lambda}{F(\theta,\xi)}$ and choose $l(x) = e^{\beta x}$ and $\lambda = 1$. In this case, $\srt$ coincides with $\rho_E$ and therefore, $\theta^*$ is also the minima of $\srt(\cdot)$. We choose $\beta = 0.5$ and run the algorithm for 500 epochs. At every epoch $k$, we draw $2k$ samples from $\xi$ to construct the UBSR gradient estimator. We carry out the SG update with step size $1/\sqrt{k}$, where we project the iterate back to the $d$-dimensional simplex. To construct the gradient estimator at each epoch $k$, we form an estimator of the UBSR by running Algorithm \ref{alg:saa_bisect} (UBSR-SB) with the first $k$ samples and parameter $\delta = 1/\sqrt{k}$. \Cref{figure:entropic-risk-ubsr-optimization} shows the error plots for convergence of the iterates of \Cref{alg:ubsr-minimization} (UBSR-SG) to optima, i.e., $\theta_k \to \theta^*$.

% \begin{figure}[ht]
%   \centering
%     \includegraphics[width=0.7\linewidth]{figures/entropic-optimization-ubsr.png}
%     \caption{Performance of Algorithm \ref{alg:ubsr-minimization} (UBSR-SG) for minimizing entropic risk.}\label{figure:entropic-risk-ubsr-optimization}
%   % \caption{Performance of UBSR-SB algorithm for estimation of entropic risk of a univariate Gaussian r.v. $X = \mathcal{N}(\mu,\sigma^2)$ with $\mu=-1$ and $\sigma^2 = 4$. Given a sample size $m$, Algorithm \ref{alg:ubsr-minimization} (UBSR-SB) is run with $\delta = 1/\sqrt{m}, \epsilon = 1$ and $m$ i.i.d. samples from $X$ to obtain the estimator $s_m$. For each choice of $m$, we repeat the simulation $N=1000$ times and compute the error mean and its spread (standard error) by averaging across the $N$ simulations.}
% \end{figure}
\begin{figure*}[htbp]
  \centering

  % --- Subfigure 1 ---
  \begin{minipage}{0.48\linewidth}
    \centering
    \includegraphics[width=\linewidth]{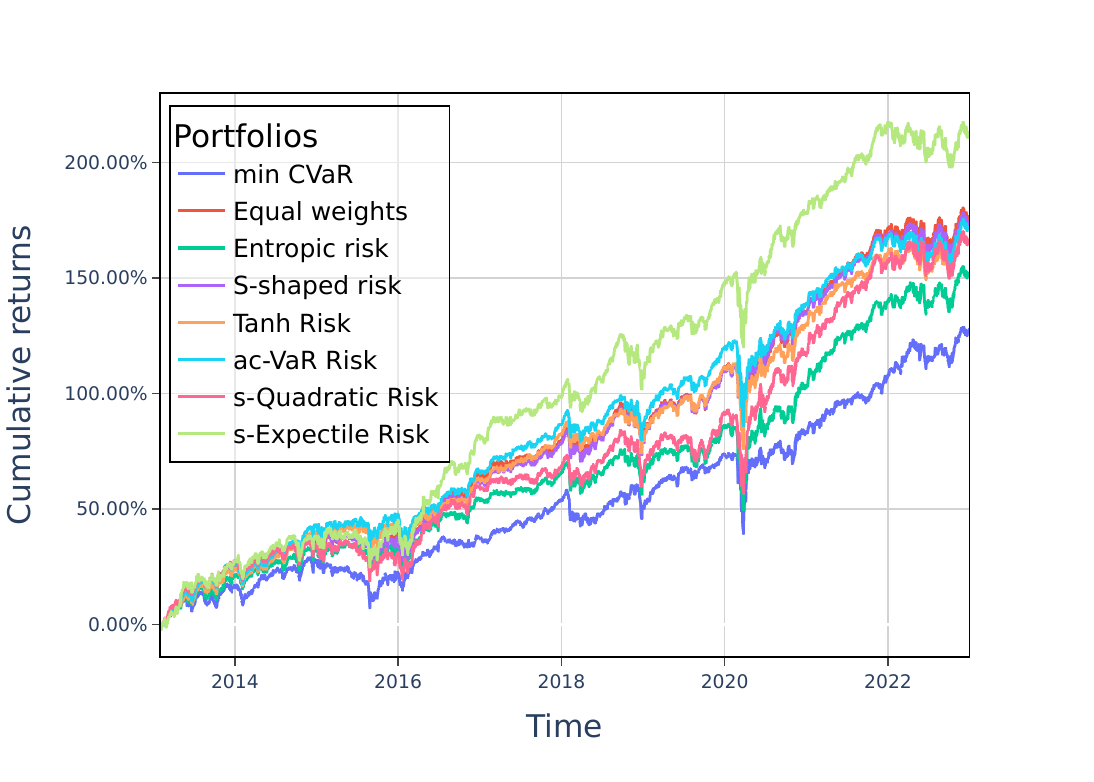}
    \caption{Cumulative portfolio returns for the S\&P market given by \Cref{alg:ubsr-minimization-offline}}
    \label{figure:ubsr-portfolio-returns-sp500}
  \end{minipage}%
\hfill
  % --- Subfigure 2 ---
    \begin{minipage}{0.48\linewidth}
    \centering
    \includegraphics[width=\linewidth]{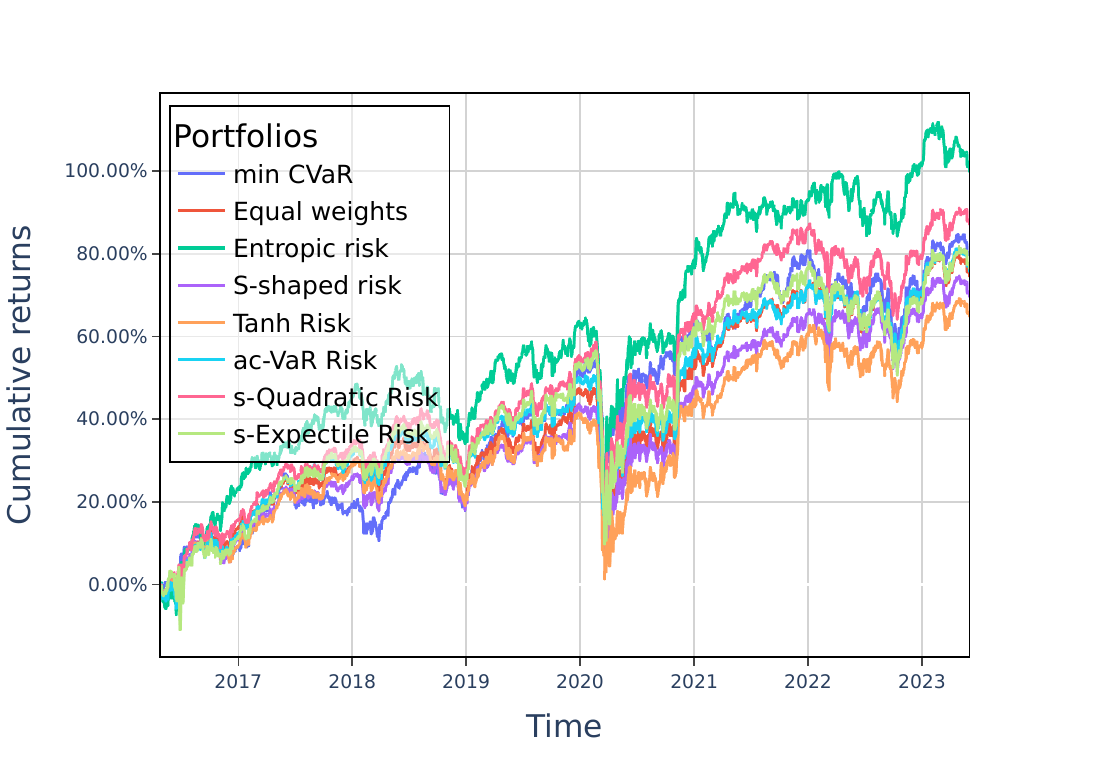}
    \caption{Cumulative portfolio returns for the FTSE market given by \Cref{alg:ubsr-minimization-offline} }
    \label{figure:ubsr-portfolio-returns-third}
  \end{minipage}

  \vfill

  % --- Subfigure 3 ---
  
    \begin{minipage}{0.48\linewidth}
    \centering
    \includegraphics[width=\linewidth]{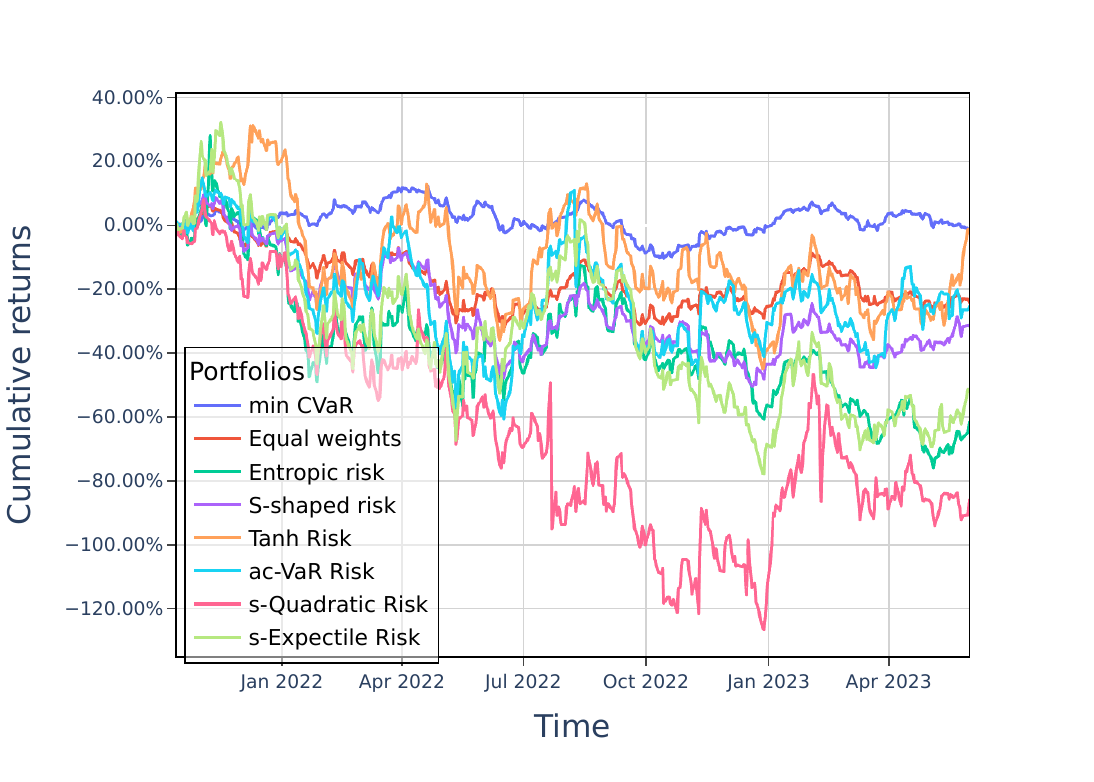}
    \caption{Cumulative portfolio returns for the NASDAQ market given by \Cref{alg:ubsr-minimization-offline} }
    \label{figure:ubsr-portfolio-returns-nasdaq}
  \end{minipage}

  % --- Main Caption ---
  \vspace{0.3cm}
  \caption{Performance of  \Cref{alg:ubsr-minimization-offline} (UBSR-SG OFFLINE) for a variety of UBSR risk measures in a portfolio optimization application}
  \label{figure:port-opt-ubsr}
\end{figure*}
% \begin{figure*}
%   \begin{tabular}{ccc}
%   \begin{subfigure}{.5\textwidth}
%   \centering
%     \includegraphics[width=1\linewidth, height=0.2\textheight]{figures/portfolio_returns_sp500.pdf}
%     \caption{Cumulative portfolio returns for \\S\&P given by UBSR-SG algorithm.}\label{figure:ubsr-portfolio-returns-sp500}
%   \end{subfigure}
%   &
%   \begin{subfigure}{.5\textwidth}
%   \centering
%     \includegraphics[width=1\linewidth, height=0.2\textheight]{figures/portfolio_returns_ftse.pdf}
%     \caption{Cumulative portfolio returns for \\ NASDAQ given by UBSR-SG algorithm.}\label{figure:ubsr-portfolio-returns-nasdaq}
%   \end{subfigure}
%   \end{tabular}
%   \caption{Performance of UBSR-SG algorithm for a variety of UBSR risk measures in a portfolio optimization application. The risk measures differ in the choice of loss functions and threshold $\lambda$, see \Cref{tab:ubsr_instances} for the details.}\label{figure:port-opt-ubsr}
% \end{figure*}
\subsection{Portfolio Optimization}
\label{sec:portfolio-expts}
Suppose we have a set of $d$ assets in a financial market. Let the random vector $\mathbf{\xi} \in \Rel^d$ denote asset-wise market returns. Given an asset allocation $\theta \in \Theta$, the random variable $F(\theta, \mathbf{\xi})\triangleq \xi^T\theta$ denotes portfolio returns. for a portfolio optimization (PO) setting. Our PO implementation is based on the \texttt{skfolio} python library. We test the performance of the UBSR-SG algorithm across three stock market datasets: 'Standard and Poor's 500 (S\&P500)', 'Financial Times Stock Exchange (FTSE100)', and 'Nasdaq'. The 'S\&P 500' dataset is composed of the daily prices of 20 assets from the 'S\&P 500' composition starting from 1990-01-02 up to 2022-12-28. The 'FTSE100' dataset contains the daily prices of 64 assets from the 'FTSE100' index, from 2000-01-04 to 2026-05-26. The 'Nasdaq' dataset contains the daily prices of 1455 assets from the 'Nasdaq' index, from 2018-01-02 to 2026-05-26.

We test the performance of the \Cref{alg:ubsr-minimization-offline} (UBSR-SG-OFFLINE) on three different choices for stock market data: 'Standard and Poor's 500 (S\&P500)', 'Financial Times Stock Exchange (FTSE100)', and 'Nasdaq'. For each of the above datasets, we run \Cref{alg:ubsr-minimization-offline} for a variety of UBSR loss functions and obtain the corresponding UBSR-optimal portfolios, including the popular choices like entropic risk, quadratic risk and the expectile risk. A comparison of the out-of-sample performance of the portfolios generated by these risk measures against two popular benchmarks: the equal-weighted portfolio and the minimum CVaR portfolio, is given in \Cref{figure:port-opt-ubsr}. These figures show that the portfolios generated by the UBSR-SG are either comparable to or outperform the benchmarks. Apart from the aforementioned popular special cases of UBSR, we also propose new instances of the UBSR by employing new choices for the loss function $l$. In \Cref{tab:ubsr_instances}, we provide the choice of loss function $l$ and risk threshold $\lambda$ that we use to obtain the portfolios given in \Cref{figure:port-opt-ubsr}. 

 % In particular, the Algorithm \ref{alg:ubsr-minimization} (UBSR-SG) is run for 10000 epochs, and at every epoch $k$, we draw $2k$ samples from the available stock market data to construct the UBSR gradient estimator. The $2k$ samples are obtained after infusing a zero-mean Gaussian noise that is proportional to the variance of the data. We carry out the SG update with step size $1/\sqrt{k}$, where we project the iterate back to the $d$-dimensional simplex. To construct the gradient estimator at each epoch $k$, we form an estimator of the UBSR by running Algorithm \ref{alg:saa_bisect} (UBSR-SB) with the first $k$ samples and parameter $\delta = 1/\sqrt{k}$.

\begin{table}
    \centering
    \caption{Choice of UBSR parameters for portfolio optimization.}\label{tab:ubsr_instances}
    \begin{tabular}{llr}
        \toprule
        Portfolio name  & Loss function $l$ & Risk Threshold $\lambda$\\
        \midrule
        Entropic risk & $l(x) = e^{\beta x}, \beta = 0.4$ & $1.$\\
        \hline
        S-shaped risk & $l(x) = \frac{2 x^2 tan^{-1}(x)}{\pi}$ & $0.$\\
        \hline
        Tanh risk & $l(x) = x^2 tanh(x)$ & $0.$\\
        \hline
        ac-VaR risk & $l(x) = \frac{tan^{-1}(\beta x)}{\pi}+0.5$ & $0.05$\\
        \hline
        s-Quadratic risk & $l(x) = \begin{cases}
                                    \alpha \; x \; ln(1+e^{x}) &\quad \text{if} x\geq 0\\
                                    \alpha \;ln(2) \;x &\quad \text{o.w.} 
                                    \end{cases}$ & $0.$\\
        \hline
        s-Expectile risk & $l(x) = \begin{cases}
                                    x \; (1+ \alpha \;tan^{-1}(x)) &\quad \text{if} x\geq 0\\
                                    (1-\alpha)x + \alpha \; tan^{-1}(x)  &\quad \text{o.w.}
                                    \end{cases}$ & $0.$\\
    \bottomrule
    \end{tabular}
\end{table}

% \begin{figure}[!ht]
%   \begin{subfigure}{.55\textwidth}
%   \centering
%     \includegraphics[width=1\linewidth]{figures/ubsr-composition-ftse.png}
%     \caption{Portfolio compositions of FTSE assets \\ given by the UBSR-SG algorithm.}\label{figure:ubsr-portfolio-composition-ftse}
%   \end{subfigure}%
%   \begin{subfigure}{.45\textwidth}
%   \centering
%     \includegraphics[width=1\linewidth]{figures/ubsr-returns-ftse.png}
%     \caption{Cumulative portfolio returns for FTSE \\ portfolios given by UBSR-SG algorithm.}\label{figure:ubsr-portfolio-returns-ftse}
%   \end{subfigure}
%   \caption{The figure shows the performance of UBSR-SG algorithm for a variety of UBSR risk measures in a portfolio optimization application sourced from the FTSE stock market data. The risk measures differ in the choice of loss functions and threshold $\lambda$. For each such choice, Algorithm \ref{alg:ubsr-minimization} (UBSR-SG) is run for 10000 epochs, and at every epoch $k$, we draw $2k$ samples from the available stock market data to construct the UBSR gradient estimator. The $2k$ samples are obtained after infusing a zero-mean Gaussian noise that is proportional to the variance of the data. We carry out the SG update with step size $1/\sqrt{k}$, where we project the iterate back to the $d$-dimensional simplex. To construct the gradient estimator at each epoch $k$, we form an estimator of the UBSR by running Algorithm \ref{alg:saa_bisect} (UBSR-SB) with the first $k$ samples and parameter $\delta = 1/\sqrt{k}$.}
% \end{figure}

\section{Conclusions}
\label{section:conclusions}
We laid the foundations for UBSR estimation and optimization for unbounded random variables.
We proposed and analyzed algorithms for UBSR estimation with provable non-asymptotic guarantees.
Next, for a parameterized class of distributions, we derived an expression for the gradient of UBSR. This expression led to a gradient estimation scheme using i.i.d. samples, which was subsequently used in stochastic gradient algorithms for UBSR optimization. We provided non-asymptotic error bounds that quantify the convergence of our algorithms to global optima under a strong convexity assumption and a general convexity assumption on the UBSR objective. For the non-convex UBSR objective, we establish first-order convergence under a smoothness assumption. 
Our contributions are appealing in financial applications, such as portfolio optimization and risk-sensitive online learning. We verified the former empirically, using real-world stock market datasets from the financial domain. We also conducted experiments to validate our estimation and optimization schemes using synthetic data for the entropic risk, a special case of UBSR.
 
 As future work, it would be interesting to develop Newton-based methods for UBSR optimization using the UBSR gradient expression that we have derived.

% \section*{Acknowledgment}

% The preferred spelling of the word ``acknowledgment'' in American English is 
% without an ``e'' after the ``g.'' Use the singular heading even if you have 
% many acknowledgments. Avoid expressions such as ``One of us (S.B.A.) would 
% like to thank $\ldots$ .'' Instead, write ``F. A. Author thanks $\ldots$ .'' In most 
% cases, sponsor and financial support acknowledgments are placed in the 
% unnumbered footnote on the first page, not here.

\begin{appendices}

\section{Minor technical results}
\label{supp:minor-results}

\subsection{Taking norm inside summation}
\begin{lemmma}\label{lemma-normed-sum}
    Given vectors $\{a_i\}_{i=1}^m$ in any normed space $\norm{\cdot}$, then for any $n\ge 1$ following holds:
    \begin{equation*}
         \norm{\sum_m a_i}^n \leq  m^{n-1} \sum_m \norm{a_i}^n.
    \end{equation*}
\end{lemmma}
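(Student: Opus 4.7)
The plan is to combine the triangle inequality with a standard power-mean (Jensen) inequality. First, by the triangle inequality for the norm, I would bound
\[
\norm{\sum_{i=1}^m a_i} \;\leq\; \sum_{i=1}^m \norm{a_i}.
\]
Since the right-hand side is nonnegative and the function $x \mapsto x^n$ is nondecreasing on $[0,\infty)$ for $n \geq 1$, raising both sides to the $n$-th power preserves the inequality, giving
\[
\norm{\sum_{i=1}^m a_i}^n \;\leq\; \Bigl(\sum_{i=1}^m \norm{a_i}\Bigr)^n.
\]

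The second step is to establish the power-mean bound
\[
\Bigl(\sum_{i=1}^m \norm{a_i}\Bigr)^n \;\leq\; m^{n-1}\sum_{i=1}^m \norm{a_i}^n,
\]
which I would derive by applying Jensen's inequality to the convex function $\phi(x) = x^n$ on $[0,\infty)$ (convex because $n\ge 1$). Explicitly,
\[
\Bigl(\frac{1}{m}\sum_{i=1}^m \norm{a_i}\Bigr)^n = \phi\!\Bigl(\frac{1}{m}\sum_{i=1}^m \norm{a_i}\Bigr) \leq \frac{1}{m}\sum_{i=1}^m \phi(\norm{a_i}) = \frac{1}{m}\sum_{i=1}^m \norm{a_i}^n,
\]
and multiplying both sides by $m^n$ yields the required bound. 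Chaining the two inequalities gives the claim. Alternatively, one can invoke Hölder's inequality with conjugate exponents $n$ and $n/(n-1)$ applied to the sequences $(\norm{a_i})_{i=1}^m$ and $(1,\dots,1)$ to obtain the same bound in one step.

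There is no substantive obstacle here; the only minor care needed is to note that $n$ is a real number at least $1$ (not necessarily an integer), so the justification of convexity of $x\mapsto x^n$ on $[0,\infty)$ should be stated, and the edge case $n=1$ should be checked, where the inequality degenerates to the triangle inequality with equality constant $m^0 = 1$.
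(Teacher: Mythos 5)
Your argument is correct and coincides with the paper's own proof: both apply the triangle (Minkowski) inequality to pull the norm inside the sum, then Jensen's inequality for the convex function $x \mapsto x^n$ on $[0,\infty)$ to obtain the factor $m^{n-1}$. The extra remarks on the Hölder alternative and the $n=1$ edge case are fine but not needed.
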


\begin{proof}
    \begin{align*}
        \norm{\sum_m a_i}^n \leq \left(\sum_m \norm{a_i}\right)^n 
        = m^n \left(\sum_m \frac{\norm{a_i}}{m}\right)^n 
        \leq m^n \frac{\sum_m \norm{a_i}^n}{m} 
        = m^{n-1} \sum_m \norm{a_i}^n
    \end{align*}
    Here, the first inequality is Minkowski's inequality, while the second inequality follows from Jensen's inequality applied to the function $x \mapsto x^n$, which is convex for $x \geq 0, n \ge 1$.
\end{proof}

\subsection{Taking norm inside expectation of product of two R.V.s.}
\begin{lemmma}\label{lemma:uvw}
    Let $p \in [1,\infty)$. Suppose $U$ is a real-valued random variable whose absolute value is bounded above by $w>0$ and let $\mathbf{V}$ be an $n$-dimensional random vector whose $p^{th}$ moment exists and is finite. Then, 
    \begin{equation*}
        \norm{\Exp[U\mathbf{V}]}_p \leq w \norm{\mathbf{V}}_{L_p}
    \end{equation*}
\end{lemmma}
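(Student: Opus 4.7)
The plan is to reduce the vector-valued bound to a coordinate-wise scalar bound and then apply Hölder's inequality. First, I would unpack the $p$-norm componentwise by writing
\[
\norm{\Exp[U\mathbf{V}]}_p^p \;=\; \sum_{i=1}^{n}\bigl|\Exp[U V_i]\bigr|^p,
\]
valid for $p \in [1,\infty)$; the case $p=\infty$ will be handled separately as a limit/max argument.

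Next, I would bound each scalar term $|\Exp[UV_i]|$ using Hölder's inequality with the conjugate pair $(p,q)$ satisfying $\tfrac{1}{p}+\tfrac{1}{q}=1$. Since $|U|\le w$ almost surely, we have $\norm{U}_{L_q}\le w$, giving
\[
\bigl|\Exp[UV_i]\bigr| \;\le\; \Exp\bigl[|U|\,|V_i|\bigr] \;\le\; \norm{U}_{L_q}\,\norm{V_i}_{L_p} \;\le\; w\,\norm{V_i}_{L_p}.
\]
Raising to the $p^{\text{th}}$ power, summing over $i$, and then interchanging the (finite) sum with the expectation via linearity gives
\[
\sum_{i=1}^{n}\bigl|\Exp[UV_i]\bigr|^p \;\le\; w^p \sum_{i=1}^n \Exp\bigl[|V_i|^p\bigr] \;=\; w^p\,\Exp\Bigl[\sum_{i=1}^n |V_i|^p\Bigr] \;=\; w^p\,\Exp\bigl[\norm{\mathbf{V}}_p^p\bigr].
\]
Taking the $p^{\text{th}}$ root of both sides yields exactly $\norm{\Exp[U\mathbf{V}]}_p \le w\,\norm{\mathbf{V}}_{L_p}$, using the definition of $\norm{\cdot}_{L_p}$ for random vectors from the preliminaries.

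For $p=\infty$, the same reasoning applies but more directly: I would take the componentwise maximum instead of the sum, obtaining $\norm{\Exp[U\mathbf{V}]}_\infty = \max_i |\Exp[UV_i]| \le w\max_i \Exp[|V_i|]$, and then bound $\Exp[|V_i|]\le \norm{V_i}_{L_\infty} \le \norm{\mathbf{V}}_{L_\infty}$ using the essential-supremum definition. I do not anticipate any real obstacle here; the main care point is being explicit about the finiteness of $\norm{\mathbf{V}}_{L_p}$ (given as a hypothesis), which is what makes the Hölder bound nontrivial and makes the sum-expectation swap legitimate via Fubini (or simply linearity, since $n$ is finite).
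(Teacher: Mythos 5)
Your proposal is correct and follows essentially the same route as the paper's proof: unpack the $p$-norm componentwise, bound each $|\Exp[UV_i]|$ by $w$ times an $L_p$ quantity of $V_i$, and reassemble via linearity of expectation over the finite sum. The only differences are cosmetic — you invoke H\"older at the scalar step where the paper uses $|U|\le w$ pointwise followed by Jensen, and you treat $p=\infty$ explicitly, which the paper's displayed computation omits.
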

\begin{proof}We have
    \begin{align*}
    \norm{\Exp[U\mathbf{V}]}_p = \Bigg( \sum_{i=1}^n \Big| \Exp\left[UV_i\right]  \Big|^p \Bigg)^\frac{1}{p} 
    \leq \Bigg( \sum_{i=1}^n \Exp\left[ | UV_i |\right]^p  \Bigg)^\frac{1}{p} &\leq w \Bigg( \sum_{i=1}^n \Exp\left[ | V_i|\right]^p \Bigg)^\frac{1}{p} \\
    &= w \Bigg( \Exp\left[ \sum_{i=1}^n | V_i|^p \right]\Bigg)^\frac{1}{p}
    = w \norm{\mathbf{V}}_{L_p},
\end{align*}
where the interchange of summation and expectation in the second equality follows because the $p^{th}$ moment of the random vector is finite.
\end{proof}
\begin{lemmma}\label{lemma:uv-2-norm}
    Suppose the random variable $U$ has bounded $2^{nd}$ moment: $\norm{U}_{\Leb{2}} \leq M < \infty$ and let $\mathbf{V}$ be an $n$-dimensional random vector with finite $2^{nd}$ moment. Then,
    \begin{align*}
        \norm{\Exp\left[U\mathbf{V}\right]}_2 \leq M \norm{\mathbf{V}}_{\Leb{2}}.
    \end{align*}
\end{lemmma}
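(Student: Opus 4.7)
The plan is to expand the left-hand side coordinate-wise using the definition of the Euclidean $2$-norm, then control each coordinate with the scalar Cauchy--Schwarz inequality, and finally reassemble the result. This parallels the structure of the proof of \Cref{lemma:uvw}, except that the uniform bound on $U$ is replaced by an $L_2$-bound, so Cauchy--Schwarz takes the place of the ``pull-out-the-constant'' step.

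Concretely, I would first write
\begin{align*}
\norm{\Exp[U\mathbf{V}]}_2^2 = \sum_{i=1}^n \bigl(\Exp[U V_i]\bigr)^2.
\end{align*}
Next, for each coordinate $i \in \{1,\ldots,n\}$, the scalar Cauchy--Schwarz inequality applied to the random variables $U$ and $V_i$ yields
\begin{align*}
\bigl(\Exp[U V_i]\bigr)^2 \leq \Exp[U^2]\,\Exp[V_i^2] \leq M^2\,\Exp[V_i^2],
\end{align*}
where the second inequality uses the hypothesis $\norm{U}_{\Leb 2}\leq M$. Note that $\Exp[V_i^2]$ is finite since $\mathbf{V}$ has a finite second moment, so no integrability issues arise.

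Summing over $i$ and interchanging sum and expectation (justified because each term is non-negative and the total is finite, so Tonelli applies) gives
\begin{align*}
\norm{\Exp[U\mathbf{V}]}_2^2 \leq M^2 \sum_{i=1}^n \Exp[V_i^2] = M^2 \,\Exp\!\left[\sum_{i=1}^n V_i^2\right] = M^2 \,\Exp\!\left[\norm{\mathbf{V}}_2^2\right] = M^2 \,\norm{\mathbf{V}}_{\Leb 2}^2.
\end{align*}
Taking square roots yields the claimed inequality. There is no real obstacle here; the only subtlety worth mentioning is verifying that the finite-second-moment hypothesis on $\mathbf{V}$ implies each $V_i$ has finite second moment, which is immediate from $|V_i|\leq \norm{\mathbf{V}}_2$. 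This is essentially a one-line application of Cauchy--Schwarz dressed up in vector notation.
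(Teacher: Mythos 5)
Your proof is correct and follows essentially the same route as the paper's: expand the Euclidean norm coordinate-wise, apply the scalar Cauchy--Schwarz inequality to each $\Exp[UV_i]$, invoke the $L_2$-bound on $U$, and interchange the finite sum with the expectation. Working with the squared norm rather than the norm itself is a cosmetic difference only.
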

\begin{proof}
    \begin{align*}
        &\norm{\Exp[U\mathbf{V}]}_2 \\
        &= \left( \sum_{i=1}^n \left| \Exp\left[UV_i\right]  \right|^2 \right)^\frac{1}{2} \leq \left( \sum_{i=1}^n \Exp\left[U^2\right]\Exp\left[V_i^2\right]  \right)^\frac{1}{2} \leq M \left( \Exp\left[\sum_{i=1}^n \left[V_i^2\right]\right]\right)^\frac{1}{2}
    = M \norm{\mathbf{V}}_{\Leb{2}},
    \end{align*}
    where the first inequality is the Cauchy-Schwartz inequality. The second inequality follows from the moment bound on $U$, and the interchange of expectation and summation for the second inequality follows because $\mathbf{V}$ has finite $2^{nd}$ moment.  
\end{proof}
\subsection{Bounding Variance of a Lipschitz Function}
\begin{lemmma}\label{lemma:lip-function-variance-bound}
    Let $f:\Rel \to \Rel$ be S-Lipschitz, and let $X$ be a real-valued random variable, with variance $\sigma^2$. Then, variance of $f(X)$ is bounded above by $S^2\sigma^2$. 
\end{lemmma}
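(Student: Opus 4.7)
The plan is to use the variational characterization of the variance as the minimum mean-squared deviation, combined with the Lipschitz property of $f$. First, I will recall that for any random variable $Z$ with finite second moment, the mean $\Exp[Z]$ is the unique minimizer of $c \mapsto \Exp[(Z-c)^2]$, which gives $\mathrm{Var}(Z) = \Exp[(Z-\Exp[Z])^2] \leq \Exp[(Z-c)^2]$ for every constant $c \in \Rel$. Applying this to $Z = f(X)$ with the specific choice $c = f(\Exp[X])$ yields
\begin{equation*}
\mathrm{Var}(f(X)) \leq \Exp\bigl[(f(X) - f(\Exp[X]))^2\bigr].
\end{equation*}

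Next, since $f$ is $S$-Lipschitz, $|f(X) - f(\Exp[X])| \leq S|X - \Exp[X]|$ holds almost surely, so squaring and taking expectations gives
\begin{equation*}
\Exp\bigl[(f(X) - f(\Exp[X]))^2\bigr] \leq S^2 \Exp\bigl[(X - \Exp[X])^2\bigr] = S^2 \sigma^2,
\end{equation*}
which chains to the desired bound $\mathrm{Var}(f(X)) \leq S^2 \sigma^2$. The argument requires $\Exp[X]$ to be finite, which is implicit in the assumption that $X$ has variance $\sigma^2$, and the finiteness of $\Exp[f(X)^2]$ follows from the Lipschitz bound $|f(X)| \leq |f(0)| + S|X|$ combined with the existence of the second moment of $X$.

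There is no genuine obstacle here; the only subtlety worth flagging is to avoid the tempting but slightly wasteful route of writing $\mathrm{Var}(f(X)) = \tfrac{1}{2}\Exp[(f(X) - f(Y))^2]$ with $Y$ an independent copy of $X$, which also works but requires introducing an auxiliary probability space. The approach above is both shorter and self-contained, which is why I would adopt it.
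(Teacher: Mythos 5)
Your proof is correct, but it takes a different route from the one in the paper. You bound $\mathrm{Var}(f(X))$ by $\Exp[(f(X)-c)^2]$ with the specific centering $c=f(\Exp[X])$, using the fact that the mean minimizes the mean-squared deviation, and then apply the Lipschitz bound pointwise to $|f(X)-f(\Exp[X])|$. The paper instead uses exactly the symmetrization route you flag as ``slightly wasteful'': it introduces an independent identical copy $Y$ of $X$, derives the identity $2\,\mathrm{Var}(f(X)) = \Exp_X[\Exp_Y[(f(X)-f(Y))^2]]$, bounds the right-hand side by $S^2\Exp_X[\Exp_Y[(X-Y)^2]] = S^2(\mathrm{Var}(X)+\mathrm{Var}(Y))$, and divides by two. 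Both arguments yield the same constant $S^2\sigma^2$. Your version is shorter and avoids the auxiliary copy; its one extra ingredient is that $f$ must be evaluable at the deterministic point $\Exp[X]$, which is harmless here since $f$ is defined on all of $\Rel$. The paper's symmetrized version only ever evaluates $f$ at (copies of) $X$ itself and never references $\Exp[X]$, which is the kind of argument that survives in settings where a ``mean point'' in the domain is unavailable; for this lemma as stated, that generality buys nothing, and your proof is a perfectly good, self-contained alternative.
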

\begin{proof}Let $Y$ be an identical copy of $X$. Then we have
    \begin{align*}
        \textrm{Var}\left(f(X)\right) &= \Exp\left[f(X)^2\right] - \left(\Exp\left[f(X)\right]\right)^2 \\
        &= \Exp\left[f(X)^2\right] - 2\left(\Exp\left[f(X)\right]\right)^2 + \Exp\left[f(Y)^2\right] - \Exp\left[f(Y)^2\right] + \left(\Exp\left[f(X)\right]\right)^2 \\
        &= \Exp_X\left[\Exp_Y\left[\left|f(X) - f(Y)\right|^2 \right]\right] - \textrm{Var}\left(f(Y)\right),
    \end{align*}
    where the last equality follows by change of variables. Rearranging the above equality and using the Lipschitz assumption on $f$, we have
    \begin{align*}
        2\textrm{Var}\left(f(X)\right) \leq S^2 \Exp_X\left[\Exp_Y\left[(X -Y)^2 \right]\right] &= S^2 \Exp_X\left[X^2 - 2X \Exp[Y] + \Exp\left[Y^2\right]\right] \\
        &= S^2 \left[\text{Var}(X)+\text{Var}(Y)\right].
    \end{align*}
     Since $X$ and $Y$ are identical, the claim of the lemma follows.
\end{proof}

\end{appendices}

% \appendix
% \section{Proofs}
% \label{sec:proofs}
% \input{inputs/proofs-ubsr}

\bibliographystyle{plainnat}
\bibliography{references}

\begin{thebibliography}{52}
\providecommand{\natexlab}[1]{#1}
\providecommand{\url}[1]{\texttt{#1}}
\expandafter\ifx\csname urlstyle\endcsname\relax
  \providecommand{\doi}[1]{doi: #1}\else
  \providecommand{\doi}{doi: \begingroup \urlstyle{rm}\Url}\fi

\bibitem[Acerbi and Tasche(2002)]{ACERBI20021487}
C.~Acerbi and D.~Tasche.
\newblock On the coherence of expected shortfall.
\newblock \emph{Journal of Banking \& Finance}, 26\penalty0 (7):\penalty0 1487--1503, 2002.
\newblock ISSN 0378-4266.

\bibitem[Apostol(1974)]{Apostol}
Tom~M. Apostol.
\newblock \emph{Mathematical Analysis}.
\newblock Addison-Wesley series in mathematics. Addison-Wesley, Reading, MA, 2nd edition, 1974.

\bibitem[Artzner et~al.(1999)Artzner, Delbaen, Eber, and Heath]{artzner-coherent-risk-measures}
P.~Artzner, F.~Delbaen, J.~Eber, and D.~Heath.
\newblock Coherent measures of risk.
\newblock \emph{Mathematical Finance}, 9\penalty0 (3):\penalty0 203--228, 1999.

\bibitem[Asi et~al.(2021)Asi, Carmon, Jambulapati, Jin, and Sidford]{stoc-bias-reduced-gradient-neurips}
H.~Asi, Y.~Carmon, A.~Jambulapati, Y.~Jin, and A.~Sidford.
\newblock Stochastic bias-reduced gradient methods.
\newblock In \emph{Advances in Neural Information Processing Systems}, volume~34, pages 10810--10822, 2021.

\bibitem[Basak and Shapiro(2015)]{basak-shapiro-var}
S.~Basak and A.~Shapiro.
\newblock {Value-at-Risk-Based Risk Management: Optimal Policies and Asset Prices}.
\newblock \emph{The Review of Financial Studies}, 14\penalty0 (2):\penalty0 371--405, 06 2015.
\newblock ISSN 0893-9454.

\bibitem[Bellini and Bignozzi(2015)]{Bellini_2015}
Fabio Bellini and Valeria Bignozzi.
\newblock On elicitable risk measures.
\newblock \emph{Quantitative Finance}, 15\penalty0 (5):\penalty0 725--733, 2015.

\bibitem[Ben-Tal and Teboulle(2007)]{ben-tal_old-new_2007}
Aharon Ben-Tal and Marc Teboulle.
\newblock An {Old}-{New} {Concept} of {Convex} {Risk} {Measures}: {The} {Optimized} {Certainty} {Equivalent}.
\newblock \emph{Mathematical Finance}, 17\penalty0 (3):\penalty0 449--476, 2007.
\newblock ISSN 1467-9965.

\bibitem[Bhavsar and Prashanth(2023)]{bhavsar2021nonasymptotic}
N.~Bhavsar and L.~A. Prashanth.
\newblock Nonasymptotic bounds for stochastic optimization with biased noisy gradient oracles.
\newblock \emph{IEEE Transactions on Automatic Control}, 68\penalty0 (3):\penalty0 1628--1641, 2023.

\bibitem[Bottou et~al.(2018)Bottou, Curtis, and Nocedal]{bottou2018optimization}
L.~Bottou, F.~E. Curtis, and J.~Nocedal.
\newblock Optimization methods for large-scale machine learning.
\newblock \emph{Siam Review}, 60\penalty0 (2):\penalty0 223--311, 2018.

\bibitem[Chen et~al.(2021)Chen, Sun, and Yin]{chen_closing_2021}
Tianyi Chen, Yuejiao Sun, and Wotao Yin.
\newblock Closing the {Gap}: {Tighter} {Analysis} of {Alternating} {Stochastic} {Gradient} {Methods} for {Bilevel} {Problems}.
\newblock In M.~Ranzato, A.~Beygelzimer, Y.~Dauphin, P.~S. Liang, and J.~Wortman Vaughan, editors, \emph{Advances in {Neural} {Information} {Processing} {Systems}}, volume~34, pages 25294--25307. Curran Associates, Inc., 2021.

\bibitem[Daouia et~al.(2024)Daouia, Stupfler, and Usseglio-Carleve]{daouia_expectile_2024}
Abdelaati Daouia, Gilles Stupfler, and Antoine Usseglio-Carleve.
\newblock An expectile computation cookbook.
\newblock \emph{Stat Comput}, 34\penalty0 (3):\penalty0 103, March 2024.
\newblock ISSN 1573-1375.

\bibitem[Delage et~al.(2022)Delage, Guo, and Xu]{delage_shortfall_2022}
Erick Delage, Shaoyan Guo, and Huifu Xu.
\newblock Shortfall {Risk} {Models} {When} {Information} on {Loss} {Function} {Is} {Incomplete}.
\newblock \emph{Operations Research}, 70\penalty0 (6):\penalty0 3511--3518, November 2022.
\newblock ISSN 0030-364X.
\newblock \doi{10.1287/opre.2021.2212}.

\bibitem[Delbaen et~al.(2016)Delbaen, Bellini, Bignozzi, and Ziegel]{delbaen_risk_2016}
Freddy Delbaen, Fabio Bellini, Valeria Bignozzi, and Johanna~F. Ziegel.
\newblock Risk measures with the {CxLS} property.
\newblock \emph{Finance and Stochastics}, 20\penalty0 (2):\penalty0 433--453, April 2016.
\newblock ISSN 1432-1122.
\newblock \doi{10.1007/s00780-015-0279-6}.

\bibitem[Devolder(2011)]{devolder_stochastic_2011}
Olivier Devolder.
\newblock {Stochastic first order methods in smooth convex optimization}.
\newblock Lidam discussion papers core, Université catholique de Louvain, Center for Operations Research and Econometrics (CORE), December 2011.

\bibitem[Duchi et~al.(2012)Duchi, Jordan, Wainwright, and Wibisono]{duchi_finite_2012}
John~C. Duchi, Michael~I. Jordan, Martin~J. Wainwright, and Andre Wibisono.
\newblock Finite sample convergence rates of zero-order stochastic optimization methods.
\newblock In \emph{Proceedings of the 25th {International} {Conference} on {Neural} {Information} {Processing} {Systems}}, pages 1439--1447, 2012.

\bibitem[Dunkel and Weber(2010)]{dunkel2010stochastic}
J.~Dunkel and S.~Weber.
\newblock Stochastic root finding and efficient estimation of convex risk measures.
\newblock \emph{Operations Research}, 58\penalty0 (5):\penalty0 1505--1521, 2010.

\bibitem[Dunkel and Weber(2007)]{dunkel_efficient_2007}
Jorn Dunkel and Stefan Weber.
\newblock Efficient {Monte} {Carlo} methods for convex risk measures in portfolio credit risk models.
\newblock In \emph{{Winter} {Simulation} {Conference}}, pages 958--966, December 2007.

\bibitem[Durrett(2019)]{durrett_2019}
R.~Durrett.
\newblock \emph{Probability: Theory and Examples}.
\newblock Cambridge University Press, Cambridge, 5 edition, 2019.

\bibitem[F{\"o}llmer and Schied(2016)]{follmer2016stochastic}
H.~F{\"o}llmer and A.~Schied.
\newblock \emph{Stochastic finance}.
\newblock de Gruyter, 2016.

\bibitem[Föllmer and Schied(2004)]{FollmerSchied2004}
H.~Föllmer and A.~Schied.
\newblock \emph{Stochastic Finance: An Introduction in Discrete Time}.
\newblock De Gruyter, Berlin, Boston, 2004.
\newblock ISBN 9783110212075.

\bibitem[Föllmer and Schied(2002)]{follmer_convex_2002}
Hans Föllmer and Alexander Schied.
\newblock Convex measures of risk and trading constraints.
\newblock \emph{Finance and Stochastics}, 6\penalty0 (4):\penalty0 429--447, October 2002.
\newblock ISSN 09492984.

\bibitem[Föllmer and Weber(2015)]{follmer_axiomatic_2015}
Hans Föllmer and Stefan Weber.
\newblock The {Axiomatic} {Approach} to {Risk} {Measures} for {Capital} {Determination}.
\newblock \emph{Annual Review of Financial Economics}, 7\penalty0 (1):\penalty0 301--337, 2015.

\bibitem[Giesecke et~al.(2008)Giesecke, Schmidt, and Weber]{giesecke-risk-large-losses}
K.~Giesecke, T.~Schmidt, and S.~Weber.
\newblock Measuring the risk of large losses.
\newblock \emph{Journal of Investment Management}, 6, 01 2008.

\bibitem[Guo and Xu(2019)]{guo_distributionally_2019}
Shaoyan Guo and Huifu Xu.
\newblock Distributionally robust shortfall risk optimization model and its approximation.
\newblock \emph{Math. Program.}, 174\penalty0 (1):\penalty0 473--498, March 2019.
\newblock ISSN 1436-4646.

\bibitem[Gupte et~al.(2024)Gupte, Prashanth, and Bhat]{gupte2023optimization}
Sumedh Gupte, L.~A Prashanth, and Sanjay~P. Bhat.
\newblock Optimization of utility-based shortfall risk: A non-asymptotic viewpoint.
\newblock In \emph{2024 IEEE 63rd Conference on Decision and Control (CDC)}, pages 1075--1080, 2024.

\bibitem[Hegde et~al.(2024)Hegde, Menon, Prashanth, and Jagannathan]{Hegde2024}
Vishwajit Hegde, Arvind~Satish Menon, L.A. Prashanth, and Krishna Jagannathan.
\newblock Online estimation and optimization of utility-based shortfall risk.
\newblock \emph{Mathematics of Operations Research}, sep 2024.
\newblock ISSN 1526-5471.

\bibitem[Horn and Johnson(2012)]{horn_matrix_2012}
Roger~A. Horn and Charles~R. Johnson.
\newblock \emph{Matrix analysis}.
\newblock Cambridge University Press, Cambridge ; New York, 2nd ed edition, 2012.
\newblock ISBN 978-0-521-83940-2.

\bibitem[Hu et~al.(2020)Hu, Zhang, Chen, and He]{hu_biased_2020}
Yifan Hu, Siqi Zhang, Xin Chen, and Niao He.
\newblock Biased stochastic first-order methods for conditional stochastic optimization and applications in meta learning.
\newblock \emph{Advances in Neural Information Processing Systems}, 33:\penalty0 2759--2770, 2020.

\bibitem[Hu et~al.(2021)Hu, Chen, and He]{hu_bias-variance-cost_2021}
Yifan Hu, Xin Chen, and Niao He.
\newblock On the bias-variance-cost tradeoff of stochastic optimization.
\newblock \emph{Advances in Neural Information Processing Systems}, 34:\penalty0 22119--22131, 2021.

\bibitem[Hu and Zhang(2018)]{zhaolin2016ubsrest}
Z.~Hu and D.~Zhang.
\newblock Utility-based shortfall risk: Efficient computations via monte carlo.
\newblock \emph{Naval Research Logistics (NRL)}, 65\penalty0 (5):\penalty0 378--392, 2018.

\bibitem[Jorion(1997)]{jorion1997value}
P.~Jorion.
\newblock \emph{Value at Risk: The New Benchmark for Controlling Market Risk}.
\newblock McGraw-Hill, New York, 1997.
\newblock ISBN 9780786308484.

\bibitem[Kaina and Rüschendorf(2009)]{kaina_convex_2009}
M.~Kaina and L.~Rüschendorf.
\newblock On convex risk measures on {Lp}-spaces.
\newblock \emph{Math Meth Oper Res}, 69\penalty0 (3):\penalty0 475--495, July 2009.
\newblock ISSN 1432-5217.

\bibitem[Karimi et~al.(2019)Karimi, Miasojedow, Moulines, and Wai]{karimi2019non}
B.~Karimi, B.~Miasojedow, E.~Moulines, and H.~Wai.
\newblock Non-asymptotic analysis of biased stochastic approximation scheme.
\newblock In \emph{Conference on Learning Theory}, pages 1944--1974. PMLR, 2019.

\bibitem[Kleywegt et~al.(2002)Kleywegt, Shapiro, and de~Mello]{saa-intro-shapiro-alexander}
A.~J. Kleywegt, A.~Shapiro, and T.~Homem de~Mello.
\newblock The sample average approximation method for stochastic discrete optimization.
\newblock \emph{SIAM Journal on Optimization}, 12\penalty0 (2):\penalty0 479--502, 2002.

\bibitem[Kusuoka(2001)]{Kusuoka2001}
Shigeo Kusuoka.
\newblock \emph{On law invariant coherent risk measures}, pages 83--95.
\newblock Springer Japan, Tokyo, 2001.
\newblock ISBN 978-4-431-67891-5.

\bibitem[Moulines and Bach(2011)]{moulines2011non}
E.~Moulines and F.~Bach.
\newblock Non-asymptotic analysis of stochastic approximation algorithms for machine learning.
\newblock \emph{Advances in neural information processing systems}, 24:\penalty0 451--459, 2011.

\bibitem[Nemirovski et~al.(2009)Nemirovski, Juditsky, Lan, and Shapiro]{robust-stoc-appx-nemirovski-shapiro}
A.~Nemirovski, A.~Juditsky, G.~Lan, and A.~Shapiro.
\newblock Robust stochastic approximation approach to stochastic programming.
\newblock \emph{SIAM Journal on Optimization}, 19\penalty0 (4):\penalty0 1574--1609, 2009.

\bibitem[Nesterov(2004)]{nesterov_introductory_2004}
Yurii Nesterov.
\newblock \emph{Introductory {Lectures} on {Convex} {Optimization}}, volume~87 of \emph{Applied {Optimization}}.
\newblock Springer US, Boston, MA, 2004.
\newblock ISBN 978-1-4613-4691-3 978-1-4419-8853-9.

\bibitem[Newey and Powell(1987)]{expectile_newey_asymmetric_1987}
Whitney~K. Newey and James~L. Powell.
\newblock Asymmetric {Least} {Squares} {Estimation} and {Testing}.
\newblock \emph{Econometrica}, 55\penalty0 (4):\penalty0 819, July 1987.
\newblock ISSN 00129682.

\bibitem[Pasupathy et~al.(2018)Pasupathy, Glynn, Ghosh, and Hashemi]{pasupathy_sampling_2018}
R.~Pasupathy, P.~Glynn, S.~Ghosh, and F.~S. Hashemi.
\newblock On sampling rates in simulation-based recursions.
\newblock \emph{SIAM Journal on Optimization}, 28\penalty0 (1):\penalty0 45--73, 2018.
\newblock Publisher: SIAM.

\bibitem[Philipps(2022)]{philipps_interpreting}
Collin Philipps.
\newblock {Interpreting Expectiles}.
\newblock Working Papers 2022-01, Department of Economics and Geosciences, US Air Force Academy, January 2022.

\bibitem[Prashanth and Bhat(2022)]{JMLR-LAP-SPB}
L.A. Prashanth and S.~P. Bhat.
\newblock {A Wasserstein Distance Approach for Concentration of Empirical Risk Estimates}.
\newblock \emph{Journal of Machine Learning Research}, 23\penalty0 (238):\penalty0 1--61, 2022.

\bibitem[Rockafellar and Uryasev(2000)]{rockafellar2000optimization}
R.~T. Rockafellar and S.~Uryasev.
\newblock {Optimization of conditional value-at-risk}.
\newblock \emph{Journal of risk}, 2:\penalty0 21--42, 2000.

\bibitem[Rudin(1953)]{real-analysis-rudin}
Walter Rudin.
\newblock \emph{Principles of {M}athematical {A}nalysis}.
\newblock McGraw-Hill, New York, 1953.

\bibitem[Shapiro et~al.(2021)Shapiro, Dentcheva, and Ruszczynski]{shapiro_book}
Alexander Shapiro, Darinka Dentcheva, and Andrzej Ruszczynski.
\newblock \emph{Lectures on Stochastic Programming: Modeling and Theory, Third Edition}.
\newblock Society for Industrial and Applied Mathematics, Philadelphia, PA, 2021.

\bibitem[Shreve(2004)]{shreve_stochastic_2004}
S.E. Shreve.
\newblock \emph{Stochastic {Calculus} for {Finance} {II}: {Continuous}-{Time} {Models}}.
\newblock Number v. 11 in Springer {Finance} {Textbooks}. Springer, New York, 2004.
\newblock ISBN 978-0-387-40101-0.

\bibitem[Tropp(2016)]{tropp-springer-2016}
Joel~A. Tropp.
\newblock The expected norm of a sum of independent random matrices: An elementary approach.
\newblock In Christian Houdr{\'e}, David~M. Mason, Patricia Reynaud-Bouret, and Jan Rosi{\'{n}}ski, editors, \emph{High Dimensional Probability VII}, pages 173--202, Cham, 2016. Springer International Publishing.
\newblock ISBN 978-3-319-40519-3.

\bibitem[Uryasev and Rockafellar(2001)]{Uryasev2001}
S.~Uryasev and R.T. Rockafellar.
\newblock \emph{Conditional Value-at-Risk: Optimization Approach}, pages 411--435.
\newblock Springer US, Boston, MA, 2001.
\newblock ISBN 978-1-4757-6594-6.

\bibitem[Weber(2006)]{weber-2006-distribution-invariant}
S.~Weber.
\newblock Distribution-invariant risk measures, information, and dynamic consistency.
\newblock \emph{Mathematical Finance}, 16\penalty0 (2):\penalty0 419--441, 2006.

\bibitem[Yurdem et~al.(2024)Yurdem, Kuzlu, Gullu, Catak, and Tabassum]{federated_learning}
Betul Yurdem, Murat Kuzlu, Mehmet~Kemal Gullu, Ferhat~Ozgur Catak, and Maliha Tabassum.
\newblock Federated learning: Overview, strategies, applications, tools and future directions.
\newblock \emph{Heliyon}, 10\penalty0 (19), 2024.
\newblock ISSN 2405-8440.

\bibitem[Zhang et~al.(2022)Zhang, Xu, and Wang]{zhang_preference_2022}
Yuan Zhang, Huifu Xu, and Wei Wang.
\newblock Preference robust models in multivariate utility-based shortfall risk minimization.
\newblock \emph{Optimization Methods and Software}, 37\penalty0 (2):\penalty0 712--752, March 2022.
\newblock ISSN 1055-6788.

\bibitem[Ziegel(2016)]{ziegel_coherence_2016}
Johanna~F. Ziegel.
\newblock {Coherence} {And} {Elicitability}.
\newblock \emph{Mathematical Finance}, 26\penalty0 (4):\penalty0 901--918, October 2016.
\newblock ISSN 0960-1627, 1467-9965.

\end{thebibliography}

\end{document}